\definecolor{darkred}  {rgb}{0.5,0,0}
\definecolor{darkblue} {rgb}{0,0,0.5}
\definecolor{darkgreen}{rgb}{0,0.5,0}
\crefname{lemma}{lemma}{lemmas}
\crefname{proposition}{proposition}{propositions}
\crefname{definition}{definition}{definitions}
\crefname{theorem}{theorem}{theorems}
\crefname{conjecture}{conjecture}{conjectures}
\crefname{corollary}{corollary}{corollaries}
\crefname{example}{example}{examples}
\crefname{section}{section}{sections}
\crefname{appendix}{appendix}{appendices}
\crefname{figure}{fig.}{figs.}
\crefname{equation}{eq.}{eqs.}
\crefname{table}{table}{tables}
\crefname{item}{property}{properties}
\crefname{remark}{remark}{remarks}
\newtheorem{theorem}{Theorem}
\newtheorem{definition}[theorem]{Definition}
\newtheorem{corollary}[theorem]{Corollary}
\newtheorem{proposition}[theorem]{Proposition}
\newtheorem{lemma}[theorem]{Lemma}
\newtheorem{example}[theorem]{Example}
\newtheorem{remark}[theorem]{Remark}
\newtheorem{fact}[theorem]{Fact}
\newtheorem{claim}[theorem]{Claim}
\DeclareMathAlphabet{\mathpzc}{T1}{pzc}{m}{it}
\newcommand{\set}[1] {\mathrm{#1}}
\newcommand{\sset}[1] {\mathds{#1}}
\newcommand{\map}[1] {\mathrm{#1}}
\newcommand{\op}[1] {\mathbf{#1}}
\newcommand{\sop}[1] {\mathds{#1}}
\newcommand{\ket}[1]{\left|#1\right\rangle}
\newcommand{\bra}[1]{\left\langle#1\right|}
\newcommand{\ketbra}[2]{\left|#1\middle\rangle\!\middle\langle#2\right|}
\newcommand{\braket}[2]{\left\langle #1\middle|#2\right\rangle}
\newcommand{\proj}[1]{\ketbra{#1}{#1}}
\newcommand{\point}[0] {\ \ .}
\newcommand{\comma}[0] {\ \ ,}
\newcommand{\mt}[1] {\quad\text{#1}\quad}
\newcommand{\ct}{^{\dagger}} 
\newcommand{\knownth}[1] {\emph{#1}}
\newcommand{\hs}[0] {\ensuremath{\mathcal{H}}\xspace}
\newcommand{\U}{\mathrm{U}} 
\newcommand{\1}[0] {\sop{1}}
\providecommand*{\cupdot}{%
  \mathbin{%
    \mathpalette\@cupdot{}%
  }%
}
\newcommand*{\@cupdot}[2]{%
  \ooalign{%
    $\m@th#1\cup$\cr
    \hidewidth$\m@th#1\cdot$\hidewidth
  }%
}
\newcommand{\tightoverset}[2]{%
  \mathop{#2}\limits^{\vbox to -.5ex{\kern-0.75ex\hbox{$#1$}\vss}}}
\newcommand{\isomap}[0] {\tightoverset\sim\longrightarrow}
\newcommand*{\longhookrightarrow}{\ensuremath{\lhook\joinrel\longrightarrow}}
\renewcommand*\env@matrix[1][*\c@MaxMatrixCols c]{%
  \hskip -\arraycolsep
  \let\@ifnextchar\new@ifnextchar
  \array{#1}}
\newcolumntype{C}[1]{>{\centering\hspace{0pt}\arraybackslash}p{#1}}
\newcolumntype{R}[1]{>{\raggedleft\hspace{0pt}\arraybackslash}p{#1}}
\newcolumntype{L}[1]{>{\raggedright\hspace{0pt}\arraybackslash}p{#1}}
\DeclarePairedDelimiter{\abs}{\lvert}{\rvert}
\DeclarePairedDelimiter{\of}{\lparen}{\rparen}
\renewcommand{\Re}{\operatorname{Re}}
\DeclareMathOperator{\supp}{supp}
\DeclareMathOperator{\poly}{poly}
\DeclareMathOperator{\diag}{diag}
\DeclareMathOperator{\spn}{span}
\tikzset{external/system call={pdflatex \tikzexternalcheckshellescape -halt-on-error -interaction=batchmode -jobname "\image" "\texsource"}}
\definecolor{tape}{rgb}{.9,.9,.9}
\definecolor{transition}{rgb}{.1,.4,.9}
\definecolor{quantum}{rgb}{1,.8,.1}
\newcommand{\ta}{.8}
\newcommand{\dx}{.175}
\newcommand\tapesymbols[1]{%
  \@tapesymbols#1\@nnil
}
\def\@tapesymbols#1{%
  \ifx\@nnil#1\relax\else%
  \@nameuse{tapesymbols@char@#1\expandafter}%
  \fi
}
\DeclareDocumentCommand \defcharcode { m O{} O{} m }{
  \@namedef{tapesymbols@char@#1}#2#3{
    \begin{scope}[shift={(\thetx + \thetx*\dx,0)}]
      \draw[debug] (0,0) rectangle (\ta\ts,\ts);
      \begin{pgfinterruptboundingbox}
        #4
      \end{pgfinterruptboundingbox}
    \end{scope}
    \stepcounter{tx}
    \@tapesymbols
  }%
}
\DeclareDocumentCommand \inputinkscape{ m O{.01} }{
  \begin{scope}[x=#2\ts,y=#2\ts,yscale=-1,shift={(0,-\ts)}]
    \InputIfFileExists{./figures/#1}{}{\input{#1}}
  \end{scope}
}
\newcommand{\rhombus}[1]{
  \draw[#1] (.1,.5) -- (.5,.8) -- (.9,.5) -- (.5,.2) -- cycle;
}
\newlength{\ts}
\newcounter{tx}
\DeclareDocumentCommand \$ { O{0} m } {%
  \settoheight{\ts}{Z}%
  \setlength{\ts}{1.2\ts}%
  \setcounter{tx}{0}%
  \begin{tape}
    \draw[debug] (-#1-#1*\dx-\dx,0) -- (0,1);
    \tapesymbols{#2}
  \end{tape}%
  \xspace
}
\newenvironment{rules}{%
  \renewcommand{\arraystretch}{0.8}%
  \begin{tabular}[t]{@{}l@{}}
}{
  \end{tabular}%
}
\newcommand{\rvdots}{\multicolumn{1}{c}{\vdots}}
\newcommand{\lmin}[0]{\lambda_{\min}}           
\newcommand{\lminpos}[1]{\lmin(#1|_{\supp #1})} 
\newcommand{\qn}[1]{\abs{#1}_q}                 
\newcommand{\np}{\textnormal{\textsf{NP}}\xspace}
\newcommand{\BQ}[1][f]{\textnormal{\textsf{BQ}}\ensuremath{(#1)}\xspace}
\newcommand{\BQP}{\textnormal{\textsf{BQP}}\xspace}
\newcommand{\BQEXP}{\textnormal{\textsf{BQEXP}}\xspace}
\newcommand{\QM}[1][f]{\textnormal{\textsf{QMA}\ensuremath{(#1)}}\xspace}
\newcommand{\QMA}{\textnormal{\textsf{QMA}}\xspace}
\newcommand{\QMAEXP}{\textnormal{\textsf{QMA\textsubscript{EXP}}}\xspace}
\newcommand{\NEXP}{\textnormal{\textsf{NEXP}}\xspace}
\newcommand{\yes}{\textnormal{\textsf{YES}}\xspace}
\newcommand{\no}{\textnormal{\textsf{NO}}\xspace}
\newcommand{\PromP}{\Pi}
\DeclareDocumentCommand{\TILH}{o}{%
  \IfNoValueTF{#1}{%
    TILH%
  }{%
    $#1$-TILH%
  }\xspace%
}
\DeclareDocumentCommand{\qi}{o}{
  \IfNoValueTF{#1}{%
    \ket{\psi_\mathrm{in}}%
  }{%
    \ket{\psi_\mathrm{in}(#1)}%
  }\xspace%
}
\DeclareDocumentCommand{\hsi}{o}{
  \IfNoValueTF{#1}{%
    \ensuremath{\{\qi[x]\}_{x\in I}}\xspace%
  }{%
    \ensuremath{\{\qi[x]\}_{x\in#1}}\xspace%
  }\xspace%
}
\newcommand{\hsf}[0] {\ensuremath{\mathcal{H}_\mathrm{halt}}\xspace} 
\newcommand{\toffoli}{\textsc{Toffoli}\xspace}
\newcommand{\swap}{\textsc{Swap}\xspace}
\newcommand{\cnot}{\textsc{Cnot}\xspace}
\newcommand{\hadamard}{\textsc{Hadamard}\xspace}
\DeclareMathOperator{\enc}{enc}
\begin{document}
\title{\vspace{-1cm}The Complexity of Translationally-Invariant Spin Chains with Low Local Dimension}
\date{}
\author[1]{Johannes Bausch\thanks{jkrb2@cam.ac.uk}}
\author[2]{Toby Cubitt\thanks{t.cubitt@ucl.ac.uk}}
\author[1]{Maris Ozols\thanks{marozols@gmail.com}}
\affil[1]{DAMTP\\Centre for Mathematical Sciences\\University of Cambridge\\Cambridge CB3 0WB, UK}
\affil[2]{Department of Computer Science\\University College London\\London WC1E 6BT, UK}

\maketitle
\thispagestyle{empty}
\enlargethispage{2cm}

\begin{abstract}
  \normalsize
  We prove that estimating the ground state energy of a translationally-invariant, nearest-neighbour Hamiltonian on a 1D spin chain is \QMAEXP-complete, even for systems of low local dimension ($\approx 40$). This is an improvement over the best previously-known result by several orders of magnitude, and it shows that spin-glass-like frustration can occur in translationally-invariant quantum systems with a local dimension comparable to the smallest-known non-translationally-invariant systems with similar behaviour.

  While previous constructions of such systems rely on standard models of quantum computation, we construct a new model that is particularly well-suited for encoding quantum computation into the ground state of a translationally-invariant system. This allows us to shift the proof burden from optimizing the Hamiltonian encoding a standard computational model, to proving universality of a simple model.

  Previous techniques for encoding quantum computation into the ground state of a local Hamiltonian allow only a linear sequence of gates, hence only a linear (or nearly linear) path in the graph of all computational states. We extend these techniques by allowing significantly more general paths, including branching and cycles, thus enabling a highly efficient encoding of our computational model. However, this requires more sophisticated techniques for analysing the spectrum of the resulting Hamiltonian. To address this, we introduce a framework of graphs with unitary edge labels. After relating our Hamiltonian to the Laplacian of such a unitary labelled graph, we analyse its spectrum by combining matrix analysis and spectral graph theory techniques.
\end{abstract}

\newpage
\thispagestyle{empty}
\enlargethispage{5cm}
\begin{spacing}{1}
  \tableofcontents
\end{spacing}

\newpage
\section{Background and Motivation}
Complex physical behaviour can emerge from even very simple rules. Yet if the system is \emph{too} simple, one can often rule out the possibility of any exotic behaviour. Just how simple can a system be to nonetheless feature complex properties? Much of the progress in Hamiltonian complexity and related areas over the last decade can be viewed as improving our understanding of where this boundary between simple and complex lies.

For example, consider 1D spin chains with translationally-invariant nearest neighbour interactions. Hastings proved that if the Hamiltonian describing the system is gapped, the ground state entanglement has to follow an area law~\cite{Hastings2007a}. In 1D, the area law means that the entanglement entropy between any contiguous region and its complement is upper-bounded by a constant, independent of the size of the region. It was believed that even for non-gapped Hamiltonians, area-law violations would contribute at most $\log$ corrections in the system size.  Such long-range correlations in a spin chain's ground state which scale with the system's size are a common indicator of criticality, i.e.\ they show that the system is close to a quantum phase transition. The entanglement entropy is then expected to scale logarithmically with the number of spins, since critical spin chains can often be related to a conformal field theory.

However, using Hamiltonian complexity techniques, Irani \cite{Irani2007} constructed an example of a spin chain in 1D that exhibits violation of the area-law \emph{beyond} logarithmic corrections, indicating that one cannot describe such behaviour by a conformal field theory.  Irani's construction breaks translational-invariance, so it cannot directly be compared to systems satisfying area laws. A later construction \cite{Gottesman2009} can give a similar area-law violation whilst preserving translational-invariance. However, the required local dimension, $O(10^6)$, is vast.  It is therefore at best questionable whether this area-law violation could ever be observed in practice.  Does this mean that such violations only occur for some peculiar theoretical models with non-translationally-invariant couplings, or unrealistically large Hilbert space dimensions?

We now know that the answer to this question is negative. First, it was shown by \cite{Bravyi2012a} that, even for frustration-free spin-1 chains (i.e.\ local dimension 3), one can construct interactions that yield highly entangled ground states, indicating critical behaviour.  In fact, this result delineates a strict dimension threshold for the presence ground-state entanglement in frustration-free systems. For frustration-free spin-1/2 chains (i.e.\ local dimension 2) with translationally-invariant nearest neighbour interactions, it was already known that ground states are unentangled~\cite{Chen2011}.
Building on this, Movassagh et al.~\cite{Movassagh2014} constructed models which give power-law violation of the area-law for translationally-invariant spin-5/2 chains (i.e.\ local dimension\footnote{\cite{Movassagh2014} in fact prove their result for local dimension 5 but breaking strict translational invariance by adding boundary terms at the ends of the chain. Using a trick due to~\cite{Gottesman2009}, the boundary terms can be removed at the cost of increasing the local dimension by 1.}~6), significantly improving on the bound on the local dimension threshold for power-law area-law violation from Gottesman and Irani's result.

Similar dimension-related physicality questions also surround Cubitt et al.'s result which proves that deciding whether a system is gapped or gapless in the thermodynamic limit is an undecidable problem, even for 2D spin lattices with translationally-invariant local interactions~\cite{Cubitt2015}. Again, the local Hilbert space dimension in the model they describe is vast. Bravyi and Gossett recently derived necessary and sufficient conditions for a gapped or gapless phase for frustration-free spin-1/2 chains~\cite{Bravyi2015}. So at the other end of the local dimension scale, the spectral gap problem is decidable in some cases.  However, there is evidence that an astronomical local dimension may not be a fundamental ingredient in the emergent behaviour that gives rise to undecidability of the spectral gap. The abrupt change in the spectrum at very large system sizes that is behind the undecidability, can also occur on 2D lattices of far lower-dimensional spins~\cite{Bausch2015}. Again, this poses an immediate question of whether there is some local dimension threshold above which undecidability can occur, but below which it cannot.

The original and most widely-studied question in Hamiltonian complexity theory, however, is that of estimating the ground state energy of a local Hamiltonian. Kitaev showed that this problem is \QMA-hard~\cite{Kitaev2002} (i.e.\ at least as hard as every other problem in the complexity class \QMA---the quantum generalisation of \np).  Similar to a spin glass, when cooled down these \QMA-hard systems are predicted to get stuck in one of their many meta-stable configurations, and will take exponentially long (in the system size) to find their global minimum-energy configuration.  QMA-hardness-inspired constructions lie behind all the results mentioned above. Yet even though the parameters describing \QMA hard ground state Hamiltonians have been improved successively~\cite{Kempe2006,Oliveira2008,Aharonov2009,Hallgren2013,Gottesman2009}, a lower local dimension threshold below which systems cannot feature spin-glass-like frustration is not known; for non-translationally-invariant systems we know that this bound can be at most 8. For the more physically relevant case of spin chains \emph{with} translational symmetry, however, the best-known bound is $O(10^6)$, due to Gottesman and Irani~\cite{Gottesman2009}, which is unphysically large. From a physical perspective it makes a dramatic difference if the complexity threshold is e.g.\ 7, or 1000.

In this work, we improve the best-known upper bound on the local Hilbert space dimension required for \QMA-hardness in translationally-invariant spin chains by several orders of magnitude, showing that the question of estimating the ground state energy of a local translationally-invariant Hamiltonian with nearest-neighbour interactions remains hard, even for spins on a chain with local dimension $\approx 40$.

\section{Extended Introduction and Overview of Results}
\subsection{Historical Context}
Hamiltonians are the one-stop shop for describing physical properties of multi-body quantum systems,
and are of paramount interest for an array of disciplines ranging from experimental condensed matter physics to theoretical computer science~\cite{Oliveira2008,Kempe2006,Aharonov2009,Gottesman2009,Barthel2012,Cubitt2013,Piddock2015,Cubitt2015,Wei2015}. While computer scientists are interested in the computational power of different models, for physicists it is important to calculate the structure of the low-energy spectrum of quantum systems, in particular to approximate the minimum energy of the system, i.e.~the ground state energy.

The decision problem of determining whether such a local Hamiltonian operator has lowest energy---or eigenvalue---below some $\alpha$ or above some $\beta$, with $\beta>\alpha$, can be thought of as the quantum analogue of the maximum satisfiability problem \textsc{Max-Sat}. Similar to the well-known \textsc{3-Sat}, this asks for the maximum number of clauses of a Boolean formula in conjunctive normal form that can be satisfied simultaneously. In the quantum case, each local term $\op h$ of $\op H$ is analogous to a clause while a global state $\ket{\psi}$ is analogous to a global variable assignment, and the smaller $\bra{\psi} \op h \ket{\psi}$ is, the closer $\ket{\psi}$ is to satisfying the corresponding clause $\op h$. The \textsc{Local Hamiltonian} problem formalizes the notion of maximizing the number of local terms of $\op H$ which can be simultaneously minimized by some global state $\ket\psi$, in the sense that $\bra\psi\op H\ket\psi$ is small. Physically, this minimum is equal to the lowest energy of the system.

Formally, we can state the \textsc{Local Hamiltonian} problem as the following promise problem.
\begin{definition}[\textsc{$k$-Local Hamiltonian}]\leavevmode\label{def:local-ham-intro}
  \begin{description}
    \item[Input.] An integer $n$ and a $k$-local Hamiltonian $\op H$ on a multipartite Hilbert space $(\sset C^d)^{\otimes n}$, and two real numbers $\beta > \alpha$ such that $\beta - \alpha \geq 1/p(n)$, for some fixed polynomial $p(n)$. The smallest eigenvalue $\lmin$ of $\op H$ is promised to be either smaller than $\alpha$ or greater than $\beta$.
    \item[Question.] Is $\lmin < \alpha$, or $\lmin > \beta$?
  \end{description}
\end{definition}

\begin{table}[t]
  \centering
  \def\arraystretch{1.3}
  \begin{tabular}{R{4.2cm}llL{5cm}}
    \toprule
    & \textbf{locality} & \textbf{local dimension} & \textbf{geometry and symmetries}                  \\ \hline
    Kitaev (1999)                                         & 5                 & 2                        & arbitrary \\ \hline
    Kempe, Kitaev, Regev~\cite{Kempe2006}                 & 2                 & 2                        & arbitrary \\ \hline
    Oliveira, Terhal~\cite{Oliveira2008}                  & 2                 & 2                        & 2D, planar, nearest-neighbour interactions        \\ \hline
    Aharonov, Gottesman, Irani, Kempe~\cite{Aharonov2009} & 2                 & 12                       & line, nearest-neighbour                           \\ \hline
    Hallgren, Nagaj, Narayanaswami~\cite{Hallgren2013}    & 2                 & 8                        & line, nearest-neighbour                           \\ \hline
    Gottesman, Irani~\cite{Gottesman2009}                 & 2                 & huge $(\approx 10^6)$            & line, nearest-neighbour, translationally-invariant \\\bottomrule
  \end{tabular}
  \caption{Brief historic overview of \QMA (\QMAEXP for~\cite{Gottesman2009}) completeness results in Hamiltonian complexity.}
  \label{tab:history}
\end{table}

The \textsc{$k$-Local Hamiltonian} problem has a track record of long-standing interest (cf.~\cref{tab:history}). The foundations were laid with Feynman's paper~\cite{Feynman1986} on encoding quantum circuits into the ground state of a Hamiltonian, which motivated a whole series of interesting and increasingly sophisticated results showing that variants of this problem are \QMA- or \QMAEXP-complete.%
\footnote{
  \QMAEXP is to \QMA what \NEXP is to \np. This is a necessary technicality whenever
  the input has to be specified in unary. The energy gap still
  scales inverse-polynomially with system size $n$, and the physical
  implications are exactly the same as for \QMA-completeness.
  We define these complexity classes rigorously in \cref{sec:classes}, and explain their difference in detail in \cref{sec:qmaqmaexp}.
}

On the other hand, just as in classical computer science \textsc{2-SAT} is solvable in polynomial time, its quantum analogue---the \textsc{Quantum 2-SAT}, a special case of the 2-\textsc{Local Hamiltonian} problem
\footnote{More specifically, \textsc{Quantum 2-SAT} asks whether a sum of 2-local
terms, where each term is a 2-qubit projector that acts on any pair of qubits, is frustration-free, i.e.\ has a 0-energy eigenstate or, equivalently, a state that simultaneously satisfies all local constraints.}---can also
be solved deterministically in
polynomial time: \cite{Bravyi2006} proved an $O(n^4)$ runtime bound, and later a linear-time algorithm was discovered independently by~\cite{Arad2015} and~\cite{DeBeaudrap2016}. Yet the resemblance with classical results goes further: \textsc{Quantum 4-SAT} and later \textsc{Quantum 3-SAT} were shown to be \QMA\!\!\textsubscript{\textsf 1}-complete~\cite{Bravyi2006,Gosset2013}. In the same spirit, a recent result shows that in case of one-dimensional \emph{gapped} local Hamiltonians, there exists an efficient randomized algorithm for approximating the ground state as a matrix product state~\cite{Landau2013} (this result is independent of the local dimension).

However, the  \textsc{Local Hamiltonian} problem, as defined in \cref{def:local-ham-intro}, allows the Hamiltonian to be frustrated (going beyond local projectors), and encompasses Hamiltonians whose gap closes inverse-polynomially in the system's size. It is thus a natural question to ask whether this more general \textsc{Local Hamiltonian} problem remains computationally hard, even under restrictions motivated on \emph{physical} grounds (e.g.\ for translationally-invariant interactions and for qubits), or whether there is a fundamental local dimension threshold below which it becomes tractable.

To motivate this further, it is crucial to note that Hamiltonian constructions in the spirit of \cite{Feynman1986} are a proof-of-concept and may not necessarily be natural, in the sense that we would not encounter them in nature describing an actual physical system. There are three fundamental criteria for judging the ``physicality'' of a Hamiltonian: the interactions should be geometrically local, the dimension of the interacting subsystems should be small, and the interactions should exhibit translational invariance. These properties apply to physical systems we typically encounter in nature. For example, translational invariance means that if the Hamiltonian is specified on a lattice, the interactions are the same independently of the location within the lattice.

Starting with Kitaev's original proof of \QMA-completeness of \textsc{5-local Hamiltonian}~\cite{Kitaev2002}, the locality and local dimension of the constructions were improved successively~\cite{Kempe2006,Oliveira2008,Aharonov2009}, cf.\ \cref{tab:history}. For spins of local dimension 8 coupled by nearest-neighbour interactions on a chain, \QMA-hardness was proven by Hallgren et al.~\cite{Hallgren2013}. All of these results make heavy use of the non-translationally-invariant nature of interactions, which vastly simplify the encoding of the problem instance and verifier circuit into the local structure of the Hamiltonian. The \QMAEXP-hardness result by Gottesman and Irani~\cite{Gottesman2009}, which features a 2-local Hamiltonian on a line with translationally-invariant nearest-neighbour interactions, shows that having translational symmetry does not change the complexity class of the local Hamiltonian problem. But one caveat remains: the local dimension is unphysically large, on the order of $10^6$.

\subsection{Main Result}
Our goal is to significantly improve on this best-known upper bound on the local dimension. We develop a set of new methods to prove that the complexity threshold above which the \textsc{Local Hamiltonian} problem is computationally hard is at most 42, even under the strict physicality constraints outlined above. More precisely, we prove the following main theorem.
\begin{theorem}\label{th:hardness-intro}
  The \textsc{local Hamiltonian} problem with translationally-invariant interactions between neighbouring spins on a chain with local dimension 42 is \QMAEXP-complete. This holds true even for Hamiltonians with local terms of the form $\op h+p(n)\op b$, were $\op h$ and $\op b$ are fixed 2-local interactions and $p(n)$ is a fixed polynomial in the chain length $n$.
\end{theorem}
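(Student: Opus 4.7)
The plan is to prove \QMAEXP-hardness via the standard Feynman--Kitaev paradigm of encoding a quantum verifier circuit into the low-energy space of a local Hamiltonian, while containment in \QMAEXP follows from the usual verifier that performs phase estimation on a \QMAEXP witness for $\op H$. The novelty lies in simultaneously achieving translational invariance and local dimension as small as $42$. The overall strategy is to reduce from the halting problem for a uniform family of \QMAEXP verifier circuits specified in unary, and to build a translationally-invariant, nearest-neighbour Hamiltonian on an $n$-site chain whose ground-state energy separates accepting from rejecting instances by an inverse-polynomial gap.

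First, I would design a custom model of quantum computation tailored to translationally-invariant ground-state encoding. Rather than forcing the entire verifier circuit into the interaction structure, this model should consist of an extremely small rule set that is nevertheless universal for \BQP (and therefore universal for realising a \QMA-style witness check). As foreshadowed in the abstract, I would gain expressiveness by allowing the graph of computational configurations to exhibit \emph{branching and cycles} rather than being constrained to a linear chain of configurations. Universality of this custom model would be established by simulating a standard quantum Turing machine or circuit with only polynomial overhead, so that the shift in proof burden promised in the abstract---from optimising a Hamiltonian encoding of a standard model to merely showing universality of a simple model---is realised.

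Next, I would encode a single transition of this new model as a 2-local term $\op h$ on neighbouring sites of local dimension $d=42$, together with a penalty term $\op b$ marking illegal or rejecting configurations, giving local terms of the form $\op h + p(n)\op b$ for a polynomial $p(n)$ boosting the penalty weight to achieve the required gap. The resulting $\op H$ is translationally-invariant and nearest-neighbour by construction. The central analytic tool is to view $\op H$ as (essentially) the Laplacian of a graph whose vertices are computational configurations, whose edges are single-step transitions of the model, and whose edge labels are the unitary gates associated with each transition. The spectrum of $\op H$ can then be studied by a combination of spectral graph theory---Cheeger-type inequalities, Perron--Frobenius reasoning, and comparison of Laplacians---together with matrix analysis, in order to bound both the smallest eigenvalue of $\op H$ and its spectral gap above, producing the promised \textsf{YES}/\textsf{NO} separation.

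The hardest part will be controlling the spectrum once the configuration graph is allowed to branch and contain cycles. The usual Kitaev clock construction reduces to diagonalising a random walk on a path, whose spectrum is explicitly known; with branching, $\op H$ no longer decomposes into simple tridiagonal blocks, so standard tools such as Kitaev's geometrical lemma must be replaced by comparison or perturbation arguments on the unitary-labelled Laplacian---which is precisely where the new framework of unitary edge labels becomes essential, since one must track how the non-commuting unitaries combine around cycles and at branch points to ensure the kernel of the propagation part is exactly the span of valid ``history states''. A second difficulty, which ultimately pins the final dimension count at $42$, is simultaneously (i) supporting a sufficiently rich transition graph for a universal computation, (ii) enforcing initialisation and halting conditions at the \emph{ends} of the chain using only translationally-invariant, bulk 2-local interactions, and (iii) keeping the local alphabet as small as possible. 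Every extra feature must be absorbed into a handful of internal states, and verifying that the resulting alphabet is closed under the transition rules while remaining universal will require substantial, but essentially bookkeeping, work.
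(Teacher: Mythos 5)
Your proposal reproduces the paper's high-level architecture faithfully: a custom computational model purpose-built for translationally-invariant ground-state encoding, a configuration graph with branching and cycles, edges labelled by unitaries, local terms of the form $\op h + p(n)\op b$, and a spectral-graph-theoretic analysis of the resulting Hamiltonian. You also correctly anticipate that the final dimension count is pinned down by alphabet bookkeeping and that universality of the small model carries most of the proof burden. So the overall approach is the same as the paper's.

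However, you miss the single technical observation that makes the branching graph tractable, and replace it with a plan that would likely not go through. You write that the kernel of the propagation part must be controlled by ``tracking how the non-commuting unitaries combine around cycles and at branch points'' and that Kitaev's geometrical lemma ``must be replaced by comparison or perturbation arguments on the unitary-labelled Laplacian.'' The paper does the opposite: it isolates the condition that the product of the edge unitaries around \emph{every} cycle equals $\1$ (a \emph{simple} unitary labelled graph), and proves that under precisely this hypothesis the associated Hamiltonian is unitarily equivalent to $\Delta \otimes \1$, where $\Delta$ is the ordinary Laplacian of the underlying classical graph. This ``semi-classical'' reduction is what collapses the spectral problem to pure classical graph theory (Fiedler's algebraic-connectivity bounds, not Cheeger or Perron--Frobenius), after which Kitaev's geometrical lemma is not replaced but mildly \emph{extended}---to graphs and then to ULGs with subspace penalties---and applied directly to $\Delta$ plus penalty projectors. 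Absent the simplicity condition, the kernel of $\op H(G)$ need not be spanned by history states at all, and neither comparison theorems nor perturbation arguments supply the exact characterization of the ground space that the hardness reduction requires. You would also need to actually verify simplicity for the concrete rewriting system; the paper handles this with a further abstraction layer (quantum Thue systems, in which $k$-local rewriting rules mechanically produce $k$-local translationally-invariant terms), which your proposal does not anticipate. Finally, your containment argument via phase estimation is not what the paper does: containment in \QMAEXP follows by observing that Kitaev's poly-$n$-time random-sampling verifier for the ordinary \textsc{Local Hamiltonian} problem is only exponential in the genuine input size $\poly\log n$, so it already lies inside a \QMAEXP verifier---no phase estimation is needed.
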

Following the notation in~\cite{Gottesman2009}, we label this class of problems \TILH[2], for translationally-invariant 2-local Hamiltonian. Analogous to all past hardness constructions, we prove our result by explicitly defining a family of \QMAEXP-hard instances of \TILH[2]. More precisely, the instances we construct are so-called \emph{history state} Hamiltonians: by choosing the local constraints in $\op H$ suitably, one can create a Hermitian operator with a ground state spanned by states that are a uniform superposition over the history of a computation, such that the state at step $t$ is \emph{entangled} with a corresponding state $\ket t$ in a separate time register (i.e.\ $\sum_t\ket t\otimes\ket{\psi_t}$). Measuring the time register at time $t$ then yields the state of the computation at this step. This ``program counter'', as Feynman describes it, can be thought of as a clock or a finite automaton driving the application of quantum gates. Originally, only linearly-evolving clock constructions were used, since analysing the spectrum of a Hamiltonian with branching computational paths is more difficult. More recently, \QMA-hardness constructions in 1D and 2D have used limited branching and cycles~\cite{Hallgren2013,Breuckmann2013}. These have also been exploited in the slightly different context of adiabatic and Hamiltonians quantum computation~\cite{Nagaj2008,Nagaj2012,Gosset2015}.

Whereas recent results~\cite{Piddock2015} make use of perturbation gadgets---approximating higher-order interactions in the low-energy subspace of the system by an effective high-energy theory---it is known that this does not work in one-dimensional systems~\cite{Aharonov2009}. The improvements in \cite{Hallgren2013} over \cite{Aharonov2009} are possible however by approximating 4-local interactions by a sum of 2-local interactions, effectively introducing illegal transitions that have to be penalized. Perturbation gadgets and locality reduction both depend on introducing a large energy scale to project out illegal subspaces. Our results, on the other hand, do \emph{not} use perturbation theory\footnote{The polynomial in \cref{th:hardness-intro} is an artefact of the construction. A standard trick from \cite{Gottesman2009} can reduce the Hamiltonian to fixed $O(1)$ interactions by slightly increasing the local dimension, see \cref{rem:no-scaling}.}.

Our findings are based on the following three main technical contributions:
\begin{enumerate}
  \item All previous constructions encode one of the standard models of quantum computation (almost always the circuit model, with the exception of \cite{Gottesman2009} which encodes a quantum Turing machine), which are not optimized for this task.  We design a new universal model of quantum computation---a \emph{quantum ring machine} (QRM)---which we prove to be quantum Turing-complete. The periodicity of the QRM's computational steps make it particularly well-suited for local Hamiltonian constructions.
  \item We next introduce \emph{unitary labelled graphs} and their associated Hamiltonians, which can accommodate a non-deterministic clock construction to drive quantum computation. This generalizes Feynman's original clock construction~\cite{Feynman1986}, which corresponds to a path graph in our setup. Mirroring Kitaev's analysis~\cite{Kitaev2002}, our Hamiltonian is also equivalent to a Laplacian of the corresponding graph, which allows us to analyse its spectrum using a combination of spectral graph theory and matrix analysis techniques. These techniques let us analyse ground states of more complicated Hamiltonians than previously possible.
  \item We define yet another computational model---a \emph{quantum Thue system}, or a quantum string rewriting system---that on the one hand is particularly well-suited for embedding a computational model into local interactions of a Hamiltonian; and on the other hand, under simple \emph{local} constraints on the rewriting rules, necessarily produces Hamiltonians that correspond to unitary labelled graphs. Quantum Thue systems can in a sense be thought of as an assembly language for compiling computational models into local translationally-invariant Hamiltonians, which could also be used for adiabatic quantum computation, or Hamiltonian quantum computers (cf.~\cite{Nagaj2008,Wei2015}).
\end{enumerate}
In light of our result being rather involved and technical, we want to give a poor man's overview of our findings, which---without any proofs---outline the technical contributions in this paper. We want to emphasize that we made an effort to keep each section largely self-contained; in particular the section on spectral analysis of graphs with unitary edge labels, quantum ring machines, and quantum Thue systems can be regarded independently of each other. The \QMAEXP-hardness proof in \cref{sec:hardness} of course utilizes all of our developed machinery, but in such a way that the proof of existence of \QMAEXP-hard instances themselves are given a separate section.

Since the latter part is somewhat technical and specific, we want to point out that one does not need to understand the construction itself to follow the idea behind the hardness proof, which hopefully facilitates an understanding of the result.\footnote{
  The reason behind unhitching the explicit construction of \QMAEXP-hard instances in this way is to allow for further optimization of the local dimension to go through without having to re-prove all of the claims; in fact, we encourage the interested reader to have a stab at finding a quantum Thue systems following the four properties given in \cref{lem:qts-special}, but with an alphabet that is smaller than ours.
}

\subsection{Proof Ideas and Techniques}
\paragraph{Spectral Analysis for Hamiltonians Encoding Non-Deterministic Computation.}
As briefly explained in the introduction, the fundamental idea behind encoding quantum computation into the ground state of a Hamiltonian is based on the concept of \emph{history states}, introduced by Feynman in 1986. For some quantum circuit represented by local gates $\op U_1,\ldots,\op U_T$ on a Hilbert space \hs, we define a Hamiltonian on the product space $\sset C^T\otimes\hs$ as
\begin{equation}
  \op H:=\sum_{t=1}^{T-1}(\proj t\otimes\1 + \proj{t+1}\otimes\1 - \ketbra{t+1}{t}\otimes\op U_t - \ketbra{t}{t+1}\otimes\op U_t^\dagger).
\end{equation}
The ground state of this Hermitian operator is spanned by states of the form $\sum_t\ket t\otimes\ket{\psi_t}$, where $\ket{\psi_t}=\op U_t\cdots\op U_1\ket\phi$ for some $\ket\phi\in\hs$. For any $\ket\phi$, $\ker\op H$ thus encodes the uniform superposition over the history of the quantum circuit acting on $\ket\phi$.
An intuitive way of thinking about these ground states is that they represent quantum computation driven by a clock, i.e.\ for each increment of the clock register, the corresponding quantum gate is applied to the computational register.\footnote{The notion of time in this context is meaningless, but simplifies an intuitive understanding on how computation is embedded into the ground state of $\op H$.}

Essentially all past result employ such history state Hamiltonians with a linear clock, i.e.\ for every computational step, there exists precisely one unique forward and backward transition.  For local Hamiltonian constructions---i.e. where $\op H$ is a sum of local terms---this implies that each \emph{local} rule has to know the exact location within the overall computation.

To be more specific, consider a spin chain of length $n$ as the Hilbert space $\hs_\text{loc}^{\otimes n}$. The interactions on this chain then take the form of a set of local \emph{rewriting rules} acting on neighbouring sections of spins: for $\ket{\psi_i}, \ket{\phi_i} \in \hs_\text{loc}^{\otimes k}$ for some constant $k<n$,  we encode the evolution $\ket{\psi_i} \mapsto \ket{\phi_i}$ by a local Hamiltonian term $\op h_i=(\ket{\psi_i}-\ket{\phi_i})(\bra{\psi_i}-\bra{\phi_i})$. The overall Hamiltonian is then a sum of these local interactions over all spins, i.e.
\begin{equation}\label{eq:hist-hamiltonian-intro}
  \op H=\sum_j\1_{1,\dotsc,j-1}\otimes\Big(\sum_i\op h_i\Big)_{j,\ldots,j+k-1}\otimes\1_{j+k,\dotsc,n}.
\end{equation}
If the global evolution defined by the terms $\op h_i$ is unique, this implies that it is always possible to locally determine the global state of the computation.\footnote{
  Hamiltonians such as in \cref{eq:hist-hamiltonian-intro} are combined with a series of local projectors which single out a computationally valid ground state, so strictly speaking the local rules will only have to discriminate the current computational state locally within this valid subspace---cf.~\cite{Hallgren2013}, where this is exploited to break down $4$-local interactions to $2$-local ones.
}
This means that locally, we have to store this state in one way or another: under this requirement it is difficult to push the limits of local Hilbert space dimension down, and much could be gained if we could e.g.\ allow the local computational state to be ambiguous to some extent (but such that if the wrong transition is applied, the computation does not proceed to tamper with the actual outcome of the embedded circuit).

\emph{In our work, we go beyond linear clock constructions, and prove a series of spectral graph-theoretic results which allow us to analyse more complicated history state Hamiltonians.} We outline these novel techniques below.

If all $\ket{\psi_i}$ and $\ket{\phi_i}$ in \cref{eq:hist-hamiltonian-intro} are standard basis vectors, then each rule corresponds to an edge in a graph $G$ with vertices labelled by the canonical basis of the spin chain. $\op H$ thus equals the Laplacian of the graph $G$  (whose spectrum is accessible) and the ground state of $\op H$ is given by the uniform superposition over connected graph components of $G$. We call ground states of such Hamiltonians as in \cref{eq:hist-hamiltonian-intro} \emph{history states}, since they encode the closure of states reachable under the given rewriting rules.

To analyse the spectrum of more general non-basis transitions $\ket{\psi_i}\mapsto\ket{\phi_i}$, one needs to prove that this choice still allows $\op H$ to be at least unitarily equivalent to a graph Laplacian $\Delta$, e.g.\ by explicitly constructing a unitary similarity transform $\op W$ such that $\op W^\dagger\op H\op W=\Delta\otimes\1$. Most if not all \QMA-hard construction since Kitaev's go along this route; however, in the language of graphs, the unitary equivalence could only be proven if $\Delta$ is the Laplacian of a path graph.\footnote{
  A more complex construction with a local clock was considered in~\cite{Breuckmann2013}, where the authors consider a 2D surface and allow executing transitions in parallel, as long as the execution front behaves in a time-like fashion. To analyse the spectrum of the resulting Hamiltonian, they relate the propagation terms to the diffusion of a string on a torus, corresponding to a ferromagnetic Heisenberg model with partially twisted periodic boundary conditions. Their analysis, while elegant, is specific to their string diffusion-type execution order of quantum gates. These Hamiltonians cannot generally be translationally-invariant, as the circuit must be laid out on the 2D surface.
}
Just as in \cref{eq:hist-hamiltonian-intro}, this graph essentially corresponds to the finite state automaton ``driving'' the computation; if it is a path graph, the computational path is limited to a sequential application of transition rules $\ket{\psi_i}\mapsto\ket{\phi_i}$ or gate applications encoded therein.

We extend this notion to allow much more complicated branching in the computational path to occur. In particular, we prove a series of results which guarantee the existence of the partially diagonalizing unitary $\op W$ solely based on properties of the rewriting rules, without the need to explicitly analyse the overall evolution of the system. This has two major benefits: it allows more powerful state transitions which are not necessarily unique for every step, and it drastically simplifies the spectral analysis of $\op H$ for whichever construction we choose to work with, as we do not need to construct the equivalence between $\op H$ and $\Delta$ explicitly. As an important example, our model is the first to allow multiple threads of computation to run in parallel, which then join at some common state.

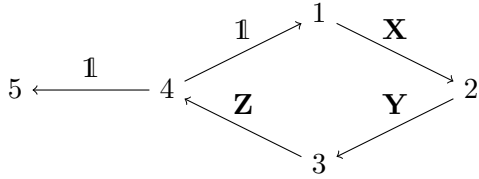
\begin{figure}
  \centering
  \begin{tikzpicture}[node distance = 14mm, text height = 1.5ex, text depth = .25ex]
    \node (5) at (0,0) {5};
    \node (4) at (2,0) {4};
    \node (1) at (4,1) {1};
    \node (3) at (4,-1) {3};
    \node (2) at (6,0) {2};
    \draw[->] (1) -- node[above] {$\op X$} (2);
    \draw[->] (2) -- node[above] {$\op Y$} (3);
    \draw[->] (3) -- node[above] {$\op Z$} (4);
    \draw[->] (4) -- node[above] {$\1$} (1);
    \draw[->] (4) -- node[above] {$\1$} (5);
  \end{tikzpicture}
  \caption{Example of a unitary labelled graph (ULG) with vertices $\{1,2,3,4,5\}$ and three non-trivial unitaries $\op X, \op Y,\op Z\in\set{U}(\mathcal H)$. The associated Hamiltonian for this ULG, as defined in \cref{eq:ulg-hamiltonian-intro}, is unitarily equivalent to the Laplacian of the underlying graph if $\op Z\op Y\op X=\1$. We provide an explicit description of this change-of-basis unitary.}
  \label{fig:ulg-intro}
\end{figure}
In a bottom-up approach, we formalize the notion of a Hamiltonian associated with a graph. Starting from a simple directed graph $G = (\set V, \set E)$ we associate a Hilbert space $\hs$ to each vertex $v \in \set V$, and a unitary $\op U_{(a,b)}:\hs\longrightarrow\hs$ for every directed edge $(a,b) \in \set E$.  We call such a graph with Hilbert space and family of unitaries a \emph{unitary labelled graph}, or ULG for short. As an example, consider \cref{fig:ulg-intro}.

The \emph{associated Hamiltonian} for the ULG is then defined as
\begin{equation}\label{eq:ulg-hamiltonian-intro}
  \op H(G):=\sum_{(a,b)\in \set E}\sum_i(\ket a\otimes\ket i-\ket b\otimes\op U_{(a,b)}\ket i)(\mt{herm.~conj.}),
\end{equation}
where the $\ket i$ label a basis of \hs. Observe that this construction is more general than a local Hamiltonian on a spin chain as in \cref{eq:hist-hamiltonian-intro}: $\op H(G)$ is simply regarded as Hermitian operator on the overall Hilbert space $\sset C^{\set V}\otimes\hs$ where the vertex labels are arbitrary, and not necessarily make $\op H(G)$ local in any sense.

The associated Hamiltonian $\op H(G)$ bears some structural resemblance with a graph Laplacian, as already mentioned. We prove the following theorem.
\begin{theorem}\label{th:simple-semiclassical-intro}
  If the product of unitaries along any loop in the graph $G$ is $\1$, a property we call \emph{simple}, then $\op H(G)$ is unitarily equivalent to $\Delta\otimes\1_n$, where $n=\dim\hs$ and $\Delta$ is the Laplacian of $G$.
\end{theorem}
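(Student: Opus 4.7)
The plan is to construct an explicit block-diagonal unitary $\op{W}$ on $\sset{C}^{\set{V}}\otimes\hs$ that conjugates away every edge unitary, leaving the scalar graph Laplacian behind. The guiding idea is a notion of \emph{parallel transport}: simplicity lets one attach to each vertex a canonical unitary that encodes ``the gate that should have been applied on the way here'', and once these are stripped off on each vertex block, the Hamiltonian becomes purely combinatorial.

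First I would rewrite $\op{H}(G)$ in closed form by summing out the $\ket{i}$-basis: each edge term collapses to
\begin{equation*}
\proj{a}\otimes\1 + \proj{b}\otimes\1 - \ketbra{b}{a}\otimes\op{U}_{(a,b)} - \ketbra{a}{b}\otimes\op{U}_{(a,b)}^\dagger,
\end{equation*}
so $\op{H}(G)$ already has the vertex-level structure of a Laplacian, decorated by the $\op{U}_{(a,b)}$. Next, working one (weakly) connected component at a time, I would fix a root $v_0$ and set $\op{W}_{v_0}:=\1$. For any other vertex $v$, choose a walk $v_0=w_0,w_1,\dotsc,w_k=v$ in the underlying undirected graph and define $\op{W}_v:=\op{U}_k\cdots\op{U}_1$, where $\op{U}_j:=\op{U}_{(w_{j-1},w_j)}$ when the edge is traversed in its direction and $\op{U}_j:=\op{U}_{(w_j,w_{j-1})}^\dagger$ when traversed against it. The simple property is exactly what is needed to make $\op{W}_v$ independent of the chosen walk: given two such walks $p_1,p_2$ from $v_0$ to $v$, concatenating $p_1$ with the reverse of $p_2$ is a closed loop, and its associated product equals $\1$ by hypothesis, so $\op{W}_{p_1}=\op{W}_{p_2}$. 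Setting $\op{W}:=\sum_{v\in\set{V}}\proj{v}\otimes\op{W}_v$ then gives a unitary on $\sset{C}^{\set{V}}\otimes\hs$, since each block $\op{W}_v$ is unitary.

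The final step is a short direct computation. From the construction one obtains the crucial compatibility $\op{W}_b=\op{U}_{(a,b)}\op{W}_a$ for every directed edge $(a,b)\in\set{E}$, because extending a parallel-transport walk to $a$ by the single edge $(a,b)$ is itself an admissible walk to $b$. Conjugating the cross term $\ketbra{b}{a}\otimes\op{U}_{(a,b)}$ by $\op{W}$ therefore yields $\ketbra{b}{a}\otimes\op{W}_b^\dagger\op{U}_{(a,b)}\op{W}_a=\ketbra{b}{a}\otimes\1$, and likewise for the hermitian conjugate; the diagonal terms $\proj{a}\otimes\1$ and $\proj{b}\otimes\1$ are untouched because $\op{W}_v^\dagger\op{W}_v=\1$. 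Summing over all edges gives
\begin{equation*}
\op{W}^\dagger\op{H}(G)\op{W}=\Bigl(\sum_{(a,b)\in\set{E}}\proj{a}+\proj{b}-\ketbra{a}{b}-\ketbra{b}{a}\Bigr)\otimes\1_n=\Delta\otimes\1_n,
\end{equation*}
as claimed.

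The main obstacle is the well-definedness of $\op{W}_v$: one has to be careful about what ``loop'' means in a directed graph so that the homotopy-style cancellation goes through, in particular insisting that simplicity refers to products in which a reverse traversal of an edge $(a,b)$ contributes $\op{U}_{(a,b)}^\dagger$. Once path-independence is established, the rest is routine bookkeeping; the construction also makes clear that $\op{W}$ is not unique, since it depends on the choice of root in each connected component, but any such choice produces the same reduced operator $\Delta\otimes\1_n$.
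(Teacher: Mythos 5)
Your proof is correct and takes essentially the same approach as the paper: both exploit simplicity to define a path-independent parallel transport $\op W_v$ from a fixed root to each vertex and conjugate $\op H(G)$ by the block-diagonal unitary $\op W = \sum_{v}\proj{v}\otimes\op W_v$. The paper merely organizes this construction as a spanning-tree induction, building $\op W$ as a product of local unitaries $\op W_{a_k}$ that successively trivialize a growing neighbourhood, whereas you write the vertex blocks $\op W_v$ in closed form and verify the compatibility relation $\op W_b = \op U_{(a,b)}\op W_a$ directly.
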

\Cref{fig:ulg-intro} satisfies this theorem if and only if the product of unitaries in the loop are $\op Z\op Y\op X=\1$. We provide an explicit expression for this diagonalizing unitary, which can be constructed in poly time using a breadth-first search algorithm along a spanning tree of $G$.

\paragraph{Quantum String Rewriting.}
In order to reintroduce locality to our Hamiltonian construction, we further develop a notation which facilitates embedding transition rules as in \cref{eq:hist-hamiltonian-intro,eq:ulg-hamiltonian-intro} into the ground state of a local Hamiltonian.  This notation is heavily motivated by string rewriting models, and we extend this notion to introduce a new quantum Turing-complete model based on transitions able to perform quantum gates on part of the string's alphabet.

As mentioned, past hardness constructions (summarized in \cref{tab:history}) encode computation in \emph{local} transition rules that act on spins connected by some underlying graph of interactions. While some of these transitions are classical---i.e.~basis-preserving---others act on the spin states with a non-diagonal unitary operator, performing the actual quantum computation. Inspired by classical string rewriting systems, we interpret these quantum interactions as local quantum rewriting rules, and introduce a new abstract rewriting system called \emph{quantum Thue system}. This extends an already-existing model of string rewriting---semi-Thue systems%
\footnote{Named after the Norwegian mathematician Axel Thue. We require all rule sets for quantum string rewriting to be symmetric; a symmetric semi-Thue is simply called Thue system, explaining the name quantum Thue system.}%
---which are well-studied classically~\cite{Thomas2010}.

A (classical) semi-Thue system consists of a finite alphabet $\Sigma$ and length-preserving replacement rules for strings over this alphabet. Similar to the word problem, computation can be encoded in the question whether there exists a connecting path between some input and output strings $s_i$ and $s_f$. It is straightforward to simulate universal classical Turing machines with a Thue system, which shows that the latter is a Turing-complete model for classical computation. But what about quantum computation?

For quantum Thue systems, we require that the alphabet splits into a classical and a quantum part, i.e.~$\Sigma=\Sigma_{cl}\sqcup\Sigma_q$. Transition rules can be purely classical---between elements of $\Sigma_{cl}^*$, quantum---between elements of $\Sigma_q^*$, or a mixture thereof, in which case we require that the rule preserves the number of quantum symbols $\qn{s}$ of a string $s\in\Sigma^*$. In addition, every rule $r$ acting on at least some quantum symbols has a unitary $\op U_r\in\U(\hs^{\otimes \qn{s}})$ attached, where \hs is some fixed, finite-dimensional Hilbert space.

Starting on some string $s$ and a state vector $\ket v\in\hs^{\otimes \qn{s}}$, we apply any matching string rewriting rule $s\overset r\longmapsto s'$ in turn. For every replacement, we \emph{also} apply the corresponding unitary to the state vector, i.e.~$\op U_r\ket v=\ket {v'}$. In this fashion, we can model quantum computation, driven by a finite automaton: if we make the underlying classical Thue system implement a Turing machine that writes out a quantum circuit description on the string, and then perform this quantum circuit on a separate set of qubits attached to some quantum symbols, the final state vector will contain the output of a quantum computation.

One can then show that a quantum Thue system is itself a special case of a unitary labelled graph, which allows us to translate it into a Hamiltonian. We show that the locality of the resulting Hamiltonian only depends on the range of the largest replacement rule, e.g.~if one at most replaces a 3-character string, the resulting Hamiltonian will also be 3-local and translationally-invariant.

As replacement rules are not necessarily unique, the computation will have potential ambiguities. As such, we regard all strings connected to the initial starting string $s_i$ via some arbitrary combination of rules, and the size of this set corresponds to the number of basis states that the corresponding \emph{history state} (the ground state of the associated Hamiltonian of the unitary labelled graph defined by the quantum Thue system) is comprised of.

\paragraph{A Simpler Computational Model.}
The complexity class \QMAEXP is usually defined in terms of the circuit model, i.e.~as a uniform family of verifier circuits: a promise problem $\PromP = (\PromP_\yes, \PromP_\no)$ is in \QMAEXP if there exists a classical Turing machine, such that the verifier circuit for a problem instance $l \in \PromP_\yes \cup \PromP_\no$ can be written out by the Turing machine in $O(\exp |l|)$ steps where $|l|$ is the instance size. Being used as an all-purpose computational model, Turing machines have significant downsides: they have complicated transition functions, need a lot of internal states (which translates to an enormous local dimension when encoded in a Hamiltonian) and are rarely written out explicitly (so it is hard to get tight bounds on the required dimension). On the other hand, in past constructions, embedding a circuit directly required the use of non-local clock states marking the position within the circuit, or non-translationally-invariant terms that encode the circuit unambiguously.

\begin{figure}[!t]
  \begin{floatrow}
    \ffigbox[6cm]{
      \centering
      \includegraphics[width=5.5cm]{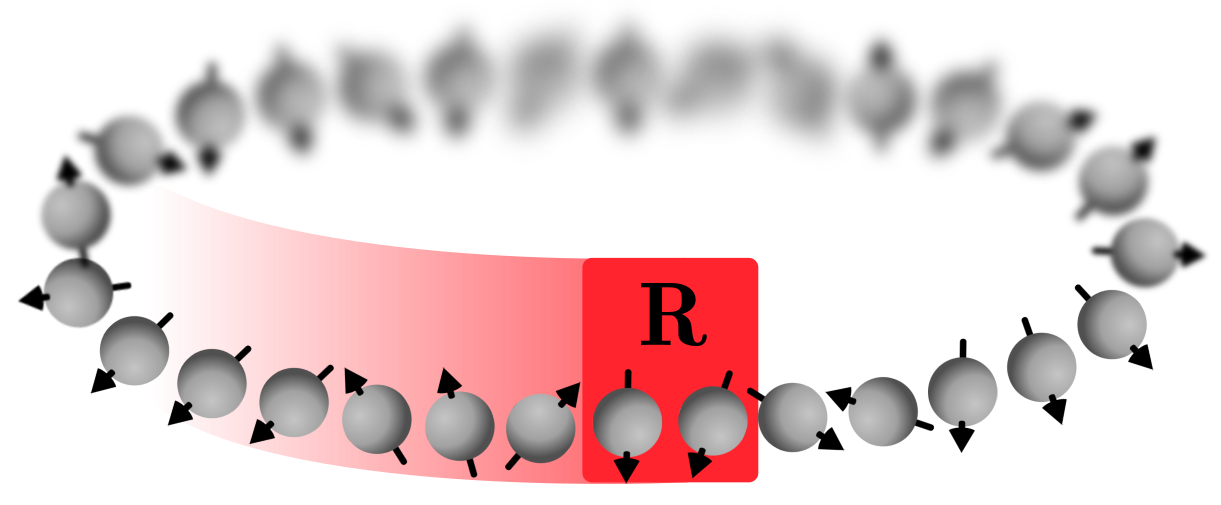}
    }{\caption{Schematic of a quantum ring machine. A fixed unitary $\op R$ is cyclically applied to a ring of qudits until one of the qudits indicates a halting configuration.}\label{fig:qrm-intro}}
    \ffigbox[\textwidth - 6.4cm]{
      \def\W{8}
      \def\w{1}
      \def\h{0.8}
      \def\gx{0.4}
      \def\gy{0.65}

      \begin{tikzpicture}[
          box/.style = {fill=white,draw = black,rounded corners=.5,line width=.8pt},
          snake/.style = {},
          snake2/.style = {decorate, decoration = {zigzag, amplitude = 4pt, segment length = 25pt}}
        ]

        \clip [snake] (-0.2,0.5*\h) rectangle (\W+0.2,-3.5*\h);
        \clip [snake2] (.2,2*\h) rectangle (\W-.2,-5*\h);

        \newcommand{\gate}[2]{
          \path[box] (#1*\w-\gx,-#2*\h-0.5*\h-\gy) rectangle (#1*\w+\gx,-#2*\h-0.5*\h+\gy);
          \node at (#1*\w,-#2*\h-0.5*\h) {$\op R$};
        }

        \foreach \i in {0,...,4} {
          \draw[line width=.8pt,double] (0,-\i*\h-.1) -- (\W,-\i*\h-.1);
          \draw[line width=.8pt] (0,-\i*\h+.1) -- (\W,-\i*\h+.1);
        }

        \gate{0}{3}
        \gate{0}{-1}
        \gate{1}{0}
        \gate{2}{1}
        \gate{3}{2}
        \gate{4}{3}
        \gate{4}{-1}
        \gate{5}{0}
        \gate{6}{1}
        \gate{7}{2}
        \gate{8}{3}
        \gate{8}{-1}
      \end{tikzpicture}
    }{\caption{Ring machine's evolution implementing a uniform quantum circuit. Double lines carry classical while single lines carry quantum information. Classical wires encode where the next quantum gate from a small universal set will be applied.}\label{fig:qrm-evo-intro}}
  \end{floatrow}
\end{figure}

We introduce a new computational model which allows us to circumvent the direct use of complicated Turing machines or quantum circuits. The so-called \emph{quantum ring machine} consists of a cyclic ring of \emph{qudits} (i.e.\ $d$-dimensional quantum systems) and a unitary $\op R$ describing a head that acts on two qudits at a time. At each time-step, the head moves in the same direction along the ring and cyclically acts on adjacent cells. We give the following definition (see \cref{sec:qrm} for more details).

\begin{definition}[Quantum ring machine]
  A \emph{quantum ring machine} consists of a ring of $n$ qudits, each of dimension $d$, and a unitary operator $\op R$ acting on a pair of qudits.
  The $n$-qudit ring is initialized in state $\qi$ and the machine proceeds by applying $\op R$ cyclically to pairs of adjacent qudits along the ring---see figure~\ref{fig:qrm-evo-intro}---until one of the qudits indicates halting: its reduced density matrix has support completely inside a certain halting subspace $\hsf$, while the reduced states of all qudits up to this point were orthogonal to $\hsf$.
\end{definition}

To show that a quantum ring machine is computationally equivalent to a uniform family of quantum circuits, we encode a classical Turing machine's transition function into $\op R$, where the internal states, including the Turing machine's halting flag, are stored as a classical information on the ring. Such ring machine can be used to write out and execute a quantum circuit ``on-the-go'': it is universal for whichever uniform circuit class is encompassed by its allowed runtime. Quantum ring machines thus bridge the gap between circuits, which are particularly simple to specify locally but have a complex global structure, and Turing machines, which are difficult to specify locally due to a possibly large number of internal states, but have a straightforward global evolution as the tape only changes in at most one location at each step. A schematic of the ring machine can be found in \cref{fig:qrm-intro,fig:qrm-evo-intro}.

The ring machine's simple mechanism allows its evolution to be described by a set of local quantum rewriting rules. These rules operate at a \emph{physical} level while the ring machine operates at a \emph{logical} level---each application of ring machine's head $\op R$ on a pair of logical qubits is implemented by a sequence of physical operations acting on a much larger number of qubits. At any given time the ring machine's head is positioned on a specific pair of logical qubits, and after each application of $\op R$ this location is updated in a similar fashion as Turing machine's head---it is shifted either up or down along the ring by one position. Overall, $\op R$ is a large controlled unitary that acts at a given logical location only if the ring machine's internal state---stored as a classical bit on the physical tape---is in an active configuration.

\paragraph{QMAEXP hardness of \TILH[2].}

The final proof of \cref{th:hardness-intro} is based on the following lemma.
\begin{lemma}\label{lem:qts-special-intro}
  There exists a \BQEXP-universal quantum Thue system with 39 symbols, 3 of which are quantum, with attached Hilbert space $\sset C^2$ and 2-local rules.
\end{lemma}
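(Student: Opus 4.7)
The plan is to construct an explicit quantum Thue system that step-for-step simulates a quantum ring machine (QRM), and then invoke the QRM's BQEXP-universality established above. Since a QRM is simply the cyclic application of one two-qudit head unitary $\op R$ to a ring of qudits, with all classical driving logic (including the universal Turing machine that produces the gate descriptions fed into $\op R$) folded into $\op R$ itself, it suffices to lay the ring out as a string over the Thue alphabet and to implement (a) a classical head marker that walks around the string cyclically, and (b) the single quantum rule carrying $\op R$ that fires whenever the head aligns with a pair of qubit symbols.

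I would assign the three quantum symbols as follows: one symbol represents an idle qubit on the ring, and the remaining two tag a qubit as currently sitting under the head, with the distinction between the latter two encoding the direction of head motion. Each quantum symbol carries attached Hilbert space $\mathbb{C}^2$, so a single 2-local rule rewriting two adjacent quantum symbols can attach a unitary on $\mathbb{C}^2\otimes\mathbb{C}^2$, which is exactly the arity of $\op R$. The remaining budget of up to $36$ classical symbols is spent on tape-end and boundary markers, head-propagation and turn-around markers, halting flags, and---most expensively---the tape alphabet and internal states of a small universal Turing machine needed inside the QRM to write out the gate description.

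With the alphabet fixed, I would then write every rule explicitly in the form $ab\to cd$ with attached unitary (the identity for pure classical propagation, $\op R$ or a fixed constant two-qubit unitary for the rule that advances the head across two quantum symbols), and verify four properties: every rule is 2-local and length-preserving; every rule preserves $\qn{s}$; the rewrite graph, viewed as a unitary labelled graph via the dictionary built earlier, simulates the QRM step for step; and the total symbol count is exactly $39$ with $3$ quantum. BQEXP-universality then follows from that of the QRM, since a faithful step-for-step simulation preserves acceptance and halting.

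The main obstacle is the tight symbol budget. Packing a universal classical Turing machine driver together with all head-propagation, turn-around, and halting machinery into only $36$ classical symbols using \emph{strictly} 2-local rewrites is a delicate combinatorial exercise: 2-local rules can only mutate two neighbouring cells at a time, so every piece of non-local control has to be carried by chains of pointer-moving rules carefully tagged to forbid illegal rewrites. A secondary concern is guaranteeing that the resulting unitary labelled graph satisfies the simplicity hypothesis of \cref{th:simple-semiclassical-intro}---every loop of unitaries composes to $\1$---so that the spectral machinery developed earlier can be applied downstream; this will be arranged by keeping the classical component of the evolution deterministic and by ensuring that any branching in the rewrite graph merges back consistently on both the classical and quantum labels.
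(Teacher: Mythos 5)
There is a genuine gap in your plan, centred on the claim that you can ``lay the ring out as a string over the Thue alphabet'' and encode the QRM head $\op R$ as ``a single 2-local rule'' acting on $\sset C^2\otimes\sset C^2$. The head $\op R$ of the QRM constructed in \cref{prop:bqpqrm} acts on two \emph{qudits} of local dimension $d = 2\cdot|\Gamma|\cdot|\set F|$ where $\Gamma = \set Q\times\Sigma$ encodes a Turing machine's state and tape symbol; it is not a two-qubit unitary, so it cannot be attached to a 2-local rule of a QTS whose quantum part is $\sset C^2$. You try to compensate by proposing to spend the remaining classical budget on ``the tape alphabet and internal states of a small universal Turing machine,'' but that is precisely what cannot fit: even a modest reversible universal TM, together with head-propagation, turn-around, and boundary machinery for strictly 2-local rewriting, vastly overshoots $36$ classical symbols, and there is no mechanism in your sketch for the Thue system to know \emph{which} verifier it is supposed to run, since the only problem-dependent parameter available under translational invariance is the chain length $N$.

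The paper avoids this by never encoding $\op R$ or the TM directly. It first rewrites $\op R$ as a circuit over a fixed six-element instruction set (\toffoli, controlled unitary, \swap, ancilla check, halt, revert) via Solovay--Kitaev, so that applying $\op R$ becomes a matter of executing a short gate program. It then \emph{bootstraps} that program from the chain length: a base-$6$ counter plus a unary counter translate $N$ into an $O(\log_6 N)$-symbol program string on the left of the tape, while the right of the tape holds the classical and quantum data cells of the ring. The alphabet budget is spent on the counter and ``wheelbarrow'' transport machinery (sweeper, ghosts, incrementer, carriers, box, etc.), not on TM states or on $\op R$. This indirection is the core idea your proposal is missing, and without it the symbol count cannot be brought anywhere near $39$. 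Your final paragraph about ensuring simplicity via deterministic classical evolution is directionally right but understated: the Wheelbarrow deliberately introduces branching (ghosts, premature phase transitions, backward carriers), and the paper has to show separately (\cref{lem:wheelbarrow-simple}, \cref{cor:history-state-size}) that these branches are dead ends or merge consistently, rather than avoiding branching altogether.
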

We prove this by writing out a quantum Thue system which executes a \BQEXP-universal quantum ring machine. The quantum Thue system makes heavy use of the new possibilities of ambiguous replacement rules, which allow the history state path to branch. For the \QMAEXP hardness proof itself we combine this Thue system with a series of local penalty terms, which allow us to single out the history state as lowest-energy ground state for any encoded \yes instance.

Furthermore, we prove that the quantum Thue system has a simple history state in the sense of \cref{th:simple-semiclassical-intro}, which allows us to analyse the spectrum of the resulting Hamiltonian. More specifically, we prove a variant of \emph{Kitaev's geometrical lemma} (cf.\ \cref{lem:kitaev-graphs,lem:kitaev-ulgs}) which facilitates the spectral analysis of Hamiltonians that are sums of a unitary labelled graph Hamiltonian and local projectors. This finally allows us to prove our main result, \cref{th:hardness-intro}, that \textsc{2-Local-Hamiltonian} is \QMAEXP-hard, even for translationally-invariant nearest-neighbour interactions between spins of local dimension 42.

For completeness, we also want to give a brief overview over the family of hard QTS instances that we construct, but---as mentioned before---the \QMAEXP hardness proof does not depend on the precise workings of it; assuming that \cref{lem:qts-special-intro} can be proven, \cref{th:hardness-intro} stands independently.

Treating the Hilbert space of the \TILH[2] problem as a physical tape of length $n$---some symbols quantum, some classical---we write a set of transition rules to perform the following steps that simulate the quantum ring machine.

\begin{figure}[t]
  \centering
  \subfigure{
    \includegraphics[width=0.47\linewidth]{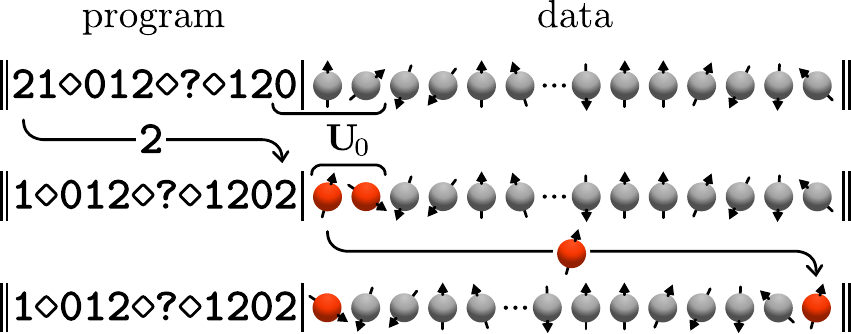}
  }
  \subfigure{
    \includegraphics[width=0.47\linewidth]{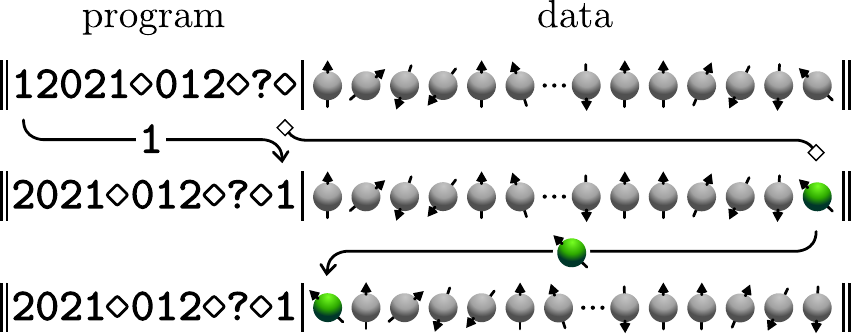}
  }
  \caption{
    Illustration of \emph{Turing's wheelbarrow} construction (see \cref{sec:wheelbarrow} for more details). It consists of a tape that stores a program string on the left- and data qubits on the right-hand side. Two types of actions are supported: application of a quantum gate (left figure) and rewinding of the tape (right figure). The rightmost program bit always indicates the next action. For example, {\${:0}} indicates that a unitary gate $\op U_0$ should be applied to the two leftmost data qubits, and the ring of qubits should then be cyclically rotated one position to the left. On the right, the action of the special symbol {\$r} is depicted: it signals the rightmost qubit to move back to the left end of the tape. After each action the program string is cyclically rotated one position to the right.
  }
  \label{fig:wheelbarrow-gates-intro}
\end{figure}
\begin{enumerate}
  \item As in the construction by Gottesman and Irani~\cite{Gottesman2009}, we use a counter to translate the chain length $n$ into a \emph{program} string of length $O(\log n)$ on the left hand side of the chain, while on the right hand side we store the physical data qubits, i.e.\ the ring of qubits our ring machine is executed on.
  \item The program on the left hand side contains a physical-level description of a quantum circuit (over a small, finite, universal gate set) for implementing one step of the quantum ring machine, i.e.~one application of the ring machine's head $\op R$. The program's rightmost bit always indicates the next gate in the circuit, and this gate is always applied to the two leftmost data qubits on the physical ring (see \cref{fig:wheelbarrow-gates-intro}).
  \item Using the two types of basic commands---``apply gate'' and ``rewind tape''---shown in \cref{fig:wheelbarrow-gates-intro}, the quantum circuit implementing $\op R$ can be executed cyclically on the physical data qubits, some of which are initialized to ancillary $\ket 0$'s to be used in the computation.
  \item The computation runs until a certain internal classical counter (stored on the ring) terminates. In our construction, we explicitly encode transitions for the gates \swap, \toffoli and a controlled quantum-universal unitary; since \swap and \toffoli are also universal for classical computation, the classical control machinery in the ring machine's head $\op R$ (i.e.\ the Turing machine used to write out the quantum circuit) can be executed exactly (without error). This means that the computation will halt deterministically (as otherwise there could be some overlap with a non-halting state). The transition rules for applying a gate as in \cref{fig:wheelbarrow-gates-intro} then have another control gate which only proceeds if the data bit to its right is in a specific configuration, terminating the machine's execution otherwise.
  \item The length of the chain is chosen so that the program encodes a quantum ring machine equivalent to a \BQEXP verifier circuit. It discriminates between \yes and \no instances of the corresponding \QMAEXP language depending on whether the ring machine accepts or rejects, and a special symbol in the program description allows us to locally penalize a wrong initialization of ancillas and a \no output of the computation.
\end{enumerate}

Our construction is universal in the sense that it can be used to implement an arbitrary quantum computation without the need to increase the local dimension (in the same spirit as a universal Turing machine can implement any computation without the need to increase the number of internal states). Since we leave parts of the input unconstrained, we conclude from \BQEXP-completeness of these instances that they can be used as a \QMAEXP verifier, finalizing our claims.

\subsection{Structure of the Paper}
We summarize several standard definitions in \cref{sec:prelims}. In \cref{sec:qrm}, we define the aforementioned quantum ring machine and show that it is indeed Turing-complete for quantum computation. \Cref{sec:qts} formalizes the notion of quantum replacement rules and introduces the model of quantum Thue systems. \Cref{sec:wheelbarrow} contains a constructive proof of a universal quantum Thue system, and \cref{sec:hardness} combines everything into our main hardness result.

\section{Preliminaries}\label{sec:prelims}
\subsection{Reversible Turing Machines}

We give the following standard definition of a (non-deterministic) Turing machine (for more background on Turing machines, see chapter~8 of~\cite{Hopcroft1979}).

\begin{definition}[Turing machine]
  A \emph{Turing machine}---or TM for short---is a triple $(\set Q,\Sigma,\delta)$, where $\set Q$ is a finite set of internal states containing a distinct initial and halting state $q_0$ and $q_f$, respectively, and $\Sigma$ is a finite set of tape symbols containing a designated blank symbol $0$. Let $\set D := \{\mathrm{left},\mathrm{right}\}$ be the two possible movement directions of the TM's head. Then each element of the transition set $\delta \subseteq \set Q \times \Sigma \times \Sigma \times \set D \times \set Q$ is a quintuple of the form $(q,s,s',d,q')$, which means that if the Turing machine reads a symbol $s$ under its head while in state $q$, it overwrites the symbol by $s'$, moves the head in direction $d \in \set D$ and transitions to state $q'$. At the beginning of the computation, the TM's initial state is $q_0$ and the tape is initialized to all $0$s, except for a finite block of consecutive cells containing the input. The machine \emph{halts} once its internal state is $q_f$, for which there is no forward transition.
\end{definition}

As we aim to implement TMs using quantum mechanics, we need them to be deterministic and reversible. The following is based on definition~10 from~\cite{Mor08}.

\begin{definition}[Deterministic and reversible Turing machine]\label{def:DetRevTM}
  Consider a Turing machine $(\set Q,\Sigma,\delta)$, and let $(q_1,s_1,s'_1,d_1,q'_1)$ and $(q_2,s_2,s'_2,d_2,q'_2)$ be any two distinct quintuples in $\delta$. This TM is
  \begin{itemize}
    \item \emph{deterministic} if $(q_1 = q_2) \, \Longrightarrow \, (s_1 \neq s_2)$,
    \item \emph{reversible} if $(q'_1 = q'_2) \, \Longrightarrow \, (s_1' \neq s_2') \land (d_1 = d_2)$.
  \end{itemize}
\end{definition}

The first condition of \cref{def:DetRevTM} rules out the possibility that $q_1 = q_2$ and $s_1 = s_2$, meaning that the current TM's state and tape symbol should unambiguously determine the rest of the transition. Similarly, the second condition rules out the possibility that $q'_1 = q'_2$ and $s'_1 = s'_2$, as well as the possibility that $q'_1 = q'_2$ and $d_1 \neq d_2$, meaning that the reverse transition also is uniquely determined by the current state and tape symbol, and that the direction of the TM's head movement in reverse is uniquely determined by the current state.


For a deterministic TM, one can regard $\delta$ as a partial function, namely $\delta: \set Q \times \Sigma \to \Sigma \times \set D \times \set Q$, since all combinations of internal state $q$ and tape symbol $s$ have at most one forward transition. For a reversible TM, $\delta$ is injective since all combinations of internal state $q'$ and tape symbol $s'$ have at most one backwards transition (whenever such transition exists, it uniquely determines the head movement direction $d$ backwards). In fact, according to \cref{def:DetRevTM}, each state of a reversible TM can be entered only from one direction (this property is referred to as \emph{unidirectionality} in~\cite{BV97}). In other words, it is sufficient to know only the TM's current state (as opposed to both the state and the tape symbol) to answer the question ``From which direction did the TM's head arrive?''.

Due to unidirectionality, it is often natural to restrict the range of $\delta$ to $\set Q \times \Sigma$. In fact, the transition function $\delta$ of a deterministic reversible Turing machine can be replaced by a permutation matrix on $\set Q \times \Sigma$ without affecting the TM's behaviour. For our convenience, we state this observation more formally (see also cor.~B.2 and thm.~4.2 in~\cite{BV97}).

\begin{lemma}\label{thm:permutation}
  For any deterministic reversible Turing machine $(\set Q,\Sigma,\delta)$, the partial transition function $\delta$ can be replaced by a pair $(\op T_\delta, d)$, where $\op T_\delta$ is a permutation matrix on $\set Q \times \Sigma$ and $d: \set Q \to \set D$ is a function that determines, for each internal state $q \in \set Q$, the direction from which the TM's head arrived in $q$. If we update the TM's internal state and the current tape symbol according to $\op T_\delta$, and then move the TM's head in the direction \emph{opposite} to $d(q')$, where $q'$ is the updated state, the behaviour is identical to the original transition function $\delta$.
\end{lemma}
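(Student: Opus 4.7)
The plan is to unpack the two conditions in \cref{def:DetRevTM} and show that, taken together, they are precisely what is needed to recast $\delta$ as a permutation together with a direction function. First I would regard $\delta$ as a partial function $\tilde\delta\colon \set Q \times \Sigma \rightharpoonup \set Q \times \Sigma$ sending $(q,s) \mapsto (q',s')$, ignoring the direction coordinate for the moment. Determinism says $\tilde\delta$ is well defined: for each $(q,s)$ there is at most one quintuple in $\delta$ beginning with $(q,s)$. Reversibility says $\tilde\delta$ is injective: if $\tilde\delta(q_1,s_1) = \tilde\delta(q_2,s_2) = (q',s')$, then the clause $(q'_1 = q'_2) \Rightarrow (s'_1 \neq s'_2)$ forces $(q_1,s_1) = (q_2,s_2)$.

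Next I would invoke the standard fact that any partial injection on a finite set extends to a permutation: since both the domain and the codomain of $\tilde\delta$ are subsets of the same finite set $\set Q \times \Sigma$ of equal size, I can fill in the undefined entries arbitrarily to obtain a bijection $\op T_\delta$ on $\set Q \times \Sigma$. The added transitions do not affect the behaviour, because the original TM halts exactly when the current $(q,s)$ lies outside the domain of $\delta$; I simply retain this halting convention (or, equivalently, arrange that $q_f$ is a fixed point of $\op T_\delta$).

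For the direction function, I would appeal to the unidirectionality clause $(q'_1 = q'_2) \Rightarrow (d_1 = d_2)$. This guarantees that every incoming transition into a given state $q'$ uses the same head-movement direction, so I can set $d(q')$ to be the direction from which the head arrived in $q'$---namely the opposite of the direction stored in any quintuple ending in $q'$---with an arbitrary choice for states that are never entered. By construction, moving the head in the direction opposite to $d(q')$ after applying $\op T_\delta$ reproduces the shift prescribed by the original quintuple, and combined with the previous step the simulation of any step of the original TM agrees exactly with the new $(\op T_\delta,d)$-dynamics.

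The only real obstacle I foresee is the bookkeeping between the two sign conventions for $d$ (``direction moved'' versus ``direction arrived from''), together with verifying that neither the arbitrary extension of $\tilde\delta$ to a full permutation nor the arbitrary values of $d$ on unreached states can alter a computation started from a valid initial configuration. Both are straightforward once the halting convention is spelled out, so no deeper machinery should be needed beyond the two clauses of \cref{def:DetRevTM}.
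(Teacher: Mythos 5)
Your proof is correct and follows essentially the same route as the paper's: restrict $\delta$ to a partial map $\set Q \times \Sigma \rightharpoonup \set Q \times \Sigma$, note that determinism and reversibility make it a partial injection, extend it arbitrarily to a permutation, and read off $d$ from the unidirectionality property. The paper's proof is a terser version of the same argument; your added care about the halting convention, the arbitrary extension, and the sign convention for $d$ is sound but not a different method.
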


\begin{proof}
  The function $d$ is readily obtained because of unidirectionality. A blueprint of $\op T_\delta$ is obtained by restricting the range of $\delta$ to $\set Q \times \Sigma$ and describing $\delta$'s action on the elements of this set by a binary matrix. Since the TM is deterministic and reversible, this matrix contains at most one entry $1$ in each row and column, so it can be easily extended to a permutation matrix.
\end{proof}

From now on we will consider only deterministic Turing machines and implicitly assume that they are reversible---this is justified by the following result due to Bennett~\cite{Ben73} (see \cite{Per13,Mor08} for more background on reversible computation).

\begin{theorem}[Bennett~\cite{Ben73}]\label{rem:reversible-tm}
  Any deterministic TM can be made reversible with at most polynomial overhead in terms of space and time.
\end{theorem}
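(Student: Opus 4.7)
The plan is to invoke Bennett's classical three-phase construction. Given a deterministic TM $M = (\set Q, \Sigma, \delta)$ halting in time $T(n)$ on inputs of length $n$, I would construct a reversible TM $M'$ operating on three tracks (equivalently three tapes): a \emph{work} track, a \emph{history} track, and an \emph{output} track. The high-level structure is that $M'$ first simulates $M$ forward while logging every decision it makes, then copies out the result, then un-simulates $M$ backwards to erase the log—thereby achieving $x \mapsto (x, f(x))$ with no discarded information.

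In the first phase, $M'$ simulates $M$ step by step on the work track. Whereas $\delta$ may be non-injective (different $(q,s)$ pairs producing the same $(q',s')$), $M'$ appends to the history track, at each step $t$, a unique label of the quintuple $(q_t, s_t, s'_t, d_t, q'_t) \in \delta$ that was just fired. The combined transition $(q, s, \mathtt{blank}) \mapsto (q', s', \mathtt{label})$ is injective, because from $(q', s', \mathtt{label})$ one can look up $(q,s,d)$ and undo the work-track update. By \cref{thm:permutation} this suffices to define a reversible TM. After at most $T(n)$ steps, $M$ reaches $q_f$; the work track holds $f(x)$ and the history track holds $O(T(n))$ labels.

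In the second phase, $M'$ copies $f(x)$ from the work track to the (blank) output track using cell-wise XOR. This operation is its own inverse provided the target is initially blank, hence reversible. In the third phase, $M'$ executes the first phase \emph{backwards}, reading the history track right-to-left, undoing the logged transition on the work track, and erasing the label. This restores the work and history tracks to their initial configurations while the output track still carries $f(x)$. The composition maps $x \mapsto (x, f(x))$ and is reversible because each phase is and because the phase boundaries are deterministic in both directions. Space is dominated by the history track at $O(T(n))$ and the runtime is $3T(n) + O(T(n))$, both polynomial in $T(n)$.

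The main obstacle is ensuring reversibility \emph{at the seams} between phases: the machine must determine from its current configuration which phase is active, and the transitions that trigger end-of-phase (notably the step where $M$ first enters $q_f$ and the copy phase starts, and the step where the history track is exhausted) must themselves be injective. I would handle this by introducing a small set of dedicated phase-marker internal states and by using an end-of-tape sentinel so that the head positions and track contents at each phase boundary are uniquely reconstructible from the configuration alone. With those markers in place, the overall transition function is a partial injection on configurations and can be completed to a permutation as in \cref{thm:permutation}, giving the claimed reversible simulation with polynomial space and time overhead.
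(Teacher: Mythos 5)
The paper states this as a known result of Bennett and cites \cite{Ben73} without reproving it, so there is no proof in the paper to compare against. Your sketch is a faithful reconstruction of Bennett's standard compute--copy--uncompute argument: forward simulation while logging the fired transition on a history track (making each step injective), a reversible XOR-style copy of the output to a fresh track, and a backward pass that consumes the history to restore the work tape, with phase-marker states and sentinels to make the phase boundaries themselves injective. One small caveat: \cref{thm:permutation} as stated in the paper goes in the opposite direction---it \emph{assumes} a deterministic reversible TM and packages its $\delta$ as a permutation---so citing it to conclude that injectivity of the combined transition suffices for reversibility is slightly backwards as a literal appeal, though the underlying observation (a partial injection on configurations extends to a reversible transition function) is of course correct and is exactly what Bennett's construction delivers. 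With that citation adjusted, the proposal is sound.
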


\subsection{Quantum Complexity Classes}\label{sec:classes}
In this section, we formally define the quantum complexity classes \BQP, \BQEXP, \QMA and \QMAEXP in terms of the circuit model, and refer reader to~\cite{BV97,Wat12,VW16} for more details on quantum computational complexity.


In what follows, we fix some finite universal set of 2-qubit quantum gates, such as $\{\hadamard,\allowbreak\cnot, R(\pi/4)\}$---see~\cite[ch.~4.5]{Nielsen2010}. We first define a uniform family of quantum circuits over this gate set.

\begin{definition}[Uniform family of quantum circuits]\label{def:circuits}
  Let $f: \sset N \to \sset N$ be a function and $(C_n)_{n\in\sset N}$ be a family of quantum circuits where each $C_n$
  \begin{itemize}
    \item acts on $n$ qubits and has a distinct output qubit,
    \item requires at most $f(n)$ additional ancilla qubits initialized in $\ket{0}$,
    \item contains at most $f(n)$ gates from our universal set.
  \end{itemize}
  We say that $(C_n)_{n\in\sset N}$ is \emph{$f(n)$-uniform} if there exists a TM that on input $1^n$ produces an explicit description of $C_n$ in less than $f(n)$ steps.
\end{definition}

Let $\Sigma$ be a finite set (\emph{alphabet}), and let $\Sigma^n$ and $\Sigma^* := \bigcup_{n \geq 0} \Sigma^n$ denote the sets of all length-$n$ and all finite-length strings over $\Sigma$, respectively. A \emph{promise problem} over alphabet $\Sigma$ is a pair $\PromP = (\PromP_\yes, \PromP_\no)$ such that $\PromP_\yes \cap \PromP_\no = \varnothing$, where $\PromP_\yes, \PromP_\no \subseteq \Sigma^*$ are the sets of input strings corresponding to \yes and \no instances, respectively. We will sometimes write $l \in \PromP$ meaning that $l \in \PromP_\yes \cup \PromP_\no$.

\begin{definition}[Complexity class \BQ]\label{def:bqp}
  A promise problem $\PromP=(\PromP_\yes,\PromP_\no)$ is in \BQ, \emph{bounded-error quantum $f(n)$-time}, if there exists an $f(n)$-uniform family of quantum circuits $(C_n)_{n\in\sset N}$ such that
  \[
    \Pr(C_n(s)=\yes)\ge \frac23 \mt{for $s\in\PromP_\yes$\quad and}
    \Pr(C_n(s)=\yes)\le \frac13 \mt{for $s\in\PromP_\no$,}
  \]
  where $C_n(s)$ denotes the random variable obtained by executing $C_n$ on input $s \in \PromP$ of size $\abs{s} = n$ and measuring the output qubit (the encoding of $s$ as well as the measurement are performed in the computational basis).
\end{definition}

We introduced the notation \BQ to emphasize the fact that the definitions of classes \BQP and \BQEXP are essentially the same up to the bounding function:
\[
  \BQP:=\bigcup_{k\in\sset N}\BQ[n^k] \mt{and} \BQEXP:=\bigcup_{k\in\sset N}\BQ[\exp(n^k)].
\]
Trivially, $\BQP \subseteq \BQEXP$ since a longer runtime can only help.

It is well-known (see~\cite[Prop.~3]{Wat12}) that for \BQP the probabilities of $2/3$ and $1/3$ in \cref{def:bqp} can be exponentially amplified while still remaining in the same complexity class. The same argument works for \BQEXP as well, since we only need a polynomial number of repetitions to achieve the desired amplification.

\begin{fact}[Error-reduction for \BQP and \BQEXP]\label{rem:bqerror}
  For any polynomial $p$, we can assume that $\Pr(C_n(s)=\yes)\ge1-2^{-p(n)}$ for $s \in \PromP_\yes$ and $\Pr(C_n(s)=\yes)\le 2^{-p(n)}$ for $s \in \PromP_\no$ in the definitions of \BQP and \BQEXP.
\end{fact}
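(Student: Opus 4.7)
The plan is to reduce the error by standard parallel repetition plus majority voting, and to track the resource bounds carefully in each of the two cases.

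First, I would fix a promise problem $\PromP \in \BQ[f]$ and a corresponding $f(n)$-uniform family $(C_n)_{n \in \sset N}$ with the default error bound $1/3$ guaranteed by \cref{def:bqp}. Given a target polynomial $p$, I would set the number of repetitions to $k = k(n) := \lceil c \cdot p(n) \rceil$, where $c$ is a universal constant determined by the Chernoff bound below. I then construct a new circuit $C_n'$ that (i) runs $k$ independent copies of $C_n$ in parallel on separate sets of fresh $\ket 0$ ancillas but on the same input $s$, (ii) coherently computes the majority of the $k$ output bits into a fresh ancilla using a reversible classical majority circuit (which can be implemented with $\poly(k)$ \toffoli and \cnot gates from the fixed universal gate set, with further ancillas initialized in $\ket 0$), and (iii) declares the new majority bit to be the output qubit.

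For correctness, I would invoke the standard Chernoff bound: if each of the $k$ independent copies outputs $\yes$ with probability at least $2/3$ on $s \in \PromP_\yes$, then the probability that strictly fewer than $k/2$ outputs are $\yes$ is at most $\ee^{-k/18}$; symmetrically on $s \in \PromP_\no$. Choosing $c$ so that $k/18 \geq p(n) \ln 2$ gives
\begin{equation*}
  \Pr(C_n'(s) = \yes) \geq 1 - 2^{-p(n)} \mt{and} \Pr(C_n'(s) = \yes) \leq 2^{-p(n)}
\end{equation*}
for $s \in \PromP_\yes$ and $s \in \PromP_\no$, respectively. Since the copies use disjoint ancilla registers that remain factorized until the majority circuit combines them coherently in the computational basis, the independence assumption of Chernoff applies.

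Finally, I would check that $C_n'$ still belongs to an $f(n)$-uniform family in each of the two regimes. The total gate count and ancilla count of $C_n'$ are bounded by $k \cdot f(n) + \poly(k) = \poly(p(n)) \cdot f(n)$, and a Turing machine describing $C_n'$ can be obtained by running the original TM for $C_n$ a polynomial number of times, tagging the ancilla indices appropriately, and appending a fixed reversible majority subroutine whose description has size $\poly(p(n))$. For $f(n) = n^k$, this remains polynomial, so $C_n' \in \BQP$; for $f(n) = \exp(n^k)$, multiplying by $\poly(p(n))$ leaves $f(n) = \exp(n^k)$ asymptotically unchanged, so $C_n' \in \BQEXP$. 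The main (though still routine) point to be careful about is exactly this bookkeeping for the \BQEXP case: the error reduction is only essentially free because the overhead is polynomial in $p(n)$, while the time budget is exponential in $n^k$; any attempt to reduce error to, say, $2^{-\exp(n)}$ would fail, which is why the statement of the fact restricts $p$ to be a polynomial.
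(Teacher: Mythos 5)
Your proof is correct and uses the same standard approach the paper is relying on: the paper does not give a self-contained argument but instead cites Watrous (Prop.~3 of~\cite{Wat12}) for \BQP and remarks that the identical argument goes through for \BQEXP because only polynomially many repetitions are needed. Your write-up simply fleshes out what that reference contains (parallel repetition, coherent reversible majority vote into a fresh ancilla, Hoeffding/Chernoff bound) and correctly pinpoints the single non-trivial bookkeeping point for the \BQEXP case, which is exactly the remark the paper makes.
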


Intuitively, \QM is the class of promise problems for which the \yes/\no answers can be verified by a \BQ verifier.

\begin{definition}[Complexity class \QM]\label{def:qma}
  A promise problem $\PromP=(\PromP_\yes,\PromP_\no)$ is in \QM, \emph{$f(n)$-time quantum Merlin-Arthur}, if there exists an $f(n)$-uniform family of verifier quantum circuits\footnote{Here we use a slight variation of \cref{def:circuits}: we also allow for at most $f(n)$ extra input qubits to store the witness state $\rho$ (this is in addition to the $n$ original input qubits and $f(n)$ ancillary qubits that are initialized in $\ket{0}$).} $(C_n)_{n\in\sset N}$ such that
  \begin{itemize}
    \item if $s \in \PromP_\yes$, $\exists$ a witness state $\rho$ on at most $f(n)$ qubits such that $C_n(s,\rho) = \yes$ with probability at least $2/3$. This condition is known as \emph{completeness}.
    \item if $s \in \PromP_\no$, $\forall$ witness states $\rho$ on at most $f(n)$ qubits $C_n(s,\rho) = \yes$ with probability at most $1/3$. This condition is called \emph{soundness}.
  \end{itemize}
\end{definition}

Observe that the witness size is implicitly constrained by the size of the quantum circuit family, cf.~\cref{def:circuits}, e.g.~for \BQP verifiers the witness is poly-sized while for \BQEXP verifiers it can be exp-sized. As before, we define
\[
  \QMA:=\bigcup_{k\in\sset N}\QM[n^k] \mt{and} \QMAEXP:=\bigcup_{k\in\sset N}\QM[\exp(n^k)].
\]
In particular, note that $\QMA \subseteq \QMAEXP$ since a \QMA verifier can be easily promoted to a \QMAEXP verifier. Indeed, while a \QMAEXP verifier gets an exponential-size witness and can run for an exponential amount of time, it does not have to (it can instead discard all witness qubits, except for a polynomial number, and verify them in polynomial time).

Soundness and completeness probabilities for \QMA and \QMAEXP can also be amplified: see theorem~10 in~\cite{Wat12}, section~3.2 of~\cite{VW16}, or lemma~14.1 in~\cite{Kitaev2002} (these techniques were originally devised for \QMA, but they can be easily adapted also for \QMAEXP).

\subsection{Geometrically \texorpdfstring{$k$-Local}{k-Local} Hamiltonians}\label{sec:klocal}

In this section we introduce basic notions relating to local Hamiltonians and formally state the \TILH problem that will play central role. For more background on Hamiltonian complexity, see~\cite{GHLS14,KSV02,Wat12}.

\begin{definition}
  An \emph{$n$-qudit Hamiltonian} is a Hermitian operator $\op H = \op H\ct$ acting on a multipartite Hilbert space $(\sset C^d)^{\otimes n}$ consisting of $n$ systems (qudits), each of \emph{local dimension}~$d$.
\end{definition}

We will label the individual systems by elements of $\set S :=\{1,\ldots,n\}$.  Whenever we talk of a \emph{subset} of systems $\set A \subseteq \set S$, we mean an \emph{ordered tuple} of distinct elements of $\set S$.  If $\op h$ is a $k$-qudit Hamiltonian for some $k \leq n$ and $\set A \subseteq \set S$ is a subset of $\abs{\set A} = k$ systems, we write $\op h_\set A$ to denote the $n$-qudit Hamiltonian that acts as $\op h$ on qudits $\set A$ and trivially (i.e.~as $\1$) on the remaining qudits $\set S \setminus \set A$. We also write $\set A + i \subseteq \set S$ to denote $\set A$ shifted by $i \in \sset N$ positions.

\begin{definition}
  Let $\op H$ be an $n$-qudit Hamiltonian. Then
  \begin{itemize}
    \item $\op H$ is \emph{$k$-local} if
      $\op H=\sum_i \op{h}(i)_{\set A_i}$ with $|\set A_i|\le k\ \forall i$;
    \item $\op H$ is $k$-local and \emph{1D} if each $\set A_i \subseteq \{1,\dotsc,k\} + t_i$ for some shift $t_i$;
    \item $\op H$ is \emph{translationally-invariant} if $\op H = \sum_i \op h_{\set A+i}$ for some $\set A \subseteq \set S$ where $\op h$ is fixed.
  \end{itemize}
  In particular, $\op H$ is a 1D translationally-invariant $k$-local Hamiltonian if $\op H = \sum_i \op h_{\{1,\dotsc,k\}+i}$ for some fixed $k$-qudit Hamiltonian $\op h$.
\end{definition}

Our central problem of interest is deciding the ground energy of 1D translationally-invariant $k$-local Hamiltonians of local dimension $d$. For brevity, we will refer to this as the \TILH{} problem.

\begin{definition}[{\TILH[(k,d)]}]\label{def:khamiltonian}\leavevmode
  Let $\op H = \sum_i \op h_{\{1,\dotsc,k\}+i}$ be a 1D translationally-invariant $k$-local Hamiltonian on a qudit chain of length $n$, where each qudit has local dimension $d$ and $\op h$ is some fixed $k$-qudit Hamiltonian.
  \begin{description}
    \item[Input.] The chain length $n$ and the matrix entries of $\op h$, as well as two real numbers $\alpha$ and $\beta$, all up to $\poly\log n$ bits of precision.
    \item[Promise.] The operator norm of each local term is bounded, $\|\op h\|\le1$, and either $\lmin(\op H)\le\alpha$ or $\lmin(\op H)\ge\beta$, where $\lmin(\op H)$ denotes the smallest eigenvalue of $\op H$ and $\beta - \alpha \geq 1/p(n)$ for some fixed polynomial $p(n)$.
    \item[Output.] \yes if $\lmin(\op H)\le\alpha$, else \no.
  \end{description}
\end{definition}

We emphasize that the input in \cref{def:khamiltonian} is just the description of the $k$-local term $\op h$ and the chain length $n$, not the entire (exponentially-sized) Hamiltonian $\op H$. An equivalent variant of the definition relaxes the norm bound to $\|\op h\|\le\poly n$ and gives a promise that either $\lmin(\op H)\le\alpha$ or $\lmin(\op H)\ge\beta$ for some fixed constants $\beta > \alpha$. We can always rescale the overall Hamiltonian by a polynomial factor to switch between the two definitions.

\begin{theorem}[Kitaev~\cite{KSV02}]\label{rem:inqmaexp}
  \TILH[(k,d)] is in \QMAEXP.
\end{theorem}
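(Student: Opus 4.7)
The plan is to run Kitaev's original $\QMA$ protocol for \textsc{Local Hamiltonian} almost verbatim, with the only difference being that both the witness and the verifier must be allowed to scale polynomially in the chain length $n$ rather than in the input size $m = O(\log n + \poly\log n)$. Since $n = \exp(\Theta(m))$, this is precisely why the problem sits in $\QMAEXP$ rather than in $\QMA$: the natural witness---an $n$-qudit low-energy state---is exponentially large in $m$, but polynomially large in $n$, and a $\QMAEXP$ verifier may accept exactly such a witness and may run for $\exp(O(m^k)) = \poly(n)$ steps.

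Concretely, I would have the prover send an $n$-qudit witness $\rho$ on $(\sset C^d)^{\otimes n}$, which occupies $n \log d = \poly(n)$ qubits---within the budget of \cref{def:qma} when $k$ is chosen large enough. The verifier first rescales the fixed local term $\op h$ to $\op h' := (\op h + \op I)/2$; by the norm promise $\|\op h\| \le 1$ from \cref{def:khamiltonian} this satisfies $0 \le \op h' \le \op I$, and the corresponding global operator is $\op H' = (\op H + (n-k+1)\op I)/2$. The verifier then performs the standard ``measure a random local term'' test:
\begin{enumerate}
\item draw $i$ uniformly from $\{1,\dots,n-k+1\}$, using $O(\log n) = O(m)$ random bits;
\item apply a fixed, constant-size circuit (depending only on the description of $\op h$) implementing the two-outcome POVM $\{\op h'_{\{1,\dots,k\}+i},\, \op I - \op h'_{\{1,\dots,k\}+i}\}$ on the corresponding $k$ qudits of $\rho$;
\item accept iff the outcome corresponds to $\op I - \op h'$.
\end{enumerate}
A direct computation gives $\Pr[\text{accept}] = 1 - \operatorname{Tr}(\op H' \rho)/(n-k+1)$.

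In the \yes case, the ground state of $\op H$ achieves $\operatorname{Tr}(\op H \rho) \le \alpha$, whereas in the \no case every $\rho$ yields $\operatorname{Tr}(\op H \rho) \ge \beta$. The resulting completeness--soundness gap in the acceptance probability is at least $(\beta - \alpha)/\bigl(2(n-k+1)\bigr) \ge 1/\poly(n)$. Since a $\QMAEXP$ verifier has $\poly(n)$ time and may receive an equally large witness, this inverse-polynomial-in-$n$ gap can be amplified to the $2/3$ vs.\ $1/3$ threshold of \cref{def:qma} by a standard parallel repetition argument: the prover submits $\poly(n)$ independent copies of $\rho$, the verifier runs the single-shot test on each copy independently and outputs the majority. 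Marriott--Watrous in-place amplification would work equally well with a single witness copy.

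The only remaining technical obligation is to check $\exp(O(m^k))$-uniformity of the verifier circuit family in the sense of \cref{def:circuits}, which is routine: given the binary encoding of $n$ and the fixed constant-sized matrix $\op h$, a classical Turing machine writes out the $\poly(n)$-gate verifier circuit (index sampler, localized POVM implementation, majority vote) in $\poly(n) = \exp(O(m))$ time. I do not expect a serious obstacle anywhere: no Hamiltonian simulation of $\op H$ is required (which is the reason the random-term protocol is preferable here to a phase-estimation-based verifier), the norm bound $\|\op h\| \le 1$ ensures $\op h'$ is a valid POVM element, and the only thing one must keep honest is the interplay between the two size parameters $m$ and $n$, which matters only when converting the $1/\poly(n)$ gap into a constant one via amplification.
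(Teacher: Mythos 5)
Your proposal is correct and follows essentially the same route as the paper, which simply observes that Kitaev's random-term-sampling verifier (\cite[prop.~14.2]{KSV02}) runs in $\poly(n)$ time and takes a $\poly(n)$-qubit witness, both of which are exponential in the \TILH{} input size $\poly\log n$ and hence fit within the \QMAEXP budget. You have merely unpacked the details (rescaling $\op h$, computing the acceptance probability, amplification, uniformity) that the paper delegates to the cited reference.
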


\begin{proof}
  This does not trivially follow from the inclusion $\QMA \subseteq \QMAEXP$ since the input size for \TILH is just $\poly\log n$. However, Kitaev's \QMA verifier for the standard \textsc{Local Hamiltonian} problem runs in time $\poly n$, which is not polynomial in the input size for \TILH.
  However, the exponential-time verifier of \QMAEXP offsets the logarithmically small input size, so the same random sampling argument as presented for \QMA in e.g.\ \cite[prop~14.2]{KSV02} goes through.
\end{proof}

\subsection{\QMA versus \QMAEXP}\label{sec:qmaqmaexp}
In this section, we clarify why \QMAEXP is the natural class when considering the \textsc{Local Hamiltonian} problem with translationally-invariant interactions on a system of size $n$.  When specifying a $k$-local Hamiltonian $\op H=\sum_{i\in I}\op h_i$, for some set of interactions $I$ with $|I|=\poly n$, we have to specify each term $\op h_i$ individually.  Since the locality $k$ and the local dimension $d$ are constant, the total input size in \cref{def:local-ham-intro} is thus $l=\poly n$ bits.  In contrast, specifying a \emph{translationally invariant} Hamiltonian $\op H$ requires only a logarithmic number of bits: since all local terms $\op h_i$ are identical and do not vary with the system size $n$, the only part of the input that varies with $n$ and can thus be used to encode different instances of the problem is the system size $n$ itself.

A fact which we will discuss in great detail in \cref{sec:qts} is that the gap of a Hamiltonian encoding computation as a superposition of basis states---a so-called \emph{history state construction}---scales inversely polynomially in the runtime, i.e.~$1/\poly(f(l))$ for an input of size $l$ and an $f(l)$-time computation. Contrasting this with the $1/\poly n$ gap required by \cref{def:local-ham-intro,def:khamiltonian} independently---inverse polynomially in the system size, \emph{not} the input size---we conclude the following core differences between \QMA and \QMAEXP in the context of the \textsc{Local Hamiltonian} problem (recall that $n$ denotes the length of the spin chain and $l$ denotes the total size of the input).

\paragraph{\QMA.}
A \BQP verifier has $\poly l$ runtime on an input of size $l$, so the gap of the Hamiltonian that encodes the verifier scales as $1/\poly l$.  This agrees with $1/\poly n$ in \cref{def:local-ham-intro} since $l$ and $n$ are $\poly$-related.  \QMA is thus the natural class for the \textsc{Local Hamiltonian} problem.

\paragraph{\QMAEXP.}
The \BQEXP verifier can run for $\exp\poly l$ steps in the input size $l$. The gap therefore scales as $1/\exp\poly l$, which agrees with $1/\poly n$ in \cref{def:khamiltonian} since $l=\poly\log n$. \QMAEXP is thus the natural class for \TILH.

\paragraph{} 
One fact we have glossed over is that even though each instance of \TILH is translationally invariant, we could still vary the local interaction for each system size $n$. As an example, assume that the Hamiltonian $\op H$ is specified by a single local term,

\[
  \op H:=\sum_{i=1}^{N-1}\op h_{i,i+1}
  \mt{where}
  \op h=\begin{pmatrix}
    1 & 0 \\ 0 & \alpha(m)
  \end{pmatrix}
  \mt{with}
  \alpha(m)=\underbrace{3.1415926\ldots42}_{m\ \text{digits of}\ \pi}.
\]
Then the bit complexity of this input is $O(m)$, and the overall input size---i.e.~the possible information specifiable using the two parameters, the system size $n$ and a varying parameter $m$, is thus $O(m+\log n)$.  In order not to overspecify a \textsc{Local Hamiltonian} or \TILH problem, in each case we have to require both bit precision and size of the input parameter to be of the same order (within polynomial factors). We conclude with the following remark.

\begin{remark}\label{rem:qma-qmaexp-information}
  It is natural to allow $\poly n$ precision of the entries in the local terms of the Hamiltonian when working with \QMA, whereas for \QMAEXP local terms need to be precision-limited by $\poly\log n$.
\end{remark}

However, we want to emphasize that \emph{we will only make use of uniformly scaling local interaction terms}, as in~\cite{Gottesman2009}: this in particular allows us to use coupling constants that scale polynomially in $n$.
We also want to note that the polynomially-closing promise gap of history state constructions might not be the end of the story; at this point in time it is not known whether quantum computation can be encoded into the ground state of a local Hamiltonian for which the promise gap e.g.\ scales in a sub-linear fashion in the number of computational steps (cf.~\cite{Bausch2016a} for an extended discussion).

In~\cite{Cubitt2015}, the authors use a phase-estimation algorithm to extract $O(n)$ bits of information from a fixed Hamiltonian term. However, in their construction, the speed at which the gap closes is irrelevant, as long as it remains nonzero in the gapped phase.

With a $\poly(n)$-bounded computation and a $1/\poly(n)$ gap, however, it is not clear how to do this computation in a translationally-invariant manner. For phase estimation of $m$ bits, one requires gates of precision $O(\exp(-m))$, cf.~\cite{Nielsen2010}---the algorithm depends on being able to perform a unitary $\op U$ an exponential number of times, i.e.~$\op U,\op U^2,\op U^4,\ldots,\op U^{2^{m-1}}$. Without having direct access to all powers of this gate---which we do not, if we require bounded local dimension and locality---we need to approximate them in some way: using the \knownth{Solovay-Kitaev theorem} with the required exponential precision $O(\exp(-m))$ results in a circuit of size $\poly\log 1/(\exp(-m))=\poly m$, which limits the amount of information we can extract to $m=O(\log n)$.

It is clear that this is a problem of bootstrapping. For \TILH, we only have $O(\log n)$ information available to start the computation with, and again it is not known whether there exists a more direct way of extracting a phase without having to go through the \knownth{Solovay-Kitaev theorem}, which only gives a sufficient upper bound to approximate the phase estimation algorithm.

\subsection{Laplacian Matrix and Algebraic Connectivity of Graphs}\label{sec:graphs}

In this section we revise general notation and basic results from graph theory. For more background, consult the standard references \cite{Trudeau1993,Diestel2010} and \cite{Godsil2001} on graph theory and algebraic graph theory, respectively.

\begin{definition}\label{def:graphs}
  An \emph{undirected simple graph} $G=(\set V,\set E)$ consists of a set of vertices $\set V$ and a set of edges $\set E$, each edge being an unordered pair of distinct elements of $\set V$ (in particular, there are no self-loops and no multiple edges). If the number of vertices is $n = |\set V|$ and we label them as $\set V = \{v_1, \dotsc, v_n\}$, then the \emph{adjacency matrix} of $G$ is $\op A(G):=(a_{ij})_{1\le i,j\le n}$ where
  \[
    a_{ij}:=
    \begin{cases}
      1&\{v_i,v_j\} \in \set E, \\
      0&\text{otherwise.}
    \end{cases}
  \]
  We further define the \emph{degree matrix} $\op D(G):=\diag((\deg v_i)_{1\le i\le n})$ where $\deg v_i:=\sum_{j=1}^n a_{ij}$.
\end{definition}

We will usually omit the qualifiers ``undirected'' and ``simple'' in the rest of this paper. We proceed to introduce basic notions and facts from algebraic graph theory~\cite{Godsil2001}.

\begin{definition}[Laplacian matrix]\label{def:Laplacian}
  The \emph{Laplacian matrix} of a graph $G$ is defined as $\Delta(G):=\op D(G)-\op A(G)$.
\end{definition}

Since $\op A(G)$ and $\Delta(G)$ are linear operators on $\sset C^n$ where $n = \abs{\set V}$, it will often be convenient to label the basis vectors of this space by $\ket{v}$ where $v \in \set V$ and denote the space itself by $\sset C^{\set V}$.

\begin{definition}\label{def:lmin}
  We write $\lmin(\op M)$ to denote the smallest eigenvalue of Hermitian operator $\op M$. If $\op M \geq 0$ then $\lminpos{\op M}$ denotes the smallest \emph{non-zero} eigenvalue of $\op M$.
\end{definition}

\begin{claim}\label{lem:lapg}
  For any graph $G$, $\Delta(G)$ is real symmetric. In fact, $\Delta(G)$ is positive semi-definite with smallest eigenvalue $\lmin(\Delta(G))=0$ and corresponding eigenvector $(1,\ldots,1)$.
\end{claim}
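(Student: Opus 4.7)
The plan is to verify the three assertions in order: real symmetry, positive semi-definiteness, and the existence of the claimed zero eigenvector. None of these should present a serious obstacle; this is a classical fact from algebraic graph theory, and the main task is to produce a concise write-up that fits the notation of the paper.

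First, I would observe that $\op A(G)$ is real symmetric: since $G$ is an undirected simple graph, $\{v_i,v_j\}\in\set E$ iff $\{v_j,v_i\}\in\set E$, so $a_{ij}=a_{ji}\in\{0,1\}$. The degree matrix $\op D(G)$ is diagonal with nonnegative real entries, hence also real symmetric. Therefore $\Delta(G)=\op D(G)-\op A(G)$ is real symmetric.

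Next, I would prove positive semi-definiteness by computing the quadratic form of $\Delta(G)$ directly. For any $x\in\sset R^{\set V}$,
\[
  \ip{x,\Delta(G)x}=\sum_i(\deg v_i)x_i^2-\sum_{i,j}a_{ij}x_ix_j=\sum_{i,j}a_{ij}x_i^2-\sum_{i,j}a_{ij}x_ix_j,
\]
where in the last step I used $\deg v_i=\sum_j a_{ij}$. Symmetrising the first sum via $a_{ij}=a_{ji}$ and combining,
\[
  \ip{x,\Delta(G)x}=\frac12\sum_{i,j}a_{ij}(x_i-x_j)^2=\sum_{\{v_i,v_j\}\in\set E}(x_i-x_j)^2\ge 0,
\]
so $\Delta(G)\ge 0$. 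Hence all eigenvalues are nonnegative, i.e.\ $\lmin(\Delta(G))\ge 0$.

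Finally, for the all-ones vector $\mathbf{1}=(1,\ldots,1)\tp\in\sset R^{\set V}$, each row of $\Delta(G)$ sums to zero: the $i$-th row contributes $\deg v_i$ on the diagonal and $-\sum_j a_{ij}=-\deg v_i$ off the diagonal. Therefore $\Delta(G)\mathbf{1}=0$, which shows $0$ is an eigenvalue with eigenvector $\mathbf{1}$. Combined with the previous paragraph, $\lmin(\Delta(G))=0$, completing the proof. The only step one might plausibly get wrong is an index-juggling slip in symmetrising the quadratic form; everything else is mechanical.
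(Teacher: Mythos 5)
Your proof is correct and takes essentially the same route as the paper: your quadratic-form computation $\ip{x,\Delta(G)x}=\sum_{\{v_i,v_j\}\in\set E}(x_i-x_j)^2$ is just the scalar form of the paper's edge decomposition $\Delta(G)=\sum_{\{a,b\}\in\set E}(\ket a-\ket b)(\bra a-\bra b)$, and the real-symmetry and zero-row-sum arguments are identical. The paper additionally mentions diagonal dominance as an alternative justification of positive semi-definiteness, but you did not need it.
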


\begin{proof}
  By construction, $\op A(G)$ and $\op D(G)$ are real symmetric and so is $\Delta(G)$. The second claim follows by observing that $\Delta(G)$ is symmetric and diagonally dominant. Alternatively, $\Delta(G)$ can be expressed as a sum of positive semi-definite matrices:
  \begin{equation}
    \Delta(G) = \sum_{\{a,b\} \in \set E} (\ket a - \ket b) (\bra a - \bra b)\comma
    \label{eq:Lap}
  \end{equation}
  where each term is a principal submatrix of the form
  \[
  \left(\begin{array}{rr}1&-1\\-1&1\end{array}\right)
  \]
  and encodes the Laplacian of a single edge.
  The last statement follows from the fact that the row sums of $\Delta(G)$ are zero.
\end{proof}

\begin{definition}[Algebraic connectivity]\label{def:acon}
  The second smallest eigenvalue of the Laplacian $\Delta(G)$ is denoted with $a(G)$ and called the \emph{algebraic connectivity} of graph $G$. The corresponding eigenvector is known as the \emph{Fiedler vector}.
\end{definition}

\begin{claim}[Fiedler~\cite{Fiedler}]\label{lem:a}
  For any graph $G$, $a(G)>0$ if and only if $G$ is connected.
\end{claim}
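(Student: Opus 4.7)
The plan is to prove both implications by exploiting the sum-of-squares representation of $\Delta(G)$ given in \eqref{eq:Lap}, combined with a block-decomposition argument for the disconnected case. Both directions are short once one has the right viewpoint, so I would organise the proof as two short paragraphs handling the two directions separately.

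For the ``only if'' direction I would argue the contrapositive. Assume $G$ has $k\ge 2$ connected components $C_1,\dotsc,C_k$. After relabelling vertices so that each component is grouped together, no edge crosses between components, and hence the Laplacian becomes block diagonal, $\Delta(G)=\bigoplus_{j=1}^k \Delta(G|_{C_j})$. Applying \cref{lem:lapg} to each block, the indicator vector that is $1$ on $C_j$ and $0$ elsewhere lies in $\ker\Delta(G)$. These $k$ indicator vectors are linearly independent in $\sset C^{\set V}$, so the multiplicity of eigenvalue $0$ is at least $k\ge 2$, which forces $a(G)=0$.

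For the ``if'' direction, I would suppose $G$ is connected and show that $0$ is a simple eigenvalue. Given any $v\in\sset C^{\set V}$ with $\Delta(G)v=0$, the representation \eqref{eq:Lap} yields $\langle v,\Delta(G)v\rangle=\sum_{\{a,b\}\in\set E}\abs{v_a-v_b}^2=0$, so $v_a=v_b$ for every edge $\{a,b\}$. Connectedness implies that any two vertices are joined by a path, along which $v$ must therefore be constant; hence $v$ is constant on all of $\set V$. Thus $\ker\Delta(G)$ is one-dimensional, so $0$ is a simple eigenvalue, and consequently $a(G)>0$.

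I do not expect any real obstacle: both directions reduce to elementary linear algebra once the sum-of-squares identity from \cref{lem:lapg} is in hand. The only mild subtlety worth stating explicitly is to work over $\sset C$ using $\abs{v_a-v_b}^2$ rather than $(v_a-v_b)^2$, which is harmless since the summand is still non-negative; and to justify the block-diagonal step by noting that the absence of cross-edges is exactly what makes $\op A(G)$ (and hence $\Delta(G)$) decompose along the components.
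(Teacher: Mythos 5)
Your proof is correct, and it is the standard argument for this fact. The paper itself does not give a proof of \cref{lem:a}: it states it as a claim and cites Fiedler, so there is no paper proof to compare against. Both directions you give are exactly what one would expect — the disconnected case via the block-diagonal decomposition of $\Delta(G)$ (which is essentially what the paper does later in \cref{lem:laplace-gs} for the multi-component ground space), and the connected case via the quadratic form $\bra{v}\Delta(G)\ket{v}=\sum_{\{a,b\}\in\set E}\abs{v_a-v_b}^2$ forcing $v$ to be constant along edges and hence constant on $\set V$. One tiny point worth stating explicitly, since you invoke it implicitly: by \cref{lem:lapg} $\Delta(G)\geq 0$, so showing the $0$-eigenspace is one-dimensional does indeed imply the second-smallest eigenvalue is strictly positive. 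With that remark made, the argument is complete.
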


\begin{lemma}\label{lem:laplace-gs}
  If $G = G_1 \sqcup \dotsb \sqcup G_m$ is a disjoint union of connected components $G_i$ then $\Delta(G)$ has eigenvalue $0$ with multiplicity exactly $m$ and the next smallest eigenvalue is $\lminpos{\Delta(G)}=\min_ia(G_i)$. Furthermore, $\{\ket{\Phi_1}, \dotsc, \ket{\Phi_m}\}$ with
  \[
    \ket{\Phi_i}:=\frac{1}{\sqrt{|\set V_i|}}\sum_{v \in \set V_i}\ket{v}
  \]
  is an orthonormal basis of the $0$-eigenspace (ground space) of $\Delta$.
\end{lemma}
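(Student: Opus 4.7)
The plan is to reduce the claim to the connected case already covered by \cref{lem:lapg,lem:a}, by exploiting the block-diagonal structure of the Laplacian under a disjoint-union decomposition. First I would observe that since $G = G_1 \sqcup \dotsb \sqcup G_m$ has no edges running between distinct components, the sum representation in \cref{eq:Lap} splits as
\[
  \Delta(G) = \sum_{i=1}^m \sum_{\{a,b\} \in \set E_i} (\ket a - \ket b)(\bra a - \bra b) = \bigoplus_{i=1}^m \Delta(G_i),
\]
where the direct sum is taken with respect to the decomposition $\sset C^{\set V} = \bigoplus_{i=1}^m \sset C^{\set V_i}$. Consequently, the spectrum of $\Delta(G)$ is the multiset union of the spectra of the $\Delta(G_i)$, and the eigenspaces decompose correspondingly.

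Next I would apply \cref{lem:lapg} componentwise: each $\Delta(G_i)$ is positive semi-definite with smallest eigenvalue $0$ realised by the all-ones vector on $\set V_i$. Since each $G_i$ is connected by construction, \cref{lem:a} gives $a(G_i) > 0$, hence $0$ is a simple eigenvalue of $\Delta(G_i)$, with normalised eigenvector exactly $\ket{\Phi_i}$ as defined in the statement. Summing over $i$, the $0$-eigenspace of $\Delta(G)$ has dimension $m$, and the family $\{\ket{\Phi_1}, \dotsc, \ket{\Phi_m}\}$ is orthonormal because the supports $\set V_i$ are pairwise disjoint; this gives the claimed basis.

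Finally, for the gap statement, once the $0$-eigenvalue has been accounted for on each block, the remaining (strictly positive) part of the spectrum of $\Delta(G_i)$ has minimum $a(G_i)$ by \cref{def:acon}. Taking the minimum across all blocks yields
\[
  \lminpos{\Delta(G)} = \min_i \lminpos{\Delta(G_i)} = \min_i a(G_i),
\]
which is the remaining claim. The argument is essentially structural; the only subtlety I anticipate is making the direct-sum decomposition of $\Delta(G)$ fully explicit, since all subsequent steps follow immediately once that is in place.
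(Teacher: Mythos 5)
Your proof is correct and follows essentially the same argument as the paper's: observe that $\Delta(G)$ block-diagonalizes as $\bigoplus_i \Delta(G_i)$, apply \cref{lem:lapg} and \cref{lem:a} componentwise to pin down the kernel of each block as the span of $\ket{\Phi_i}$, and take the minimum of the $a(G_i)$ for the spectral gap. The only superficial difference is that you make the direct-sum decomposition slightly more explicit via \cref{eq:Lap}, while the paper simply asserts it; the logic is otherwise identical.
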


\begin{proof}
  Note that $\Delta = \Delta_1 \oplus \dotsb \oplus \Delta_m$ where $\Delta_i$ is the Laplacian of $G_i$. Recall from \cref{lem:lapg} that $\Delta_i \geq 0$ and $\lmin(\Delta_i) = 0$, hence the $m$ smallest eigenvalues of $\Delta$ are equal to $0$. Since each $G_i$ is connected, $a(G_i) > 0$ for every $i$ by \cref{lem:a}. Hence the multiplicity of eigenvalue $0$ must be $m$ and the $(m+1)$-st smallest eigenvalue of $\Delta$ is positive and equal to $a(G_i)$ for some $i$. Finally, recall from \cref{lem:lapg} that the uniform superposition over all vertices $\set V_i$ of $G_i$ is a $0$-eigenvector of $\Delta_i$, thus $\Delta \ket{\Phi_i} = 0$ for each $i \in \{1, \dotsc, m\}$. There are no further vectors in the ground space of $\Delta$ since $\Delta$ has eigenvalue $0$ with multiplicity $m$.
\end{proof}

\begin{corollary}\label{cor:ground}
  If $\Delta$ is the Laplacian of graph $G = (\set V,\set E)$ and $\set U\subseteq\set V$ is some connected component of $G$, then $\ket{\Phi_{\set U}} := \sum_{v \in \set U} \ket{v} / \sqrt{|\set U|}$ is a $0$-eigenvector of $\Delta$.
  In fact, any 0-eigenvector of $\Delta$ is a linear combination of such vectors.
\end{corollary}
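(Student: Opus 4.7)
The plan is to derive this corollary as an essentially immediate consequence of \cref{lem:laplace-gs}. First I would decompose $G$ into its connected components $G = G_1 \sqcup \dotsb \sqcup G_m$, one of which, say $G_j = (\set U, \set E_j)$, is the given component $\set U \subseteq \set V$. Then \cref{lem:laplace-gs} provides an orthonormal basis $\{\ket{\Phi_1}, \ldots, \ket{\Phi_m}\}$ of the $0$-eigenspace of $\Delta$ consisting of the uniform superpositions over each component, and $\ket{\Phi_{\set U}} = \ket{\Phi_j}$ is by construction one of these basis vectors. This immediately yields $\Delta \ket{\Phi_{\set U}} = 0$, proving the first claim.

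For the second claim, I would observe that the same lemma asserts that $\{\ket{\Phi_1}, \ldots, \ket{\Phi_m}\}$ is a \emph{basis} of the ground space, so every $0$-eigenvector of $\Delta$ can be expanded as a linear combination of the $\ket{\Phi_i}$. Since each $\ket{\Phi_i}$ is, up to normalization, precisely a vector of the form $\ket{\Phi_{\set U_i}}$ with $\set U_i$ the vertex set of the $i$-th connected component, this is exactly the statement required.

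There is no real obstacle here: the corollary is a direct unpacking of the conclusion of \cref{lem:laplace-gs} for a single distinguished component, combined with the basis property of the family $\{\ket{\Phi_i}\}$. The only minor care needed is to remark that the normalization factor $1/\sqrt{|\set U|}$ in the statement matches the one in \cref{lem:laplace-gs}, so $\ket{\Phi_{\set U}}$ is literally a member of the orthonormal basis rather than just a scalar multiple.
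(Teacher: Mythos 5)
Your proposal is correct and follows the intended route: the paper states this as an immediate corollary of \cref{lem:laplace-gs} without a separate proof, and your unpacking of that lemma (identifying $\ket{\Phi_{\set U}}$ with one of the $\ket{\Phi_i}$ and invoking the basis property of $\{\ket{\Phi_1},\dotsc,\ket{\Phi_m}\}$) is exactly the argument one would supply.
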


\begin{claim}[Fiedler~\cite{Fiedler}]\label{ex:linegraph}
  Let $G_L$ be the path graph on $L$ vertices:
  \[
    G_L:=\
    \begin{tikzpicture}[
        every node/.style={draw,circle,minimum size=3pt,inner sep=0pt},
        font=\small,
        every label/.style={draw=none,rectangle},
        text depth=3pt,
        baseline=-1mm
      ]
      \draw (0,0) node (1) [label=$1$] {};
      \draw (1,0) node (2) [label=$2$] {};
      \draw (2,0) node (3) [label=$3$] {};
      \draw (3.5,0) node (4) [label=$L-1$] {};
      \draw (4.5,0) node (5) [label=$L$] {};
      \draw (1)--(2)--(3);
      \draw (4)--(5);
      \draw[dashed] (3)--(4);
    \end{tikzpicture}
    \point
  \]
  Then $a(G_L) = 2 \of{1 - \cos(\pi/L)} \sim \pi^2/L^2$. In particular, $a(G_L) = \Theta(1/L^2)$.
\end{claim}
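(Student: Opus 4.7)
The Laplacian of $G_L$ is the $L\times L$ tridiagonal matrix
\[
  \Delta(G_L)=
  \begin{pmatrix}
    1 & -1 \\
    -1 & 2 & -1 \\
    & -1 & 2 & -1 \\
    & & \ddots & \ddots & \ddots \\
    & & & -1 & 2 & -1 \\
    & & & & -1 & 1
  \end{pmatrix},
\]
so I would diagonalise it explicitly by the standard discrete-cosine-transform ansatz for tridiagonal matrices with Neumann-type boundary (the $1$'s in the corners encode the fact that the endpoint vertices have degree one).

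Concretely, for $k=0,1,\dotsc,L-1$ define $\ket{v_k}\in\sset C^L$ by
\[
  \braket{j}{v_k}:=\cos\of[\big]{\tfrac{(2j-1)k\pi}{2L}}, \quad j=1,\dotsc,L.
\]
I would check by direct substitution that $\Delta(G_L)\ket{v_k}=\lambda_k\ket{v_k}$ with $\lambda_k=2\of{1-\cos(k\pi/L)}$. For interior rows $2\le j\le L-1$ this reduces to the trigonometric identity $-\cos\alpha_{j-1}+2\cos\alpha_j-\cos\alpha_{j+1}=2(1-\cos(k\pi/L))\cos\alpha_j$ with $\alpha_j=(2j-1)k\pi/(2L)$, which follows from the sum-to-product formula. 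For the two boundary rows, the half-integer offset $(2j-1)/2$ is chosen precisely so that the missing ``ghost'' terms cancel the reduction of the diagonal entry from $2$ to $1$; this is the only nontrivial check.

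Having verified these $L$ eigenpairs, orthogonality of the cosine family gives a full eigenbasis, so the spectrum is exactly $\{\lambda_k\}_{k=0}^{L-1}$. Sorting, $\lambda_0=0$ (with eigenvector the all-ones vector, agreeing with \cref{lem:lapg}) and the next eigenvalue is $\lambda_1=2(1-\cos(\pi/L))$, giving the stated formula for $a(G_L)$. Finally, the Taylor expansion $1-\cos x=x^2/2+O(x^4)$ applied at $x=\pi/L$ yields $a(G_L)\sim\pi^2/L^2$, hence $a(G_L)=\Theta(1/L^2)$.

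The only real obstacle is the boundary-row verification; once the correct shifted cosine ansatz is in place, everything else is routine. (Alternative routes I would fall back on if the DCT computation became cumbersome: (i) recognise $\Delta(G_L)=\op J\tp\op J$ where $\op J$ is the $(L-1)\times L$ incidence matrix of the path and read off the singular values of $\op J$; or (ii) apply Fiedler's original variational characterisation of $a(G)$ combined with a monotonicity argument, but the explicit diagonalisation is the cleanest.)
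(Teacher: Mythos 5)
The paper does not actually prove this claim: it is stated as a citation to Fiedler's work and used as a black box. Your proof fills that gap with a correct, self-contained argument. The shifted-cosine ansatz $\braket{j}{v_k}=\cos\of[\big]{(2j-1)k\pi/(2L)}$ is exactly the DCT-II basis, and your reflection (Neumann) argument for the boundary rows is right: the would-be ghost values satisfy $v_{k,0}=v_{k,1}$ and $v_{k,L+1}=v_{k,L}$, so rows $1$ and $L$, which have diagonal $1$ rather than $2$ and only one off-diagonal $-1$, reproduce the interior recurrence. Together with the interior check via the sum-to-product identity this exhibits $L$ orthogonal eigenvectors with eigenvalues $\lambda_k=2(1-\cos(k\pi/L))$, $k=0,\dotsc,L-1$, whence $\lambda_0=0$ and $a(G_L)=\lambda_1=2(1-\cos(\pi/L))\sim\pi^2/L^2$. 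Both fallback routes you mention (reading off singular values of the path's incidence matrix, or invoking Fiedler's variational characterisation) would also work; the DCT diagonalisation is the most direct and matches what the paper implicitly relies on.
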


\begin{corollary}\label{cor:a(G)}
  Let $G$ be a connected graph with $L$ vertices. Then $a(G)=\Omega(1/L^2)$.
\end{corollary}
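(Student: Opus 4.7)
The plan is to give a direct variational proof using the Fiedler vector. By Courant--Fischer applied to $\Delta(G)$, which by \cref{lem:lapg} is positive semi-definite with the uniform vector $\ket{\Phi}=\sum_{v\in\set V}\ket{v}/\sqrt{L}$ in its kernel, we have
\[
  a(G)=\min_{\substack{\|\psi\|=1\\ \psi\perp\ket{\Phi}}}\langle\psi|\Delta(G)|\psi\rangle
  =\min_{\substack{\|\psi\|=1\\ \sum_v\psi(v)=0}}\sum_{\{u,v\}\in\set E}(\psi(u)-\psi(v))^2,
\]
where the second equality uses the decomposition in \cref{eq:Lap}. So it suffices to exhibit, for any unit $\psi$ with $\sum_v\psi(v)=0$, an $\Omega(1/L^2)$ lower bound on the edge-difference sum.

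First, I would pick a vertex $v^*$ with $|\psi(v^*)|\geq 1/\sqrt L$, which exists by pigeonhole because $\sum_v\psi(v)^2=1$. Since $\sum_v\psi(v)=0$, there must also exist a vertex $w^*$ with $\psi(w^*)\psi(v^*)\leq 0$ (otherwise all nonzero entries of $\psi$ would share the sign of $\psi(v^*)$, making the sum of coordinates nonzero). These two facts together give
\[
  |\psi(v^*)-\psi(w^*)|\geq|\psi(v^*)|\geq\frac{1}{\sqrt L}.
\]

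Next, since $G$ is connected, there exists a path $v^*=v_0,v_1,\ldots,v_k=w^*$ in $G$ with $k\leq L-1$. Writing the difference as a telescoping sum and applying the Cauchy--Schwarz inequality yields
\[
  \frac{1}{L}\leq |\psi(v^*)-\psi(w^*)|^2=\Bigl(\sum_{i=0}^{k-1}\bigl(\psi(v_i)-\psi(v_{i+1})\bigr)\Bigr)^2\leq k\sum_{i=0}^{k-1}\bigl(\psi(v_i)-\psi(v_{i+1})\bigr)^2.
\]
Each $\{v_i,v_{i+1}\}$ is an edge of $G$, so the right-hand sum is bounded above by the full edge-difference sum $\langle\psi|\Delta(G)|\psi\rangle$. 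Rearranging gives $\langle\psi|\Delta(G)|\psi\rangle\geq 1/(kL)\geq 1/(L(L-1))$, and taking the minimum over admissible $\psi$ yields $a(G)\geq 1/(L(L-1))=\Omega(1/L^2)$.

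There is no serious obstacle; the only subtle point is the "opposite sign or zero" argument for the existence of $w^*$, which must accommodate the case where $\psi$ vanishes on many vertices. As an alternative route, one can combine edge-monotonicity of $a(\cdot)$ (an immediate consequence of \cref{eq:Lap} and the min-max formula, since adding edges adds positive semi-definite terms to $\Delta$) with Fiedler's classical theorem that the path graph minimizes algebraic connectivity among all trees on $L$ vertices: extracting a spanning tree $T$ of $G$ gives $a(G)\geq a(T)\geq a(G_L)=\Theta(1/L^2)$ by \cref{ex:linegraph}. The direct argument above has the advantage of being self-contained.
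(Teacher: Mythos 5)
Your direct variational argument is correct and takes a genuinely different route from the paper's. The paper's proof is essentially a citation: it invokes Fiedler's result on the monotonicity of algebraic connectivity under adding edges, concludes that the path $G_L$ minimizes $a(\cdot)$ among connected graphs on $L$ vertices, and then plugs in the explicit eigenvalue $a(G_L)=2(1-\cos(\pi/L))$ from \cref{ex:linegraph}. You instead go directly through Courant--Fischer, the quadratic form of $\Delta$ from \cref{eq:Lap}, a pigeonhole-plus-sign argument to locate $v^*,w^*$ with $|\psi(v^*)-\psi(w^*)|\ge 1/\sqrt L$, and a telescoping Cauchy--Schwarz estimate along any connecting path. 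This buys self-containedness: you never need to diagonalize $\Delta(G_L)$ or appeal to Fiedler's extremal characterization of the path, and you obtain an explicit constant $a(G)\ge 1/(L(L-1))$, which is of the same $\Omega(1/L^2)$ order (losing roughly a factor $\pi^2$ relative to the paper's bound). Your treatment of the possible zeros of $\psi$ is fine: since $\psi(v^*)\neq 0$, if all entries shared its sign or vanished, the coordinate sum could not be zero. The alternative route you sketch at the end---spanning tree, edge-monotonicity, then path-minimizes-among-trees---is the more careful version of what the paper's proof relies on; the paper's phrasing jumps straight from edge-monotonicity to the conclusion and implicitly uses the intermediate fact that the path has smallest $a(\cdot)$ among spanning trees, which your phrasing makes explicit.
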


\begin{proof}
  The algebraic connectivity is non-decreasing under adding edges~\cite[corollary~3.2]{Fiedler}, so for any connected graph on $L$ vertices it is lower-bounded by that of a path graph on $L$ vertices, which is given by \cref{ex:linegraph}.
\end{proof}

\subsection{Kitaev's Geometrical Lemma for Graphs}\label{sec:kitaev-graph}

We will need \knownth{Kitaev's geometrical lemma} (see Lemma~14.4 in~\cite{KSV02}) whose proof is reproduced below.

\begin{lemma}[\cite{KSV02}, p.~147]\label{lem:kitaev}
  Using notation from \cref{def:lmin}, assume $\op A,\op B\ge0$ are such that $\lminpos{\op A} \ge \mu$ and $\lminpos{\op B} \ge \mu$, and the null spaces of $\op A$ and $\op B$ have no vector in common other than $0$, i.e.~$\ker\op A\cap\ker\op B=\{0\}$.
  Then $\lmin(\op A+\op B)\ge 2\mu\sin^2\tfrac{\theta}{2}$, where $\theta$ is the angle between subspaces $\ker\op A$ and $\ker\op B$, i.e.
  \[
    \cos \theta
    := \max_{\substack{\ket{\alpha} \in \ker \op A \\ \ket{\beta} \in \ker \op B}}
    \abs{\braket{\alpha}{\beta}}
  \]
  where $\ket{\alpha}$ and $\ket{\beta}$ are unit vectors.
\end{lemma}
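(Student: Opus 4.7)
The plan is to reduce the problem to a bound on the sum of two orthogonal projectors. Let $P_A$ and $P_B$ denote the orthogonal projectors onto $\ker\op A$ and $\ker\op B$ respectively. Since $\op A\ge 0$ with $\lminpos{\op A}\ge\mu$, its spectral decomposition yields the operator inequality $\op A\ge\mu(\1-P_A)$: indeed, on $\ker\op A$ both sides vanish, while on $(\ker\op A)^\perp$ the right-hand side equals $\mu\1$ and is dominated by $\op A$ by assumption. The same reasoning gives $\op B\ge\mu(\1-P_B)$, so summing yields
\[
\op A+\op B\;\ge\;\mu\bigl(2\1-P_A-P_B\bigr).
\]
Consequently $\lmin(\op A+\op B)\ge\mu\bigl(2-\lambda_{\max}(P_A+P_B)\bigr)$, and everything reduces to showing that $\lambda_{\max}(P_A+P_B)\le 1+\cos\theta$.

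For this second step I would invoke the principal-angle (Jordan) decomposition of two subspaces: there exist orthonormal bases $\{\ket{a_i}\}$ of $\ker\op A$ and $\{\ket{b_i}\}$ of $\ker\op B$ with $\braket{a_i}{b_j}=\cos\theta_i\,\delta_{ij}$, where $\theta_i\in(0,\pi/2]$ are the principal angles (they are all strictly positive precisely because $\ker\op A\cap\ker\op B=\{0\}$). Each 2D subspace $\mathrm{span}\{\ket{a_i},\ket{b_i}\}$ is invariant under both $P_A$ and $P_B$, and on it $P_A+P_B$ acts as a $2\times 2$ matrix with eigenvalues $1\pm\cos\theta_i$; on the orthogonal complement of $\mathrm{range}(P_A)+\mathrm{range}(P_B)$ the sum vanishes. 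Hence $\lambda_{\max}(P_A+P_B)=1+\cos\theta$ with $\theta=\min_i\theta_i$, which is exactly the angle defined in the statement via $\cos\theta=\max\abs{\braket{\alpha}{\beta}}$.

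Combining the two steps gives
\[
\lmin(\op A+\op B)\;\ge\;\mu\bigl(2-(1+\cos\theta)\bigr)\;=\;\mu(1-\cos\theta)\;=\;2\mu\sin^2\tfrac{\theta}{2},
\]
using the half-angle identity, which is the desired bound. The main obstacle is the second step—the identification $\lambda_{\max}(P_A+P_B)=1+\cos\theta$. If one wishes to avoid quoting the Jordan decomposition, an equivalent elementary route is to write, for any unit $\ket\psi$, $\bra\psi(P_A+P_B)\ket\psi=\norm{P_A\ket\psi}^2+\norm{P_B\ket\psi}^2$, introduce the best approximants $\ket a\in\ker\op A$ and $\ket b\in\ker\op B$ to $\ket\psi$, and argue via a direct variational / Cauchy--Schwarz computation on the two-dimensional span that the maximum is attained at $\ket a\pm\ket b$ for a pair realizing the principal angle; this is essentially the two-by-two diagonalization above. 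Either way, the trivial-intersection hypothesis is what rules out $\theta=0$ and prevents the bound from collapsing.
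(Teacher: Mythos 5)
Your proof is correct, and the first step is identical to the paper's: the operator inequalities $\op A\ge\mu(\1-P_A)$ and $\op B\ge\mu(\1-P_B)$ reduce everything to showing $\lambda_{\max}(P_A+P_B)\le 1+\cos\theta$. The second step, however, is a genuinely different argument. The paper stays elementary: it takes an eigenvector $\ket\psi$ of $P_A+P_B$ with eigenvalue $\lambda>0$, writes $P_A\ket\psi=a\ket{\psi_A}$, $P_B\ket\psi=b\ket{\psi_B}$ with $a,b\ge0$, deduces $\lambda=a^2+b^2$, and then computes $\lambda^2=\bra\psi(P_A+P_B)^2\ket\psi=a^2+b^2+2ab\Re\braket{\psi_A}{\psi_B}\le\lambda(1+\cos\theta)$, using $2ab\le a^2+b^2$. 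You instead invoke the Jordan/principal-angle decomposition, which diagonalizes $P_A+P_B$ into $2\times2$ blocks with eigenvalues $1\pm\cos\theta_i$ and reads off the maximum directly. Both are valid; the structural route buys a complete description of the spectrum of $P_A+P_B$ (not just a bound on the top eigenvalue), while the paper's route is shorter, self-contained, and avoids quoting a nontrivial decomposition theorem. One small imprecision in your accounting: you say that $P_A+P_B$ vanishes on the orthogonal complement of $\operatorname{range}(P_A)+\operatorname{range}(P_B)$, which is true, but when $\dim\ker\op A\ne\dim\ker\op B$ there are leftover principal vectors of the larger kernel (not paired in any $2$D block) on which $P_A+P_B$ has eigenvalue $1$, not $0$. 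This does not affect $\lambda_{\max}$ since $1\le 1+\cos\theta$, but the spectral inventory should include them. Your concluding remark sketching the elementary variational alternative is essentially the paper's argument and correctly identifies where the trivial-intersection hypothesis enters.
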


\begin{proof}
  We define $\Pi_\op A$ to be the projector onto $\ker\op A$, and analogously for $\Pi_\op B$. It follows from $\lminpos{\op A} \geq \mu$ that $\op A\ge\mu(\1-\Pi_\op A)$ and similarly for $\op B$. It is hence enough to show that $(\1-\Pi_\op A)+(\1-\Pi_\op B)\ge(2\sin^2\tfrac{\theta}{2})\1$, which is equivalent to $(1+\cos\theta) \1 \geq \Pi_\op A+\Pi_\op B$. In other words, we want to show that every eigenvalue $\lambda$ of $\Pi_\op A+\Pi_\op B$ satisfies
  \begin{equation}\label{eq:kitaev}
    1 + \cos \theta \geq \lambda\point
  \end{equation}

  Let $\ket\psi$ be a normalized eigenvector of $\Pi_\op A+\Pi_\op B$ with eigenvalue $\lambda \geq 0$. Since \cref{eq:kitaev} holds trivially for $\lambda = 0$, we can assume $\lambda > 0$. Since $\Pi_\op A$ projects onto $\ker \op A$, we can find a unit vector $\ket{\psi_\op A} \in \ker \op A$ such that $\Pi_\op A \ket{\psi} = a \ket{\psi_\op A}$ for some $a \in \sset C$; we can adjust the global phase of $\ket{\psi_\op A}$ to guarantee that $a \geq 0$. Similarly, $\Pi_\op B \ket{\psi} = b \ket{\psi_\op B}$ for some unit vector $\ket{\psi_\op B} \in \ker \op B$ and $b \geq 0$. Since $\Pi_\op A$ and $\Pi_\op B$ are projectors, $\bra{\psi} \Pi_\op A \ket{\psi} = \bra{\psi} \Pi_{\op A}^\dagger \Pi_{\op A}^{} \ket{\psi} = a^2 \braket{\psi_\op A}{\psi_\op A} = a^2$ and $\bra{\psi} \Pi_\op B \ket{\psi} = b^2$. From $\lambda \ket{\psi} = (\Pi_\op A + \Pi_\op B) \ket{\psi}$ we get by linearity that
  \[
    \lambda
    = \bra{\psi} (\Pi_\op A+\Pi_\op B) \ket{\psi}
    = a^2+b^2\point
  \]
  Furthermore,
  \begin{align*}
    \lambda^2&=\bra{\psi}(\Pi_\op A+\Pi_\op B)^2\ket{\psi}=a^2+b^2+2ab\Re \braket{\psi_\op A}{\psi_\op B}\\
             &\le\lambda+2ab\abs{\braket{\psi_\op A}{\psi_\op B}}
    \le\lambda+(a^2+b^2)\abs{\braket{\psi_\op A}{\psi_\op B}}
    =\lambda(1+\abs{\braket{\psi_\op A}{\psi_\op B}})\\
    &\le\lambda\of[\Big]{1+
    \max_{\substack{\ket{\alpha} \in \ker \op A \\ \ket{\beta} \in \ker \op B}}
    \abs{\braket{\alpha}{\beta}}
  }
  =\lambda(1+\cos\theta)\comma
\end{align*}
and hence $\lambda\le1+\cos\theta$, which proves \cref{eq:kitaev}.
\end{proof}

We want to use Kitaev's geometrical lemma to lower bound the smallest eigenvalue of a graph Laplacian when certain vertices are penalized. To be more specific, for a graph $G=(\set V,\set E)$ and a set of vertices $\set P\subsetneq\set V$, we write a \emph{penalizing matrix}
\[
  \op P(G,\set P) := \sum_{v \in \set P} \proj{v}.
\]
A priori, it is not clear at all what the spectrum of the \emph{penalized Laplacian} $\Delta+\op P$ is, however we can obtain a lower bound on the smallest eigenvalue.

\begin{lemma}[Kitaev's geometrical lemma for graphs]\label{lem:kitaev-graphs}
  Let $G=(\set V,\set E)$ be a connected graph. Pick a non-empty subset of \emph{penalized} vertices $\set P\subsetneq\set V$ and write the penalized Laplacian as $\Delta_\set P(G):=\Delta(G)+\op P(G,\set P)$. Then $\lmin(\Delta_\set P)=\Omega(1/|\set V|^3)$.
\end{lemma}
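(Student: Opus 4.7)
The plan is to apply \cref{lem:kitaev} directly with $\op A := \Delta(G)$ and $\op B := \op P(G, \set P)$. Both are positive semi-definite by \cref{lem:lapg} and by inspection (each $\proj{v}$ is a rank-one projector), so the hypotheses on nonnegativity are immediate. For the non-zero spectral gaps: since $G$ is connected, \cref{cor:a(G)} yields $\lminpos{\Delta(G)} = a(G) = \Omega(1/|\set V|^2)$, while $\op P$ is itself a projector, giving $\lminpos{\op P} = 1$. Thus we may take $\mu = \Omega(1/|\set V|^2)$.

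Next I would identify the two kernels explicitly. By \cref{lem:laplace-gs} (applied to the single-component case), $\ker \Delta(G)$ is one-dimensional and spanned by $\ket{\Phi_{\set V}} := |\set V|^{-1/2} \sum_{v \in \set V} \ket{v}$. On the other hand, $\ker \op P(G,\set P) = \spn\{\ket{v} : v \in \set V \setminus \set P\}$. Since $\set P$ is non-empty, $\ket{\Phi_{\set V}}$ has nontrivial support on $\set P$ and therefore does not lie in $\ker \op P$; as $\ker \Delta$ is one-dimensional, this shows $\ker \op A \cap \ker \op B = \{0\}$, so \cref{lem:kitaev} applies.

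The main (but routine) computation is bounding the angle $\theta$ between the two kernels. For any unit vector $\ket{\beta} = \sum_{v \in \set V \setminus \set P} \beta_v \ket{v} \in \ker \op P$, Cauchy--Schwarz gives
\[
\abs{\braket{\Phi_{\set V}}{\beta}}^2 = \frac{1}{|\set V|} \abs[\Big]{\sum_{v \in \set V \setminus \set P} \beta_v}^2 \le \frac{|\set V \setminus \set P|}{|\set V|} = 1 - \frac{|\set P|}{|\set V|},
\]
with equality attained by the uniform superposition over $\set V \setminus \set P$. Hence $\cos^2 \theta \le 1 - |\set P|/|\set V| \le 1 - 1/|\set V|$, since $|\set P|\ge 1$. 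Using $1-\sqrt{1-x}\ge x/2$ for $x\in[0,1]$, this gives $1 - \cos\theta \ge 1/(2|\set V|)$ and therefore
\[
\sin^2\tfrac{\theta}{2} = \frac{1-\cos\theta}{2} \ge \frac{1}{4|\set V|}.
\]

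Combining everything through \cref{lem:kitaev},
\[
\lmin(\Delta_{\set P}) \;\ge\; 2\mu \sin^2\tfrac{\theta}{2} \;=\; \Omega\!\of[\big]{1/|\set V|^2} \cdot \Omega\!\of[\big]{1/|\set V|} \;=\; \Omega\!\of[\big]{1/|\set V|^3},
\]
as claimed. No step is a serious obstacle; the only delicate point is the angle estimate, and Cauchy--Schwarz suffices because the Laplacian's kernel is simply the flat vector (by connectedness), making its overlap with any vector supported off $\set P$ easy to compute.
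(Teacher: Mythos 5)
Your proof is correct and follows essentially the same route as the paper: both apply \cref{lem:kitaev} with $\op A=\Delta(G)$, $\op B=\op P(G,\set P)$, take $\mu=\Omega(1/|\set V|^2)$ from \cref{cor:a(G)}, and compute the angle between the one-dimensional kernel of $\Delta$ (the flat vector) and $\spn\{\ket v: v\notin\set P\}$. The paper writes down the maximizing vector in $\ker\op P$ directly while you reach the same value of $\cos\theta$ via Cauchy--Schwarz and then observe equality; these are interchangeable.
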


\begin{proof}
  Let us first verify that $\Delta$ and $\op P$ satisfy the prerequisites of \cref{lem:kitaev}. Since $G$ is connected, $\lminpos{\Delta} = a(G) > 0$ by \cref{lem:a}. Moreover, $\ker \Delta$ is spanned by the all-ones vector $\ket{\Phi_{\set V}} := \sum_{v \in \set V} \ket{v} / \sqrt{\abs{\set V}}$ according to \cref{cor:ground}. Clearly, $\lminpos{\op P} = 1$ and $\ket{\Phi_{\set V}} \notin \ker \op P$ since $\set P \neq \varnothing$, so $\ker \Delta \cap \ker \op P = \{0\}$. We can take the constant in \cref{lem:kitaev} to be $\mu := \min\{a(G),1\} = \Omega(1/|\set V|^2)$, where we used the lower bound $a(G) = \Omega(1/|\set V|^2)$ from \cref{cor:a(G)} on the algebraic connectivity of $G$. It remains to compute the angle $\theta$ between $\ker \Delta = \spn \{ \ket{\Phi_{\set V}} \}$ and $\ker \op P = \spn \{ \ket{v} : v \notin \set P \}$. We have:
  \[
    \cos \theta
    = \bra{\Phi_{\set V}} \left( \frac{1}{\sqrt{\abs{\set V} - \abs{\set P}}} \sum_{v \notin \set P} \ket{v} \right)
    = \frac{\abs{\set V} - \abs{\set P}}{\sqrt{\abs{\set V} (\abs{\set V} - \abs{\set P})}}
    = \sqrt{1 - \frac{\abs{\set P}}{\abs{\set V}}}
  \]
  and hence
  \[
    2 \sin^2 \tfrac{\theta}{2}
    = 1 - \cos \theta
    = 1 - \sqrt{1 - \frac{\abs{\set P}}{\abs{\set V}}}
    \geq \frac{1}{2} \frac{\abs{\set P}}{\abs{\set V}}
    \geq \frac{1}{2\abs{\set V}}.
  \]
  We conclude by \cref{lem:kitaev} that $\lmin(\Delta_\set P)\ge 2\mu\sin^2\tfrac{\theta}{2}=\Omega(1/|\set V|^3)$.
\end{proof}

\section{Quantum Ring Machine}\label{sec:qrm}
\subsection{Definition}
\begin{figure}[t]
  \centering
  \includegraphics[width=6cm]{./figures/qrm}
  \caption{Quantum ring machine (QRM). Starting from a ring of qudits $\hs=(\sset C^d)^{\otimes n}$ in an initial configuration $\qi \in \hs$, a unitary $\op R \in \U(\sset C^d \otimes \sset C^d)$ is applied to pairs of adjacent qudits until one of them is completely in some halting subspace $\hsf\subseteq\sset C^d$.}
  \label{fig:qrm}
\end{figure}

We define a new computational model, a \emph{quantum ring machine} (QRM), and show that it is poly-time equivalent to a uniform class of quantum circuits. Recall that any uniform class of circuits---such as poly-time circuits or exponential-time circuits---inherits its uniformity condition from the corresponding class of classical Turing machines producing these circuit families. To prove that QRMs are quantum-universal, we will encode the given Turing machine into a specific instance of a QRM whose inner workings correspond to those of the original Turing machine, but with an additional quantum tape. In other words, the local Hilbert space of our QRM will be partitioned into two parts: a classical part, storing individual cells of the TM's tape and the internal state of the TM, and a quantum part, storing one qubit per cell. However, since a general QRM does not need to have this specific internal structure, we first give an abstract definition.

\begin{definition}\label{def:qrm-abstract}
  A \emph{quantum ring machine} (QRM) is a tuple $(\op R, n, \qi, \hsf)$, where
  \begin{itemize}
    \item $\op R \in \U(\sset C^d \otimes \sset C^d)$ is a unitary operator on a pair of qudits, each of dimension~$d$,
    \item $n\in\sset N$ is the total number of qudits on the ring,
    \item $\qi \in \hs$ is the \emph{initial state} where $\hs := (\sset C^d)^{\otimes n}$ denotes the joint Hilbert space,
    \item $\hsf \subseteq \sset C^d$ is the \emph{halting subspace} of each qudit.
  \end{itemize}
  Starting from a ring of $n$ qudits initialized in $\qi$, the operation $\op R$ is applied cyclically to adjacent pairs of qudits---see \cref{fig:qrm}---until some qudit indicates halting: its reduced density matrix has support completely within  the halting subspace $\hsf$; up until that point, the probability of finding any qudit within $\hsf$ is zero%
  \footnote{In particular this means that if, after every application of $\op R$, the corresponding two qudits are measured, each with respect to $\hsf$ and its orthogonal complement $\hsf^\perp$, then the probability of finding the reduced state in $\hsf$ should always be either zero or one, with the latter case indicating halting.}.
\end{definition}

\Cref{fig:qrm-circuit} visualizes a QRM as a quantum circuit.
Because the ring is cyclic, we can arbitrarily mark a starting position on the ring. Starting at this position, part of the initial state $\qi$ contains the input while the rest will be used as a workspace. The \emph{input size} is thus upper bounded by the ring size.

In the following definition, we consider a slight extension of QRMs from \cref{def:qrm-abstract} where $\qi$ is replaced by a family of input states $\hsi[I]$ for some index set $I$.

\begin{figure}[t]
  \centering
  \def\W{8}
  \def\w{1}
  \def\h{0.8}
  \def\gx{0.4}
  \def\gy{0.65}
  \begin{tikzpicture}[
      box/.style = {fill=white,draw = black,rounded corners=.5,line width=.8pt},
      snake/.style = {},
      snake2/.style = {decorate, decoration = {zigzag, amplitude = 4pt, segment length = 25pt}}
    ]

    \clip [snake] (-0.2,0.5*\h) rectangle (\W+0.2,-3.5*\h);
    \clip [snake2] (.2,2*\h) rectangle (\W-.2,-5*\h);

    \newcommand{\gate}[2]{
      \path[box] (#1*\w-\gx,-#2*\h-0.5*\h-\gy) rectangle (#1*\w+\gx,-#2*\h-0.5*\h+\gy);
      \node at (#1*\w,-#2*\h-0.5*\h) {$\op R$};
    }

    \foreach \i in {0,...,4} {
      \draw[line width=.8pt,double] (0,-\i*\h-.1) -- (\W,-\i*\h-.1);
      \draw[line width=.8pt] (0,-\i*\h+.1) -- (\W,-\i*\h+.1);
    }

    \gate{0}{3}
    \gate{0}{-1}
    \gate{1}{0}
    \gate{2}{1}
    \gate{3}{2}
    \gate{4}{3}
    \gate{4}{-1}
    \gate{5}{0}
    \gate{6}{1}
    \gate{7}{2}
    \gate{8}{3}
    \gate{8}{-1}
  \end{tikzpicture}
  \caption{Circuit diagram of a QRM with a ring of size $4$. The double lines indicate classical wires that are used to store the TM's internal states and tape, as well as a flag indicating either the TM's halting or the direction of its next head movement (see the proof of \cref{prop:bqpqrm} for more details). The internal details of the QRM's unitary operation $\op R$ are shown in \cref{fig:qrm-head}.}
  \label{fig:qrm-circuit}
\end{figure}

\begin{definition}\label{def:qrmterm}
  A QRM \emph{terminates} on $\hsi[I]$ if it halts, in finitely many steps, on any initial state $\qi[x]$ for $x\in I$.
  Let $(M_n)_n$ be a family of QRMs where $M_n$ has a ring of size $n$. This family is \emph{poly-time terminating} if there exists a polynomial $p$ such that $M_n$ terminates in $p(n)$ steps on all states $\qi[x]$; similarly, it is \emph{exponential-time terminating} if there exists an exponential function $f(n) = O(\exp(c n))$, for some $c>0$, such that $M_n$ terminates in $f(n)$ steps on all states $\qi[x]$.
\end{definition}

\subsection{Universality}

\begin{lemma}\label{prop:bqpqrm}
  Let $\PromP = (\PromP_\yes, \PromP_\no)$ be a promise problem in \BQP. Then there exists a polynomial $p$ and a poly-time terminating family of QRMs $(M_n)_{n}$,
  $$M_n=(\op R, n, \hsi[I_n], \hsf),$$
  with the following properties:
  \begin{enumerate}
  \item All $M_n$ share the same unitary $\op R$ and the same terminating subspace $\hsf$. The ring size of $M_n$ is $n$.
  \item The input states $\hsi[I_n]$ of each $M_n$ consist of \emph{trivial}\footnote{One must be able to produce $\qi[x]$ from $x\in\PromP$ with a constant-depth quantum circuit (in particular, one cannot cheat by allowing the input to contain the answer to the problem), e.g.\ see \cref{eq:psi_in}. This is similar to the types of input encodings one would allow for a poly-time classical TM.} encodings of instances $I_n := \{x \in \PromP : p(|x|) = n\}$, so that the whole computation fits on a ring of size $n$.
  \item If $x\in\PromP_\yes$, the reduced density matrix of the cell that signals halting satisfies an extra constraint: if measured, it collapses to an accepting subspace $\hs_\mathrm{acc} \subseteq \hs_\mathrm{halt}$ with probability $\ge 2/3$.
  \item If $x\in\PromP_\no$, it collapses to $\hs_\mathrm{acc} \subseteq \hs_\mathrm{halt}$ with probability $\le 1/3$.
  \end{enumerate}
\end{lemma}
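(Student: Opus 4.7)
The plan is to build $\op R$ and $\hsf$ by encoding into the ring both the construction \emph{and} the execution of the \BQP verifier circuit. By \cref{def:bqp,def:circuits}, for $\PromP\in\BQP$ there is a polynomial-time classical TM $T$ that, on input $1^{|x|}$, writes out a description of the verifier circuit $C_{|x|}$; by \cref{rem:reversible-tm} I may assume $T$ is deterministic and reversible. I would choose the polynomial $p$ so that a ring of size $n=p(|x|)$ accommodates (i) the input $x$, (ii) the workspace $T$ uses to write out $C_{|x|}$, (iii) the $\poly(|x|)$ ancilla qubits of $C_{|x|}$, and (iv) a distinguished output cell. The local Hilbert space factorizes as $\sset{C}^d = \sset{C}^{d_\mathrm{cl}} \otimes \sset{C}^2$, where the classical sector stores a TM tape symbol, an internal-state marker (populated only on the cell that currently holds the TM head), and a few control flags such as ``active'' and ``halt,'' while the quantum sector stores one data qubit per cell. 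The initial state $\qi[x]$ is then just a product state encoding $x$ in the appropriate cells, $\ket 0$ in the ancillas, and $T$'s initial state $q_0$ on a fixed starting cell; this is trivially constant-depth preparable.

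Now I would specify $\op R$ on a pair of neighbouring cells. If neither cell carries the TM head marker, $\op R$ acts as the identity, so the QRM's cyclic sweep travels around the ring without effect until it reaches the current TM head location; this costs at most a factor $n$ slowdown, which is acceptable. When the QRM head coincides with the TM head, $\op R$ applies one TM transition via the permutation matrix $\op T_\delta$ of \cref{thm:permutation}: it updates the local tape symbol and internal state in place and, using the direction function $d(q')$, moves the head marker onto the appropriate cell of the pair. When $T$'s current instruction is to apply a gate from the fixed universal gate set to two adjacent data qubits, $\op R$ additionally applies that 2-qubit unitary to the quantum sector, conditioned on the classical flags. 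Because $T$ is deterministic and reversible and each classical configuration of the pair uniquely determines both forward and backward evolution, $\op R$ extends to a well-defined unitary on $\sset{C}^d \otimes \sset{C}^d$.

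The halt flag is written by $T$ into the output cell exactly once, at the moment it transitions into $q_f$; I take $\hsf$ to be the local subspace whose classical part carries this flag, and the accepting subspace to be $\hs_\mathrm{acc} := (\text{halt flag}) \otimes \spn\{\ket 1\} \subseteq \hsf$. Before halting, no cell contains the halt flag, so every reduced state is orthogonal to $\hsf$; at the halting moment, the output cell's reduced state lies entirely in $\hsf$, as required by \cref{def:qrm-abstract}. The total runtime is at most $n$ times the runtime of $T$, hence $\poly(n)$, establishing poly-time termination. Items 3 and 4 then follow because the reduced quantum state of the output cell upon halting is exactly that of $C_{|x|}$'s output qubit, whose acceptance probability is at least $2/3$ for \yes instances and at most $1/3$ for \no instances.

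The hard part, I expect, is not any individual component but merging the several roles of $\op R$ into a \emph{single} permutation of the classical sector (tensored with a controlled quantum action). One must check that ``identity on inactive pairs,'' ``TM transition on active pairs,'' and the conditional gate applications together form an unambiguous bijection on $\sset{C}^{d_\mathrm{cl}} \otimes \sset{C}^{d_\mathrm{cl}}$. The unidirectionality provided by $d$ in \cref{thm:permutation} is the key: given the post-transition internal state, we can always recover which cell of the pair the head came from, ruling out any forward-collision of distinct predecessor configurations into the same successor. A secondary subtlety, that the halt flag must appear exactly once and only in the designated output cell, I would handle by reserving a final dedicated TM instruction that writes the halt marker in the output cell and nowhere else, preserving the $\hsf$-orthogonality invariant up to that instant.
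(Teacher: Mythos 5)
Your construction matches the paper's essentially step-for-step: you simulate a reversible classical TM that writes out and executes the verifier circuit, partition each ring cell into a qubit register plus classical tape/state/flag registers, use $\op T_\delta$ from \cref{thm:permutation} together with the direction function $d$ to make a single well-defined permutation, let $\op R$ act trivially except at the cell pair straddling the TM head marker, and use a classical halt flag to define $\hsf$ (with the paper's flag alphabet $\{\leftarrow,\rightarrow,-,h\}$ being exactly your ``active/halt'' markers). The only cosmetic difference is that you designate a fixed output cell and route the TM head there before halting, whereas the paper lets the halt flag appear wherever the head is; both are fine, and your identification of the unidirectionality of $d$ as what makes $\op R$ a bijection is precisely the key point the paper relies on.
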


\newcommand{\LL}{\leftarrow}
\newcommand{\RR}{\rightarrow}
\newcommand{\BB}{-}
\newcommand{\HH}{h}

\begin{proof}
  Our goal is to construct a QRM for simulating a classical Turing machine (TM) that produces a description of a uniform quantum circuit. In addition to computing the circuit's description, the QRM also executes it one gate at a time. More formally, the QRM simulates a deterministic and reversible TM (see \cref{def:DetRevTM}) augmented with the following quantum features:
  \begin{itemize}
    \item in addition to the classical data, each cell of the TM's tape stores one qubit,
    \item a special subset of the TM's states is associated with a universal set of two-qubit quantum gates; whenever the TM enters one of these states, the corresponding gate is applied on the two adjacent qubits that are stored in the pair of cells between which the TM's head just moved.
  \end{itemize}
  It is straightforward to verify that such quantum-enhanced TM is equivalent to a uniform family of quantum circuits.

  Let us now describe the simulation procedure more formally. We write the complex linear span of a finite set $\set S$ as $\sset C^\set S:=\map{span} \{\ket s\in\sset C^{|\set S|}:s\in\set S\}$ and refer to $(\sset C^\set S)^{\otimes n}$ as a \emph{ring} of size $n \in \sset N$, where each copy of $\sset C^\set S$ represents one \emph{cell} of the ring. Each cell further consists of three registers: a \emph{quantum bit} (labelled by $\{0,1\}$), a \emph{classical data} register (labelled by elements of some finite set $\Gamma$), and a \emph{flag} register (labelled by another set $\set F$). The standard basis of each ring cell is thus labelled by triples of the form
  $$\set S := \{0,1\} \times \Gamma \times \set F.$$

  Using the notation from \cref{def:DetRevTM}, let $(\set Q,\Sigma,\delta)$ be the deterministic TM we want to simulate (it is reversible without loss of generality, see \cref{rem:reversible-tm}). The first register of the QRM stores the quantum state obtained by executing the quantum circuit produced by the TM. The second register $\Gamma$ stores the TM's internal state and tape, so $\Gamma := \set Q \times \Sigma$ where $\set Q$ is the set of internal states and $\Sigma$ is the TM's alphabet. The flag symbols $\set F$ in the third register are used to mark the location of the TM's head.

  The flag register's alphabet is given by
  $$\set F := \{\LL,\RR,\BB,\HH\}$$
  and is used as follows. At any time, exactly one cell on the ring contains an \emph{active} flag (either ``$\LL$'', ``$\RR$'', or ``$\HH$'') while the rest are padded with ``$\BB$''. The TM's internal state is always stored in the active cell. Unless the TM has halted (indicated by flag ``$\HH$''), the active flag shows in which direction (``$\LL$'' for \emph{left} and ``$\RR$'' for \emph{right}) the TM's head must be moved before the simulation of the next step can begin. Every time the TM's head moves or its internal state changes, the QRM updates the flag registers and the description of the TM's internal state accordingly. Whenever the TM enters one of the special ``quantum'' states, the QRM applies the corresponding two-qubit unitary.

  Recall from \cref{def:qrm-abstract} that QRM operates by cyclically applying a fixed unitary $\op R$ on pairs of consecutive cells along the ring (see \cref{fig:qrm-circuit}). Most of the time $\op R$ acts trivially, since a non-trivial action is triggered only when either of the two active flags ``$\LL$'' or ``$\RR$'' is encountered. Note that $\op R$ acts on \emph{two} adjacent cells, one of them marked by the active flag and the other indicated by the direction of the flag's arrow:
  \begin{align*}
    \dotsb \BB \underbrace{\BB\!\LL}_{\op R} \BB \dotsb &&
    \dotsb \BB \underbrace{\RR\!\BB}_{\op R} \BB \dotsb
  \end{align*}
  It is crucial that $\op R$ is two-local for the following two reasons. First, updating the active location requires changing two symbols (e.g.~when the TM's head moves left, we need to replace ``$\BB\!\LL$'' by ``$\LL\!\BB$'' or ``$\RR\!\BB$'', depending on the direction the head will move next). For applying a two-qubit gate, we clearly also need a two-local interaction (we use the same convention as above to determine on which two qubits the gate is applied).

  \newcommand\vartextvisiblespace[1][.5em]{%
    \makebox[#1]{%
      \kern.07em
      \vrule height.3ex
      \hrulefill
      \vrule height.3ex
      \kern.07em
    }
  }
  \newcommand{\dummy}{\perp} 
  \newcommand{\blank}{\vartextvisiblespace[.7em]} 

  Recall from \cref{thm:permutation} that, instead of quintuples $\delta$, we can work with a permutation matrix $\op T_\delta$ on $\Gamma$ and a function $d: \set Q \to \{\mathrm{left}, \mathrm{right}\}$ telling us where the TM's head came from. For convenience, we include a special \emph{dummy} state ``$\dummy$'' in $\set Q$ and a designated \emph{blank} symbol ``$\blank$'' in $\set \Sigma$: the dummy state is stored in all cells (except the active cell which stores the actual state of the TM) while the blank symbol is used to initialize the TM's tape. We accordingly extend $\op T_\delta$ so that it acts trivially on $\ket{\dummy,\sigma}$ for any $\sigma \in \Sigma$, and we define $d(\dummy) := \BB$ so that dummy states do not trigger any action in our simulation.

  We take the ring size to be $n=p(|x|)$ for an instance $x\in\PromP$, since the TM can access at most that many tape cells. We require that the ring starts out in a well-formed state, i.e.~for some binary representation $x=x_1x_2\cdots x_l$ and $l=|x|$, a state of the form
  \begin{equation}
    \qi[x] :=
    \bigotimes_{j=1}^{n-l-1}\of{ \ket 0 \otimes \ket{\dummy, \blank} \otimes \ket{\BB} } \otimes
    \bigotimes_{i=1}^l
    \of{ \ket{x_i} \otimes \ket{\dummy, \blank} \otimes \ket{\BB} } \otimes
    \of{ \ket 0 \otimes \ket{q_0, \blank} \otimes \ket{\RR} },
    \label{eq:psi_in}
  \end{equation}
  i.e.\ where all cells but the last are initialized as follows: the TM is in the dummy state ``$\dummy$'', the TM's tape is initialized to a designated blank symbol ``$\blank$'', and the flag is set to ``$\BB$''. The last cell contains the TM's initial state $q_0$ and the ``$\RR$'' flag. The input $x \in \PromP$ is written on the qubit part of the tape, i.e.~the first register of each cell.

  We can now describe in more detail the steps involved in our simulation, and how to perform them reversibly (see \cref{fig:qrm-head,fig:qrm-perm} for more details):
  \begin{enumerate}
    \item If the active cell has the halting flag ``$\HH$'', the TM has halted so nothing happens.
    \item If the active cell has one of the other two flags ``$\LL$'' or ``$\RR$'':
      \begin{enumerate}
        \item The $\set Q$ part of the $\Gamma$ registers of the active cell and its neighbour---indicated by the flag---are exchanged, thus simulating the movement of the TM's head.
        \item The flag register of the active cell is uncomputed using the function $d$.
        \item Description of the TM's internal state and the current tape symbol is updated using $\op T_\delta$.
        \item Based on the updated internal state, a new flag register is computed using $d$ (it belongs to the same cell where the TM's new state is stored, and it indicates in which direction the TM's head will move before the next iteration begins).
        \item If the TM is in one of the special states indicating a quantum gate, the corresponding unitary is applied on the two data registers.
      \end{enumerate}
  \end{enumerate}

  \begin{figure}
    \centering
    \begin{tikzpicture}[
        draw=black,line width=.8pt,
        box/.style = {rounded corners=.5},
        label/.style = {align=right,text width=.7cm},
        labelr/.style = {align=left,text width=.7cm}
      ]

      \newcommand{\gate}[5]{
        \draw[fill=white,box] (#1-.25,#2-.25) rectangle (#3+.25,#4+.25);
        \node[align=center] at (0.5*#3+0.5*#1,0.5*#4+0.5*#2) {#5};
      }

      \draw (0,5) -- (1,5) -- (2.5,1.0) -- ( 9.5,1.0) -- (11,5) -- (12,5);
      \draw (0,3) -- (1,3) -- (2.0,0.5) -- (10.0,0.5) -- (11,3) -- (12,3);

      \draw[double] (0,4.5) -- (12,4.5);
      \draw[double] (0,4.0) -- (12,4.0);
      \draw[double] (0,2.5) -- (12,2.5);
      \draw[double] (0,2.0) -- (12,2.0);

      \gate{3.5}{2.0}{6.5}{4.5}{Classical\\computation\\$\op C$ on internal\\states and a\\classical tape};

      \draw[double] (8,1) -- (8,4.5);
      \fill (8,2.5) circle [radius=0.1];
      \fill (8,4.5) circle [radius=0.1];
      \gate{7.5}{0.5}{8.5}{1.0}{$\op U$}

      \draw[dash pattern=on 6pt off 4pt,box] (0.5,0.0) rectangle (11.5,5.5);

      \node[label] at (-.5,5.0) {$\psi_a$};
      \node[label] at (-.5,4.5) {$\gamma_a$};
      \node[label] at (-.5,4.0) {$f_a$};

      \node[labelr] at (12.5,5.0) {$\psi_a'$};
      \node[labelr] at (12.5,4.5) {$\gamma_a'$};
      \node[labelr] at (12.5,4.0) {$f_a'$};

      \node[label] at (-.5,3.0) {$\psi_b$};
      \node[label] at (-.5,2.5) {$\gamma_b$};
      \node[label] at (-.5,2.0) {$f_b$};

      \node[labelr] at (12.5,3.0) {$\psi_b'$};
      \node[labelr] at (12.5,2.5) {$\gamma_b'$};
      \node[labelr] at (12.5,2.0) {$f_b'$};
    \end{tikzpicture}
    \caption{Circuit diagram for implementing the QRM's head unitary $\op R$ (double wires are classical while single wires are quantum). All computation is classical except for a single classically-controlled quantum gate $\op U$ that can be triggered by either of the two $\Gamma$ registers. A classical circuit for implementing $\op C$ is shown in \cref{fig:qrm-perm}.}
    \label{fig:qrm-head}
  \end{figure}

  We now describe the unitary operator $\op R$ that acts on two adjacent QRM's cells:
  \begin{itemize}
    \item For $a = (\psi_a, \gamma_a, f_a) \in \set S$, write the corresponding basis state as $\ket a:=\ket{\psi_a, \gamma_a, f_a}\in\sset C^{\set S}$ where $\psi_a \in \{0,1\}$, $\gamma_a \in \Gamma$, $f_a \in \set F$, and analogously for $\ket b$.
      Then $\ket a\otimes\ket b\in\sset C^{\set S\times\set S}$ is also a basis state and we require, up to reordering the registers (see \cref{fig:qrm-head}), that
      \[
        \op R(\ket a\otimes\ket b)=\ket{\Psi}\otimes\ket{\gamma_a',f_a'}\otimes\ket{\gamma_b',f_b'}
      \]
      for some $\gamma_a',\gamma_b'\in\Gamma$ and $f_a',f_b' \in \set F$, i.e.~$\op R$ acts classically on each register except for the quantum data registers $\ket{\psi_a}$ and $\ket{\psi_b}$ (in particular, we allow $\ket{\Psi} \in \sset C^2 \otimes \sset C^2$ to be entangled).
    \item Using the same notation, if $f_a \neq {\RR}$ and $f_b \neq {\LL}$, we further demand $\ket{\gamma_a',f_a'}=\ket{\gamma_a,f_a}$, $\ket{\gamma_b',f_b'}=\ket{\gamma_b,f_b}$, and $\ket{\Psi}=\ket{\psi_a}\otimes\ket{\psi_b}$, i.e.~if neither $f_a$ nor $f_b$ signal ``apply head here'', $\op R$ acts as the identity operator on all registers.
    \item The active flag always moves in the direction indicated by the arrow. If $f_a = {\RR}$ then $f_a' = \BB$ and $f_b' \neq \BB$, meaning that the head has moved right. Similarly, if $f_b = {\LL}$ then $f_b' = \BB$ and $f_a' \neq \BB$, meaning that the head has moved left. In each case there are three possible transitions---they indicate whether the TM has halted or in which direction its head has to move next:
      $$\begin{array}{ccc}
        f_a f_b && f_a f_b \\
        \RR \BB && \BB \LL \\[5pt]
        \Downarrow && \Downarrow \\[7pt]
        f_a' f_b' && f_a' f_b' \\
        \BB \RR && \RR \BB \\
        \BB \LL && \LL \BB \\
        \BB \;\,\HH\;\, && \:\HH\;\, \BB
      \end{array}$$
  \end{itemize}
  \Cref{fig:qrm-head} shows how $\op R$ acts on two adjacent cells. For each cell, the halting subspace $\hsf$ is spanned by all standard basis vectors with the last register in the halting state $\ket{\HH}$:
  \[
    \hsf := \sset C^2 \otimes \sset C^\Gamma \otimes \ket{\HH}.
  \]
  \Cref{fig:qrm-perm} provides details on how to implement $\op C$ reversibly.

  \begin{figure}
    \centering
    \begin{tikzpicture}[
        draw=black,line width=.8pt,
        box/.style = {rounded corners=.5},
        label/.style = {align=right,text width=.9cm},
        labelr/.style = {align=left,text width=.9cm}
      ]

      \newcommand{\gate}[4]{
        \draw[fill=white,box] (#1-#4,#2-#4) rectangle (#1+#4,#2+#4);
        \node[align=center] at (#1,#2) {#3};
      }

      \def\r{0.10}
      \def\R{0.22}

      \newcommand{\CNOT}[4]{
        \draw (#1,#2) -- (#1,#3-#4*\R);
        \fill (#1,#2) circle (\r);
        \draw (#1,#3) circle (\R);
      }

      \draw (3,0) -- (3,3.0);

      \draw[double] (0,3.2)                       -- (2.5,3.2) -- (3.5,2.8) -- (4,2.8) -- (5,1.2) -- (9,1.2);
      \draw[double] (0,2.8) -- (1,2.8) -- (2,2.0)                           -- (4,2.0) -- (5,2.8) -- (9,2.8);
      \draw[double] (0,2.0) -- (1,2.0) -- (2,1.2)                           -- (4,1.2) -- (5,2.0) -- (9,2.0);
      \draw[double] (0,1.2) -- (1,1.2) -- (2,2.8) -- (2.5,2.8) -- (3.5,3.2)                       -- (9,3.2);
      \draw[double] (0,0.8)                                                                       -- (9,0.8);
      \draw[double] (0,0.0)                                                                       -- (9,0.0);

      \gate{7}{1}{$\op T_\delta$}{0.42}
      \gate{7}{3}{$\op T_\delta$}{0.42}

      \CNOT{5.5}{1.2}{2.0}{-1}
      \CNOT{6.0}{3.2}{0.0}{1}
      \CNOT{8.0}{1.2}{0.0}{1}
      \CNOT{8.0}{3.2}{2.0}{1}

      \fill (3,0.0) circle (\r);
      \fill (3,1.2) circle (\r);

      \draw[dash pattern=on 6pt off 4pt,box] (0.5,-0.5) rectangle (8.5,3.7);

      \node[label]  at (-1.4,2.95) {$\gamma_a=$};
      \node[label]  at (-1.0,3.0) {$\bigg\{$};
        \node[label]  at (-.6,3.20) {$q_a$};
        \node[label]  at (-.6,2.75) {$\sigma_a$};
        \node[label]  at (-.6,2.00) {$f_a$};

        \node[labelr] at (10.4,3.0) {$=\gamma_a'$};
      \node[labelr] at (10.0,3.0) {$\bigg\}$};
      \node[labelr] at (9.6,3.25) {$q_a'$};
      \node[labelr] at (9.6,2.80) {$\sigma_a'$};
      \node[labelr] at (9.6,2.00) {$f_a'$};

      \node[label]  at (-1.4,0.95) {$\gamma_b=$};
      \node[label]  at (-1.0,1.0) {$\bigg\{$};
        \node[label]  at (-.6,1.20) {$q_b$};
        \node[label]  at (-.6,0.75) {$\sigma_b$};
        \node[label]  at (-.6,0.00) {$f_b$};

        \node[labelr] at (10.4,1.0) {$=\gamma_b'$};
      \node[labelr] at (10.0,1.0) {$\bigg\}$};
      \node[labelr] at (9.6,1.25) {$q_b'$};
      \node[labelr] at (9.6,0.80) {$\sigma_b'$};
      \node[labelr] at (9.6,0.00) {$f_b'$};
    \end{tikzpicture}
    \caption{Circuit diagram for implementing the classical permutation $\op C$ in \cref{fig:qrm-head} (all wires are classical and all gates are reversible). Conditioned on the flag registers being either ``$\RR\BB$'' or ``$\BB\LL$'', the controlled-controlled-\swap gate exchanges the internal state registers of the two cells. The \cnot gates in the first layer are conditioned on the value of $d(q)$, for state $q$, and they uncompute the flag register of the opposite cell (the cell where the TM's head came from). The permutation $\op T_\delta$ acts on $\Gamma$ registers of both cells to update the TM's internal state and the current tape symbol. Recall that $\op T_\delta$ acts trivially if the state is dummy (at most one of the cells is in a non-dummy state). The final layer of \cnot gates again condition on $d(q')$, where $q'$ is the new state, and update the flag registers to indicate where the TM's head will move next. These flags will be uncomputed by the next iteration.}
    \label{fig:qrm-perm}
  \end{figure}

  We construct the desired family of QRMs $(M_n)_n$ in the special form described above. It is straightforward to verify that this ring machine executes the circuit written out by the TM, and the runtime overhead of $M_n$ as compared to the circuit is at most quadratic.
\end{proof}

\begin{corollary}\label{cor:bqexp-qrm}
  Using an exponential-time terminating family of QRMs, \cref{prop:bqpqrm} holds for \BQEXP as well.
\end{corollary}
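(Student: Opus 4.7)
The plan is to reuse the construction of \cref{prop:bqpqrm} almost verbatim, swapping only the underlying classical Turing machine for one that runs in exponential rather than polynomial time, and enlarging the ring accordingly. Given $\PromP \in \BQEXP$, its uniformity condition (\cref{def:bqp}) supplies a classical TM that on input $1^n$ writes out the verifier circuit $C_n$ in $\exp(\poly n)$ steps, together with an exponential-sized witness register. By \cref{rem:reversible-tm} (Bennett), I can assume this TM is deterministic and reversible at the cost of at most a further polynomial blow-up, leaving the runtime exponential.

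I would then plug this TM into the simulator used in the proof of \cref{prop:bqpqrm} unchanged: the head unitary $\op R$, the cell alphabet $\set S = \{0,1\} \times \Gamma \times \set F$ with $\Gamma = \set Q \times \Sigma$, and the halting subspace $\hsf = \sset C^2 \otimes \sset C^\Gamma \otimes \ket\HH$ depend only on the finite state set and tape alphabet of the simulated TM, not on how long it runs, so $\op R$ and $\hsf$ are shared across the whole family and fixed independently of $|x|$. The ring size is raised from $p(|x|)$ to $n = f(|x|) = O(\exp(|x|^k))$ so that the exponential workspace fits on the tape, and the initial state $\qi[x]$ is still the tensor product of cell states exhibited in \cref{eq:psi_in}, i.e.\ a trivially preparable encoding of $x$ padded with dummy/blank cells plus one active cell carrying the initial TM state $q_0$ and the flag $\RR$.

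Termination time becomes $O(\exp(|x|^{k'}))$ for some $k'\ge k$, accounting for the quadratic QRM overhead noted at the end of the proof of \cref{prop:bqpqrm} together with the polynomial Bennett overhead; this matches the exponential-time terminating condition of \cref{def:qrmterm}. The completeness/soundness requirements are inherited directly from the \BQEXP verifier $C_n$: on $x\in\PromP_\yes$ the reduced state of the halting cell collapses into $\hs_\mathrm{acc}\subseteq\hsf$ with probability $\ge 2/3$, and on $x\in\PromP_\no$ with probability $\le 1/3$.

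The only subtlety---not really an obstacle---is bookkeeping: one must confirm that each quantitative ingredient in the \BQP proof (Bennett's reversibilization overhead, the quadratic QRM overhead over the simulated TM, the constant-depth preparation of $\qi[x]$, the description size of $\op R$, and the indices $I_n$ of allowable inputs) is additive, polynomial, or constant-in-$n$, so that inserting an exponential-time TM in place of a polynomial-time one does not spoil any of the defining properties of a QRM family. Since each ingredient is manifestly of this form, no further analysis is needed, and \cref{prop:bqpqrm} transports to \BQEXP mutatis mutandis.
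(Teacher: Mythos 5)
Your proposal is correct and matches the approach the paper has in mind: the corollary is stated without an explicit proof precisely because the construction in \cref{prop:bqpqrm} is runtime-agnostic, with $\op R$ and $\hsf$ determined only by the finite state set and alphabet of the simulated TM, so swapping in an exponential-time TM and enlarging the ring to $\exp(\poly|x|)$ is all that is required. Your accounting of the overheads (Bennett's polynomial reversibilization, the quadratic QRM-over-TM slowdown) correctly verifies that the resulting family is exponential-time terminating in the sense of \cref{def:qrmterm}.
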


\section{Unitary Labelled Graphs}
\subsection{Definitions}

The following definition introduces graphs whose vertices are labelled by Hilbert spaces and whose edges are labelled by unitaries between these spaces.
\begin{definition}\label{def:ulg}
  Given an undirected graph $G=(\set S,\set E)$ without self-loops, a \emph{unitary labelled graph} (ULG) is a triple $(G,(\hs_v)_{v\in\set S},g)$ where
  \begin{itemize}
    \item $(\hs_v)_{v\in\set S}$ is a family of Hilbert spaces, one space $\hs_v$ for each vertex $v\in\set S$,
    \item $g$ is a function that assigns to each directed\footnote{While $G$ is an undirected graph, we need to arbitrarily direct its edges so that we can discern between labels $\op U$ and $\op U^\dagger$ assigned to edges $ab$ and $ba$, respectively.} edge $ab \in \set E$ some unitary operator $g(ab) \in \U(\hs_a)$ so that $g(ab) = g(ba)\ct$ (this requires that $\hs_a \cong \hs_b$ whenever $ab \in \set E$).
  \end{itemize}
\end{definition}

To facilitate notation, we will write an edge and its associated unitary jointly as $(a\leftrightarrow b,\op U)$ and call it a \emph{rule}.
By definition, the rule $(a\leftrightarrow b,\op U)$ is equivalent to the rule $(b\leftrightarrow a,\op U\ct)$.
With this notation, it is convenient to specify a unitary labelled graph by $G = (\set S, \set R)$ where $\set R:=\{(a\leftrightarrow b,g(ab)):ab\in\set E\}$ is the corresponding set of rules.

\begin{example}\label{ex:ulg}
  Let $\set S:=\{1,2,3,4,5\}$, $\hs=\sset C^2$ for all vertices, and consider the following set of rules:
  \[
    \set R:=\{(1\leftrightarrow 2,\op U),(2\leftrightarrow 3,\op V),(3\leftrightarrow 4,\op W),(4\leftrightarrow 1,\1),(4\leftrightarrow 5,\1)\}\point
  \]
  The underlying graph for this example is shown in \cref{fig:ulg}.
\end{example}
\begin{figure}
  \centering
  \begin{tikzpicture}[node distance = 14mm, text height = 1.5ex, text depth = .25ex]
    \node (5) at (0,0) {5};
    \node (4) at (2,0) {4};
    \node (1) at (4,1) {1};
    \node (3) at (4,-1) {3};
    \node (2) at (6,0) {2};
    \draw[->] (1) -- node[above] {$\op U$} (2);
    \draw[->] (2) -- node[above] {$\op V$} (3);
    \draw[->] (3) -- node[above] {$\op W$} (4);
    \draw[->] (4) -- node[above] {$\1$} (1);
    \draw[->] (4) -- node[above] {$\1$} (5);
  \end{tikzpicture}
  \caption{Unitary labelled graph from \cref{ex:ulg}. Observe that we mark the direction for the unitaries with an arrow, despite working with undirected graphs.}
  \label{fig:ulg}
\end{figure}
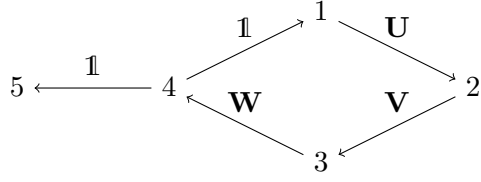

\begin{definition}\label{def:ulg-simple}
  Let $G = (\set S, \set R)$ be a ULG. If the product of unitaries along any directed path connecting $a$ and $b$ is equal, and this property holds for all $a,b\in\set S$, we call the ULG \emph{simple}. Equivalently, for a ULG to be simple, the product of unitaries along any directed cycle should be $\1$.
\end{definition}
The ULG in \cref{ex:ulg} is simple if and only if $\op W\op V\op U=\1$.

The following definition assigns a Hamiltonian to each ULG. This Hamiltonian extends the notion of a graph Laplacian, see \cref{def:Laplacian}, to ULGs (while this might not be immediately obvious from the definition, it will be made more clear in \cref{lem:ulgsimpleclassical} below).
\begin{definition}\label{def:ulglaplacian}
  Let $G = (\set S, \set R)$ be a connected ULG, \hs denote the Hilbert space attached to each of its vertices, $n:=\dim\hs$ be the dimension of \hs, and let $\{\ket{e_i}\}_{i=1}^n$ be some orthonormal basis of \hs. The \emph{Hamiltonian associated to $G$} is the following Hermitian operator on $\sset C^\set S\otimes\hs$:
  \begin{equation}\label{eq:H(G)}
    \op H(G):=\sum_{(a \leftrightarrow b, \op U)\in\set R}\;\sum_{i=1}^{n}(\ket{a}\otimes\ket{e_i}-\ket{b}\otimes\op U\ket{e_i})(\bra{a}\otimes\bra{e_i}-\bra{b}\otimes\bra{e_i}\op U\ct)\point
  \end{equation}
\end{definition}

This is reminiscent of \cref{eq:Lap} for $\Delta(G)$, the Laplacian of graph $G$.
Furthermore, it also explains why we excluded self-loops in \cref{def:ulg}: just as they have no effect on the graph Laplacian, they also impose no changes in the associated Hamiltonian of a simple ULG---the only possible self-loop unitary for such ULG is $\1$, making the corresponding term in \cref{eq:H(G)} vanish.

\begin{proposition}\label{prop:ulglaplacian}
  The Hamiltonian $\op H=\op H(G)$ of a connected UGL $G = (\set S, \set R)$, see \cref{def:ulglaplacian}, is invariant under replacing any rule $(a\leftrightarrow b,\op U) \in \set R$ with the corresponding inverse rule $(b\leftrightarrow a,\op U^\dagger)$. Moreover, the matrix entries of $\op H$ do not depend on the choice of the basis $\{\ket{e_i}\}_{i=1}^n$.
\end{proposition}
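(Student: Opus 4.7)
My plan is to compute the contribution of a single rule to $\op H(G)$ in closed form and then observe that both claims follow immediately from the resulting expression. Fix a rule $(a \leftrightarrow b, \op U) \in \set R$ and write the associated vectors $\ket{v_i} := \ket{a}\otimes\ket{e_i} - \ket{b}\otimes\op U\ket{e_i}$, so that the rule's contribution to \cref{eq:H(G)} is $\op h := \sum_{i=1}^{n} \ketbra{v_i}{v_i}$. I would expand each outer product into four tensor-factorised terms and sum over $i$ using the completeness relation $\sum_i \ketbra{e_i}{e_i} = \1_{\hs}$ together with unitarity $\op U\op U^\dagger = \1_{\hs}$.

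A short calculation yields
\begin{equation*}
\op h \;=\; \proj{a}\otimes\1 \;+\; \proj{b}\otimes\1 \;-\; \ketbra{a}{b}\otimes\op U^\dagger \;-\; \ketbra{b}{a}\otimes\op U,
\end{equation*}
in which the basis $\{\ket{e_i}\}_{i=1}^n$ no longer appears; this already establishes basis-independence of the matrix entries of $\op h$, and hence of $\op H(G)$ after summing over rules. Moreover, the right-hand side is manifestly invariant under the simultaneous swap $a \leftrightarrow b$ and $\op U \leftrightarrow \op U^\dagger$, which is precisely the rule-reversal $(a\leftrightarrow b, \op U) \mapsto (b\leftrightarrow a, \op U^\dagger)$. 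Since $\op H(G)$ is the sum of these per-rule contributions, both invariances lift from the individual rule level to the full Hamiltonian.

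I do not anticipate any real obstacle: the whole argument is essentially a one-line algebraic identity once one keeps track of the adjoints, namely $\bra{v_i} = \bra{a}\otimes\bra{e_i} - \bra{b}\otimes\bra{e_i}\op U^\dagger$, when expanding $\ketbra{v_i}{v_i}$. The only thing worth emphasising in the write-up is that the collapse of the $\proj{b}\otimes\op U(\sum_i \proj{e_i})\op U^\dagger$ term to $\proj{b}\otimes\1$ uses \emph{both} completeness of $\{\ket{e_i}\}$ and unitarity of $\op U$; this is also the reason the definition of $\op H(G)$ does not implicitly privilege one orientation of each edge over the other.
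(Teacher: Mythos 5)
Your proof is correct and follows exactly the paper's route: the paper states the same identity as its equation (4), namely $\op H \equiv \sum_{(a\leftrightarrow b,\op U)\in\set R}(\proj{a}\otimes\1+\proj{b}\otimes\1-\ketbra{a}{b}\otimes\op U^\dagger-\ketbra{b}{a}\otimes\op U)$, and declares the claim to follow. You simply spell out the expansion, the use of completeness, and the use of $\op U\op U^\dagger=\1$ in the $\proj{b}$ term, which the paper leaves implicit.
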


\begin{proof}
  Since $G$ is connected, the Hilbert spaces attached to all its vertices are isomorphic. Observe further that
  \begin{equation}\label{eq:HIU}
    \op H \equiv\sum_{(a \leftrightarrow b, \op U)\in\set R}(\proj{a}\otimes\1_n+\proj{b}\otimes\1_n-\ketbra{a}{b}\otimes\op U^\dagger-\ketbra{b}{a}\otimes\op U)\comma
  \end{equation}
  hence the claim follows.
\end{proof}

Note that an alternative way of writing $\op H$ is as follows:
  \[
    \op H \equiv\sum_{(a \leftrightarrow b, \op U)\in\set R} \sum_{i}
    (\ket a\otimes\ket{i}-\ket b\otimes\op U\ket{i})(\bra a\otimes\bra{i}-\bra b\otimes\bra{i} \op U^\dagger)\comma
  \]
which emphasizes the fact that each term is positive semi-definite.

One can extend the notion of an associated Hamiltonian to a non-connected ULG as well by taking a direct sum of the Hamiltonians for each component of the graph (equivalently, one can assume that the Hilbert spaces associated to different components of the graph are mutually orthogonal and take the new Hilbert space to be their direct sum). Either way, such extension yields a block-diagonal associated Hamiltonian.

\subsection{Semi-Classical Unitary Labelled Graphs}

A ULG is semi-classical if its Hamiltonian is equal to a graph Laplacian (see \cref{def:Laplacian}), after a unitary change of basis.
\begin{definition}
  A ULG $G$ is \emph{semi-classical} if its associated Hamiltonian can be expressed as $\op H = \op W (\Delta \otimes \1_n) \op W^\dagger$, where $\op W$ is some unitary operator, $\Delta$ is the Laplacian of $G$, and $\1_n$ acts on the $n$-dimensional Hilbert space attached to each vertex of $G$.
\end{definition}
This definition can be easily extended also to non-connected ULGs as well.

The following lemma is important for analysing the spectrum of any Hamiltonian coming from a simple ULG. It reduces the problem to analysing instead the spectrum of the corresponding graph Laplacian.
\begin{lemma}\label{lem:ulgsimpleclassical}
  Any simple ULG is semi-classical.
\end{lemma}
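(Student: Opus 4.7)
The plan is to exhibit the unitary $\op W$ explicitly as a ``parallel transport'' induced by the ULG labels, and then verify by direct substitution that it conjugates $\op H(G)$ into $\Delta\otimes\1_n$. Without loss of generality I would treat a connected ULG (the general case follows by applying the argument componentwise and then taking direct sums, exactly as with the extension of $\op H(G)$ to non-connected ULGs).

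First I would fix an arbitrary root vertex $r\in\set S$. For every $v\in\set S$, pick any directed path from $r$ to $v$ and define $\op W_v\in\U(\hs)$ as the ordered product of the unitary labels along that path (with labels on reversed edges replaced by their adjoints, as allowed by \cref{def:ulg}). By the simple property of $G$ (\cref{def:ulg-simple}), the product of labels along any directed cycle is $\1$, and hence $\op W_v$ does not depend on the chosen path from $r$ to $v$; in particular, $\op W_r=\1$. I would then set
\[
  \op W := \sum_{v\in\set S} \proj{v}\otimes \op W_v
\]
on $\sset C^{\set S}\otimes\hs$, which is clearly unitary since $\op W^\dagger\op W = \sum_v \proj{v}\otimes \op W_v^\dagger \op W_v = \1$.

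The key identity to verify is that for every rule $(a\leftrightarrow b,\op U)\in\set R$ one has $\op U = \op W_b\op W_a^\dagger$. This is immediate from simplicity: concatenating the chosen path from $r$ to $a$ with the edge $a\to b$ yields a path from $r$ to $b$ with associated unitary $\op U\,\op W_a$, which by path-independence must equal $\op W_b$. Using the form of $\op H$ from \cref{eq:HIU}, I would then compute term by term:
\[
  \op W^\dagger(\ketbra{b}{a}\otimes\op U)\op W
  = \ketbra{b}{a}\otimes \op W_b^\dagger\op U\op W_a
  = \ketbra{b}{a}\otimes\1,
\]
and analogously $\op W^\dagger(\ketbra{a}{b}\otimes\op U^\dagger)\op W = \ketbra{a}{b}\otimes\1$, while the diagonal terms $\proj{a}\otimes\1$ and $\proj{b}\otimes\1$ are manifestly invariant under conjugation by $\op W$. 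Summing over $\set R$ produces $\sum_{(a\leftrightarrow b,\op U)\in\set R}(\proj{a}+\proj{b}-\ketbra{a}{b}-\ketbra{b}{a})\otimes\1_n$, which is exactly $\Delta(G)\otimes\1_n$ by \cref{eq:Lap}.

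I do not expect a serious obstacle here; the whole argument is essentially the observation that simplicity is the discrete analogue of a flat connection, so $\op W_v$ is a well-defined ``holonomy'' from the root. The only bookkeeping point worth being explicit about is the direction convention: one must check that the definition $\op U=\op W_b\op W_a^\dagger$ is consistent with the symmetry $g(ab)=g(ba)^\dagger$ in \cref{def:ulg}, so that replacing $(a\leftrightarrow b,\op U)$ by $(b\leftrightarrow a,\op U^\dagger)$ gives the same term in $\op H(G)$ (which is already guaranteed by \cref{prop:ulglaplacian}). One may further note that $\op W$ can be computed in polynomial time via a breadth-first search along a spanning tree of $G$, matching the remark following \cref{th:simple-semiclassical-intro}.
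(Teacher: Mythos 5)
Your proof is correct, and it is a cleaner presentation of essentially the same idea as the paper. The paper also constructs the diagonalizing unitary as $\op W = \sum_v \proj{v}\otimes(\text{holonomy})$, but it builds it up iteratively as a product $\op W = \prod_k \op W_{a_k}$ of local controlled unitaries, expanding a frontier of vertices $\set S_1\subset\cdots\subset\set S_m$ in a BFS-like manner and carrying an induction that each step trivializes all edges inside the current frontier. You instead define the holonomy $\op W_v$ from a root $r$ directly, invoke simplicity once for path-independence (and once more for the identity $\op U=\op W_b\op W_a^\dagger$), and verify the conjugation by a one-line computation using \cref{eq:HIU}. The paper's iterative version makes the spanning-tree/BFS algorithm explicit and shows that the trivialization can be done locally a few edges at a time, which is perhaps why they chose it; but as a proof of the lemma, your direct definition shortcuts the induction entirely and is arguably the more transparent argument. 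Both rely on simplicity at the same conceptual point, and the non-connected case is handled identically by both (componentwise + direct sum).

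One minor bookkeeping note: you might want to spell out the composition order once (a path $r\to a\to b$ has holonomy $\op U\op W_a$ because unitaries act from the left), since this is where a sign/adjoint error would otherwise creep in; you state the result correctly but leave the check to the reader.
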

\begin{proof}
  Denote the UGL by $G = (\set S, \set R)$ where $\set S$ and $\set R$ are the sets of vertices and rules, respectively. If $G$ has disjoint components, $\op H$ is block-diagonal and we can deal with each block separately, hence we can assume without loss of generality that $G$ is connected and all its vertices have isomorphic attached Hilbert spaces \hs.

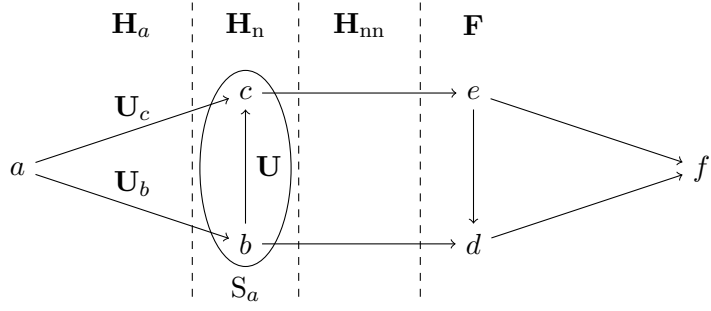
\begin{figure}
  \centering
  \begin{tikzpicture}
    \node (a) at (0,1) {$a$};
    \node (b) at (3,0) {$b$};
    \node (c) at (3,2) {$c$};
    \node (d) at (6,0) {$d$};
    \node (e) at (6,2) {$e$};
    \node (f) at (9,1) {$f$};
    \draw (3,1) ellipse [x radius = 0.6, y radius = 1.3];
    \node at (3.0,-0.6) {$\set S_a$};   \draw[dashed] (2.3,3.2) -- +(0,-4.0);
    \node at (1.5, 2.9) {$\op H_a$};    \draw[dashed] (3.7,3.2) -- +(0,-4.0);
    \node at (3.0, 2.9) {$\op H_\mathrm{n}$};    \draw[dashed] (5.3,3.2) -- +(0,-4.0);
    \node at (4.5, 2.9) {$\op H_\mathrm{nn}$};
    \node at (6.0, 2.9) {$\op F$};
    \draw[->] (a) -- node[above] {$\op U_b$} (b);
    \draw[->] (a) -- node[above] {$\op U_c$} (c);
    \draw[->] (b) -- node[right] {$\op U$} (c);
    \draw[->] (b) -- (d);
    \draw[->] (e) -- (d);
    \draw[->] (c) -- (e);
    \draw[->] (d) -- (f);
    \draw[->] (e) -- (f);
  \end{tikzpicture}
  \caption{Terms of $\op H$, see \cref{eq:HHHR}, grouped according to how far the corresponding edges are from the chosen vertex $a$. For this ULG to be simple, the labels of edges forming the triangle $abc$ must satisfy $\op U_c\ct \op U \op U_b = \1$. We have omitted the labels of all other edges.}
  \label{fig:triangle}
\end{figure}

  Pick an arbitrary vertex $a \in \set S$ and denote its set of neighbours by $\set S_a$. Using \cref{prop:ulglaplacian}, rewrite $\set R$ in a form where $a$ only has outgoing edges. Following \cref{eq:HIU}, define the term that encodes rule $(a\leftrightarrow b,\op U_b) \in \set R$ as follows:
  \begin{equation}\label{eq:hab}
    \op h_{ab}:=
    \proj{a}\otimes\1+\proj{b}\otimes\1-\ketbra{a}{b}\otimes\op U_b^\dagger-\ketbra{b}{a}\otimes\op U_b\comma
  \end{equation}
  where the subscript of $\op U_b$ identifies the vertex with incoming edge. Then the terms of $\op H$ can be grouped as follows (see \cref{fig:triangle}):
  \begin{align}
    \op H&=\sum_{\substack{(a\leftrightarrow b,\op U_b)\\b\in\set S_a}}\op h_{ab}+
    \sum_{\substack{(b\leftrightarrow c,\op U_c)\\b,c\in\set S_a}}\op h_{bc}+
    \sum_{\substack{(b\leftrightarrow d,\op U_d)\\b\in\set S_a \land d\notin \set S_a \cup \{a\}}}\op h_{bd}+\op F \nonumber\\
    &=:\op H_a+\op H_\mathrm{n}+\op H_\mathrm{nn}+\op F\comma \label{eq:HHHR}
  \end{align}
  where $\op F$ denotes the rest of the terms and all sums range over $\set R$, with some restrictions on the endpoints of the edges.
  Our strategy now is to apply a sequence of unitary transformations to bring the Hamiltonian $\op H$ to the desired form, a few terms at a time.

  First, for the given vertex $a \in \set S$, define the following unitary:
  \[
    \op W_a:=\prod_{\substack{(a\leftrightarrow b,\op U_b)\\b\in\set S_a}}\bigl(\proj b\otimes\op U_b+(\1-\proj b)\otimes\1\bigr)\point
  \]
  Observe that all terms in the product commute. Moreover, $\op W_a^\dagger\op F\op W_a=\op F$ and, by \cref{eq:hab},
  \[
    \op W_a^\dagger\op h_{ab}\op W_a=(\proj a+\proj b-\ketbra ab-\ketbra ba)\otimes\1
  \]
  for all $b \in \set S_a$, so $\op W_a$ takes care of all terms of $\op H_a$ simultaneously.

  For the terms in $\op H_\mathrm{n}$, pick any edge $(b\leftrightarrow c, \op U) \in \set R$, with $b,c\in\set S_a$, and note from \cref{eq:hab} that
  \[
    \op W_a^\dagger\op h_{bc}\op W_a
    = \proj{b} \otimes \1 + \proj{c} \otimes \1
    - \ketbra{b}{c} \otimes \op U_b\ct \op U\ct \op U_c
    - \ketbra{c}{b} \otimes \op U_c\ct \op U \op U_b\point
  \]
  However, since $abc$ is a cycle and the ULG is simple, the product of unitaries along the cycle must be $\1$, i.e.~$\op U_c^\dagger\op U\op U_b=\1$ (see \cref{fig:triangle}), so the formula simplifies to
  \begin{align*}
    \op W_a^\dagger\op h_{bc}\op W_a
   &= (\proj{b} + \proj{c} - \ketbra{b}{c} - \ketbra{c}{b}) \otimes \1 \\
   &= (\ket{b} - \ket{c})(\bra{b} - \bra{c}) \otimes \1\point
  \end{align*}
By a similar argument, we can show that the rules in $\op H_\mathrm{nn}$ change their unitary by a factor of $\op U_b$ when outgoing, or $\op U_b^\dagger$ when incoming, respectively. We are left with a ULG with a new set of rules, namely, every edge that is either attached to $a$ or between two different neighbours of $a$ is trivial, i.e.\ the edge unitary is the identity operator $\1$.

  We will apply the same procedure to different vertices until all edges become trivial. More specifically, we consider an arbitrary sequence of subsets $\set S_1 \subset \set S_2 \subset \dotsb \subset \set S_m \subset \set S$, starting from any vertex $\{a_1\} = \set S_1$ and ending with the set of all vertices $\set S$, such that each subsequent $\set S_{k+1}$ can be obtained from $\set S_k$ by including all neighbours of some vertex $a_k \in \set S_k$. The overall unitary is then $\op W = \prod_{k=1}^m \op W_{a_k}$, where the product is over the sequence of vertices $a_1, a_2, \dotsc, a_m \in \set S$. Each successive unitary $\op W_{a_k}$ is obtained from the current set of rules $\set R_k$, where $\set R_1 = \set R$ is the original set while all rules in the final set are trivial. Our goal is to show that, at every step $k$, we can guarantee that each rule in $\set R_k$ has a trivial unitary whenever both endpoints of the corresponding edge are in $\set S_k$.

  We proceed by induction. Since there are no edges between vertices in $\set S_1 = \{a_1\}$, the induction basis is trivial. Assuming the inductive hypothesis holds for $k$, we take $a_k \in \set S_k$ and apply the unitary $\op W_{a_k}$ that acts non-trivially to all neighbours of $a_k$ (recall that $\set S_{k+1}$ is formed by $\set S_k$ together with the neighbours of $a_k$). As discussed above, $\op W_{a_k}$ trivializes all edges between $a_k$ and any of its neighbours. By simplicity of the ULG, it trivializes also all edges between any two different neighbours of $a_k$. Moreover, it does not affect any edges within $\set S_k$ (they are trivial already by the inductive assumption). In other words, all edges between vertices in $\set S_{k+1}$ are trivial, thus completing the induction.

Since all edge unitaries have now been transformed to the trivial unitary $\1$, all terms in $\op W^\dagger\op H\op W$ are of the form
\begin{equation}
  (\proj{a} + \proj{b} - \ketbra{a}{b} - \ketbra{b}{a}) \otimes \1
  = (\ket{a} - \ket{b})(\bra{a} - \bra{b}) \otimes \1 \comma
\end{equation}
for some $a,b \in \set S$. Comparing this to \cref{eq:Lap}, the overall Hamiltonian is in fact equivalent to the Laplacian $\Delta$ of $G$, i.e.~$\op W^\dagger\op H\op W=\Delta\otimes\1$.
\end{proof}

\begin{lemma}\label{lem:qevolution}
  Let $G$ be a simple ULG with vertices $\set S$ and rules $\set R$. Write $G=G_1\oplus\dotsb\oplus G_N$, $G_i=(\set V_i,\set E_i)$ for the associated Laplacian of the induced classical graph $(\set S,\set R')$ with $\set R':=\{a\leftrightarrow b:(a\leftrightarrow b,\op U)\in\set R\}$, and pick an arbitrary $v_i\in\set V_i\ \forall i$. Let further $n_i:=\dim\hs_i$ and choose a basis $\{\ket{e_{i,j}}\}_j$ of $\hs_i$ for all connected graph components $G_i$. Then the ground space $\ker\op H$ is spanned by the set $\{\ket{\Psi_{i,0}},\ldots,\ket{\Psi_{i,n_i}}\}_{i=1}^N$, where
  \[
    \ket{\Psi_{i,j}}:=\frac1{\sqrt{|\set V_i|}}\sum_{s\in \set V_i}\ket{s}\otimes\ket{q_s}
    \mt{and}
    \ket{q_s}=\begin{cases}
      \ket{e_{i,j}} & \text{if\ } s=v_i,\\
      \op U\ket{q_r} & \text{if\ } (r\leftrightarrow s,\op U)\in\set E_i.
    \end{cases}
  \]
  Furthermore, the $\ket{\Psi_{i,j}}$ form a basis of $\ker\op H$.
\end{lemma}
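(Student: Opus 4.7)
My plan is to reduce the claim to the classical Laplacian case via \cref{lem:ulgsimpleclassical,lem:laplace-gs}, and then verify that applying the diagonalizing unitary $\op W$ constructed in the proof of \cref{lem:ulgsimpleclassical} to the obvious ground states of $\Delta\otimes\1$ produces exactly the states $\ket{\Psi_{i,j}}$ in the statement. Since $\op H$ is block-diagonal with respect to the connected-component decomposition $G=G_1\oplus\dotsb\oplus G_N$, it suffices to work component by component; so first I would fix $i$ and treat the connected ULG on $G_i$ with attached Hilbert space $\hs_i$.

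First I would observe that the recursion defining $\ket{q_s}$ is well-posed. Any two directed paths in $G_i$ from $v_i$ to $s$ differ by a closed directed cycle, and by simplicity of the ULG (\cref{def:ulg-simple}) the product of edge unitaries along any cycle is $\1$, so $\ket{q_s}$ is independent of the path chosen. With $\ket{q_s}$ well-defined, the state $\ket{\Psi_{i,j}}$ makes sense and is manifestly normalized (unitaries preserve the norm of $\ket{e_{i,j}}$, and there are $|\set V_i|$ equal-norm terms summed with the right prefactor).

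Next, I would invoke \cref{lem:ulgsimpleclassical} to obtain an explicit unitary $\op W_i$ with $\op W_i^\dagger \op H|_{G_i} \op W_i = \Delta(G_i)\otimes \1_{n_i}$. From \cref{lem:laplace-gs} the ground space of $\Delta(G_i)\otimes\1_{n_i}$ is spanned by the $n_i$ orthonormal vectors $\ket{\Phi_i}\otimes\ket{e_{i,j}}$ with $\ket{\Phi_i}=|\set V_i|^{-1/2}\sum_{s\in\set V_i}\ket s$. Applying $\op W_i$ gives a basis $\ket{\Psi_{i,j}}':=\op W_i(\ket{\Phi_i}\otimes\ket{e_{i,j}})$ of $\ker(\op H|_{G_i})$. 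The key step is then to verify that $\ket{\Psi_{i,j}}'=\ket{\Psi_{i,j}}$. Inspecting the construction of $\op W_i$ in the proof of \cref{lem:ulgsimpleclassical}, one sees that $\op W_i$ was built as a product of local unitaries, each of which replaces $\proj b\otimes\1$ by $\proj b\otimes \op U_b$ along edges of a spanning tree rooted at a chosen base vertex. Consequently, the action of $\op W_i$ on $\ket s\otimes\ket{e_{i,j}}$ is to attach to vertex $s$ the state obtained by parallel-transporting $\ket{e_{i,j}}$ from the root along the tree path to $s$; by path-independence (again simplicity) this agrees with $\ket{q_s}$ for the root choice $v_i$ (possibly up to a trivial relabelling of basepoint which, by simplicity, only reshuffles the basis $\{\ket{e_{i,j}}\}_j$ by a unitary of $\hs_i$ and hence spans the same subspace).

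Finally I would collect the components: the states $\{\ket{\Psi_{i,j}}\}_{i,j}$ are mutually orthogonal (orthogonal vertex-supports across components; unitarily-transported orthonormal fibres within a component), and their count $\sum_i n_i$ matches $\dim\ker\op H=\sum_i\dim\ker(\op H|_{G_i})=\sum_i n_i$ by \cref{lem:laplace-gs} applied to each block. Hence they form an orthonormal basis of $\ker\op H$. The main obstacle I anticipate is the bookkeeping required to show rigorously that $\op W_i(\ket{\Phi_i}\otimes\ket{e_{i,j}})$ coincides with $\ket{\Psi_{i,j}}$ for a choice of basepoint that may differ from the one used implicitly in the inductive construction of $\op W_i$; this is handled by observing that any change of basepoint $v_i\mapsto v_i'$ conjugates each $\ket{q_s}$ by a fixed unitary of $\hs_i$ (namely the transport unitary from $v_i$ to $v_i'$), so the overall span is unchanged.
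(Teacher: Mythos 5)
Your proposal matches the paper's proof route exactly: invoke \cref{lem:ulgsimpleclassical} to get $\op W^\dagger \op H \op W = \Delta\otimes\1$, apply \cref{lem:laplace-gs} to obtain the ground-space basis $\ket{\Phi_i}\otimes\ket{e_{i,j}}$, and push this basis forward through $\op W$ to land on $\ket{\Psi_{i,j}}$. The paper condenses the final step to ``which can be easily verified,'' so your additional checks---well-posedness of the recursion for $\ket{q_s}$ via path-independence, and the invariance of the span under a change of base vertex $v_i$---fill in precisely the details the paper leaves implicit.
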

\begin{proof}
  Because the ULG is simple, by \cref{lem:ulgsimpleclassical}, there exists a unitary $\op W$ and a classical Laplacian $\Delta$ such that $\op W^\dagger\op H\op W=\Delta\otimes\1_n$, and hence $\ker\op H=\ker\Delta\otimes\1_n$. By \cref{lem:laplace-gs}, the ground space of $\Delta\otimes\1_n$ has a basis given by
  \[
    \ket{\Phi_{i,j}}:=\frac1{\sqrt{|\set V_i|}}\sum_{s\in \set V_i}\ket{s}\otimes\ket{e_i},\quad i=1,\ldots,N\mt{and}j=1,\ldots,n\point
  \]
  Observe that
  \[
    \op W\ket{\Phi_{i,j}}=\frac1{\sqrt{|\set V_i|}}\sum_{s\in\set V_i}\prod_{a\in\set V_i}\op W_a\ket s\otimes\ket{e_i}=\ket{\Psi_{i,j}}\comma
  \]
  which can be easily verified.
\end{proof}

\subsection{Kitaev's Geometrical Lemma for Unitary Labelled Graphs}
Analogous to \cref{sec:kitaev-graph}, we extend the notion of penalizing vertices to ULGs. First we state an immediate corollary from \cref{lem:kitaev-graphs}.
\begin{corollary}
  Take a connected simple ULG with Hilbert space $\hs$ for all vertices $s\in\set S$. Pick a non-empty subset of vertices $\set P\subsetneq\set S$ and write the \emph{penalized associated Hamiltonian} $\op H_\set P(G):=\op H(G)+\op P(G,\set P)\otimes\1_{\dim\hs}$. Then $\lmin(\op H_{\set P}(G))=\Omega(1/|\set S|^3)$.
\end{corollary}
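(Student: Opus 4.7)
The plan is to reduce the corollary to the classical Kitaev lemma for graphs (\cref{lem:kitaev-graphs}) by showing that the penalized Hamiltonian is unitarily equivalent to a penalized graph Laplacian tensored with identity. Since the ULG is simple and connected, \cref{lem:ulgsimpleclassical} supplies a unitary $\op W$ such that $\op W^\dagger \op H(G) \op W = \Delta(G) \otimes \1_{\dim\hs}$, where $\Delta(G)$ is the Laplacian of the underlying graph. So the task reduces to showing that conjugation by $\op W$ leaves the penalty term $\op P(G,\set P) \otimes \1_{\dim\hs}$ invariant, and then invoking \cref{lem:kitaev-graphs} on the resulting diagonal block.

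First I would inspect the explicit construction of $\op W$ from the proof of \cref{lem:ulgsimpleclassical}. Each factor $\op W_{a_k} = \prod_{b \in \set S_{a_k}} (\proj{b} \otimes \op U_b + (\1 - \proj{b}) \otimes \1)$ is a sum of terms of the form $\proj{v} \otimes \op V_v$, so it is block-diagonal with respect to the vertex basis $\{\ket{v}\}_{v \in \set S}$. Consequently $\op W = \prod_k \op W_{a_k}$ is also of the form $\sum_{v \in \set S} \proj{v} \otimes \op V_v$ for some unitaries $\op V_v \in \U(\hs)$. Since $\op P(G,\set P) \otimes \1_{\dim\hs} = \sum_{v \in \set P} \proj{v} \otimes \1$ is diagonal in the vertex basis, it commutes with every such block-diagonal $\op W$, hence
\[
  \op W^\dagger \bigl( \op P(G,\set P) \otimes \1_{\dim\hs} \bigr) \op W = \op P(G,\set P) \otimes \1_{\dim\hs}.
\]

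Combining the two equalities yields
\[
  \op W^\dagger \op H_\set P(G) \op W = \bigl( \Delta(G) + \op P(G,\set P) \bigr) \otimes \1_{\dim\hs} = \Delta_\set P(G) \otimes \1_{\dim\hs},
\]
so the spectrum of $\op H_\set P(G)$ is exactly that of $\Delta_\set P(G)$, with each eigenvalue repeated $\dim\hs$ times. In particular $\lmin(\op H_\set P(G)) = \lmin(\Delta_\set P(G))$, and applying \cref{lem:kitaev-graphs} with $|\set V| = |\set S|$ gives $\lmin(\op H_\set P(G)) = \Omega(1/|\set S|^3)$, as claimed.

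There is no real obstacle here: the only non-obvious point is the commutation of $\op W$ with the penalty, which is immediate from the block-diagonal form of the change-of-basis unitary produced in \cref{lem:ulgsimpleclassical}. The proposal does not use any property of $\hs$ beyond finite-dimensionality, and the bound is tight in the same sense as \cref{lem:kitaev-graphs}.
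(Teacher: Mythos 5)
Your proof is correct and follows the same route as the paper: conjugate by the unitary $\op W$ from \cref{lem:ulgsimpleclassical} to reduce to the classical penalized Laplacian and then invoke \cref{lem:kitaev-graphs}. In fact your argument is slightly more careful than the paper's one-line proof, which silently relies on the fact you make explicit---that $\op W$ is block-diagonal in the vertex basis, i.e.\ of the form $\sum_{v}\proj v\otimes\op V_v$, and hence commutes with $\op P(G,\set P)\otimes\1$, so that $\op W^\dagger\op H_\set P(G)\op W = \Delta_\set P(G)\otimes\1$.
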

\begin{proof}
  $G$ is simple and $\op H(G)$ has the same spectrum as $\Delta(G)$, up to multiplicity. Now use \cref{lem:kitaev-graphs} on $\Delta(G)+\op P(G,\set P)$.
\end{proof}

A more interesting case is when one does not want to penalize the entire Hilbert space attached to a vertex, but only a subspace. This is captured in the following lemma.
\begin{lemma}\label{lem:kitaev-ulgs}
  Take a connected simple ULG with Hilbert space $\hs$ for all vertices $s\in\set S$. Pick a non-empty subset of vertices $\set P\subsetneq\set S$ and a set of projectors $\Pi=\{\Pi_p\}_{p\in\set P}$ on $\hs$. For $\op H_\set P(G,\Pi):=\op H(G)+\sum_{p\in\set P}\proj p\otimes\Pi_p$, we have
  $\lmin(\op H_\set P(G,\Pi))\ge\mu\Omega(1/|\set S|^3)$, where $\mu=1-\max\{|\lambda_{\max}(\Pi_i^c\op U_{ij}\Pi_j^c)|:p_i,p_j\in\set P,i\neq j\}$ and $\op U_{ij}$ is the product of unitaries of a path connecting vertices $p_i$ and $p_j$.
\end{lemma}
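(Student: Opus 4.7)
My plan is to reduce the claim to \cref{lem:kitaev} by first diagonalising the ULG part of $\op H_\set P$ via the unitary $\op W$ supplied by \cref{lem:ulgsimpleclassical} (available because $G$ is simple), and then invoking \cref{lem:kitaev} twice: once at the top level for the pair (transformed Laplacian, transformed penalty), and a second time to lower-bound the angle between their kernels. Tracking the inductive construction of $\op W$ in the proof of \cref{lem:ulgsimpleclassical}, one sees that $\op W$ acts diagonally in the vertex register as $\op W(\ket s\otimes\ket\psi)=\ket s\otimes\op V_s\ket\psi$, where $\op V_s$ is the product of edge unitaries along any path from a fixed root to $s$---well-defined by simplicity---and in particular $\op V_i\op V_j^\dagger=\op U_{ji}$. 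Conjugation then gives
\[
  \op W^\dagger\op H_\set P(G,\Pi)\op W
  = \Delta(G)\otimes\1 + \sum_{p\in\set P}\proj p\otimes\tilde\Pi_p,
  \qquad\tilde\Pi_p:=\op V_p^\dagger\Pi_p\op V_p,
\]
which is isospectral to the original operator.

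I will call these two summands $\op A$ and $\op B$ and apply \cref{lem:kitaev} with $\lminpos(\op A)=a(G)=\Omega(1/|\set S|^2)$ by \cref{cor:a(G)} and $\lminpos(\op B)=1$, giving $\mu_K=\Omega(1/|\set S|^2)$. What remains is the angle $\theta$ between $\ker\op A=\spn\{\ket{\Phi_\set S}\}\otimes\hs$ and $\ker\op B$. Projecting a unit vector $\ket{\Phi_\set S}\otimes\ket\psi$ onto $\ker\op B$ through $\sum_{s\notin\set P}\proj s\otimes\1+\sum_{p\in\set P}\proj p\otimes\tilde\Pi_p^c$ and optimising over $\ket\psi$ gives
\[
  \cos^2\theta = 1 - \frac{\lmin(\op M)}{|\set S|},
  \qquad\op M:=\sum_{p\in\set P}\tilde\Pi_p.
\]
To bound $\lmin(\op M)\ge\mu$, I will use $\op M\ge\tilde\Pi_i+\tilde\Pi_j$ for every pair $i\ne j$; a sum of two projectors satisfies $\lmin(\tilde\Pi_i+\tilde\Pi_j)=1-\|\tilde\Pi_i^c\tilde\Pi_j^c\|$ (either by a direct eigenvalue computation in a $2\times 2$ block decomposition, or by a second use of \cref{lem:kitaev}), and $\tilde\Pi_i^c\tilde\Pi_j^c=\op V_i^\dagger\Pi_i^c\op U_{ji}\Pi_j^c\op V_j$ combined with unitary invariance of the operator norm reduces this to $1-\|\Pi_i^c\op U_{ji}\Pi_j^c\|$. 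Taking the worst pair gives $\lmin(\op M)\ge\mu$; in particular $\mu>0$ forces $\ker\op A\cap\ker\op B=\{0\}$, so \cref{lem:kitaev} indeed applies.

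Putting everything together, $\cos\theta\le\sqrt{1-\mu/|\set S|}$ and hence $2\sin^2(\theta/2)=1-\cos\theta\ge\mu/(2|\set S|)$, so \cref{lem:kitaev} yields $\lmin(\op H_\set P(G,\Pi))\ge 2\mu_K\sin^2(\theta/2)=\mu\cdot\Omega(1/|\set S|^3)$, as claimed. The main obstacle I expect is the careful bookkeeping around $\op W$: one must verify from the inductive construction in \cref{lem:ulgsimpleclassical} that the induced per-vertex unitary really satisfies $\op V_i\op V_j^\dagger=\op U_{ji}$, which is exactly where simplicity is used and what forces $\Pi_i^c\op U_{ij}\Pi_j^c$ to appear inside $\mu$. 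A secondary, cosmetic point is reconciling the $|\lambda_{\max}|$ in the statement with the operator norm produced by the angle computation; I would read the former as the largest singular value, under which the two quantities coincide.
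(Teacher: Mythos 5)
Your proof is correct and takes essentially the same route as the paper's: conjugate by the unitary $\op W$ from \cref{lem:ulgsimpleclassical} (which indeed acts block-diagonally in the vertex register, $\op W(\ket s\otimes\ket\psi)=\ket s\otimes\op V_s\ket\psi$, as your bookkeeping confirms), then apply \cref{lem:kitaev} to the pair $\Delta\otimes\1$ and the transformed penalty, and compute the angle between their kernels. The only real difference is in how that angle is bounded. The paper fixes a root $r\in\set P$, bounds $\Pi_r^c + \sum_{p\ne r}\op R_p\Pi_p^c\op R_p^\dagger$ by a nested application of the $(1+\cos\vartheta)$ inequality from \cref{lem:kitaev}, and then specializes to the pair $(r,p)$. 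You instead compute $\cos^2\theta = 1-\lmin(\op M)/|\set S|$ with $\op M=\sum_p\tilde\Pi_p$ and lower-bound $\lmin(\op M)$ by discarding all but one pair $\tilde\Pi_i+\tilde\Pi_j$, then use $\lmin(P+Q)\ge 1-\|P^cQ^c\|$ together with unitary invariance. Your version is, if anything, slightly cleaner: the identity $\cos^2\theta = 1-\lmin(\op M)/|\set S|$ is exact and scalar, and isolating a single pair avoids the paper's intermediate chain of estimates (which as written conflates $\cos\theta$ with $\cos^2\theta$ in one line and relies on an averaging step that is only transparent for $|\set P|=2$, though neither affects the final $\Omega(1/|\set S|^3)$ rate). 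Two small notes: you assert equality in $\lmin(\tilde\Pi_i+\tilde\Pi_j)=1-\|\tilde\Pi_i^c\tilde\Pi_j^c\|$, which can fail in degenerate $1\times1$ CS-blocks, but the inequality $\ge$ that you actually use is unconditionally valid; and, as you already observe, reading $|\lambda_{\max}|$ as the largest singular value is the right interpretation, and it also makes the direction of $\op U_{ij}$ immaterial.
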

\begin{proof}
  First note that the $\op U_{ij}$ are well-defined, since the ULG is simple and connected. Construct $\op W$ such that the root of the spanning tree is one of the penalized vertices, namely $r$ with projector $\Pi_r$. Then
  \begin{align*}
    \op W^\dagger\op H_P(G,\Pi)\op W&=\Delta(G)\otimes\1+\proj r\otimes\Pi_r+\sum_{p\neq r}\proj p\otimes\op R_p\Pi_p\op R_p^\dagger\\
                                    &=:\Delta\otimes\1+\op A
  \end{align*}
  where the $\op R_p$ are the product of unitaries connecting vertex $p$ with the root $r$. Following the notation of \cref{lem:kitaev-graphs}, we want to calculate the angle between the kernels of the Laplacian and the penalty terms. We write $\Pi_{\op A}:=\1\otimes\1-\sum_p\proj p\Pi_p$ for the projector onto the kernel of the penalty terms. Then
  \[
    \op W^\dagger\Pi_{\op A}\op W=\proj r\otimes\Pi_r^c+\sum_{p\neq r}\proj p\otimes \op R_p\Pi_p^c\op R_p^\dagger+\sum_{v\not\in\set P}\proj v\otimes\1\point
  \]
  Noting that the kernel of $\Delta(G)\otimes\1$ is spanned by $\{\ket{\Psi_\set V}\otimes\ket\phi:\ket\phi\in\hs\}$, we get
  \begin{align*}
    \cos\theta&=\max_{\ket\phi}\bra{\Phi_\set V}\bra\phi\op W^\dagger\Pi_\op A\op W\ket{\Phi_\set V}\ket\phi=\frac1{|\set V|}\max_{\ket\phi}\sum_{\ket v,\ket{v'}}\bra v\bra\phi\op W^\dagger\Pi_\op A\op W\ket{\Phi_\set V}\ket{v'}\ket\phi\\
               &=\frac{|\set V|-|\set P|}{|\set V|}+\frac1{|\set V|}\max_{\ket\phi}\bra\phi\Pi_r^c+\sum_{p\neq r}\op R_p\Pi_p^c\op R_p^\dagger\ket\phi\\
                &\le\frac{|\set V|-|\set P|}{|\set V|}
    +\frac{|P|-1}{|\set V|}(1+\cos\vartheta)\le
    1-\frac{1}{|\set V|}(1-\cos\vartheta)\comma
  \end{align*}
  where we used \cref{eq:kitaev} in \cref{lem:kitaev} in the last line with a bound on the angle between subspaces $\cos\vartheta=\max_{p\neq r}\sphericalangle(\supp\Pi_r^c,\supp \op R_p\Pi_p^c\op R_p^\dagger)$ and $|\set P|\ge2$. We can bound this further by
  \begin{align*}
    \cos^2\vartheta&=\max_{p\neq r}\max_{\substack{\ket\xi\in\supp\Pi_r^c\\ \ket\eta\in\supp \op R_p\Pi_p^c\op R_p^\dagger}}|\braket\xi\eta|^2=
    \max_{p\neq r}\max_{\ket\xi,\ket\eta}|\bra\xi\Pi_r^c\op R_p\Pi_p^c\op R_p^\dagger\ket\eta|^2\\
    &\equiv\max_{p\neq r}\max_{\ket\xi,\ket\eta}|\bra\xi\Pi_r^c\op R_p\Pi_p^c\ket\eta|^2\le\max_{p\neq r}|\lambda_\text{max}(\Pi_r^c\op R_p\Pi_p^c)|^2=:\lambda_\text{max}^2\point
  \end{align*}
  The rest follows \cref{lem:kitaev-graphs}:
  \[
    2\sin^2\frac\theta2=1-\cos\theta
    \ge1-1+\frac{1}{|\set V|}(1-\lambda_\textbf{max})=\frac{1-\lambda_\text{max}}{|\set V|}\comma
  \]
  and the claim follows.
\end{proof}

\section{Quantum Thue Systems}\label{sec:qts}
\subsection{Thue Systems}
Let us briefly recall the idea behind classical Thue systems, also known as string rewriting systems.
\begin{definition}\label{def:ts}
  A \emph{Thue system}---TS for short---is a tuple $(\Sigma,\set R)$ of a finite alphabet $\Sigma$ and a finite symmetric binary relation $R\subset\Sigma^*\times\Sigma^*$, where $\Sigma^*:=\bigcup_{i=0}^\infty\Sigma^i$ denotes the set of all \emph{strings} over the alphabet $\Sigma$.
\end{definition}
The binary relation $R$ is usually written as a set of rewrite rules $a\leftrightarrow b$ and they are naturally extended to other strings in $\Sigma^*$: if $a\leftrightarrow b\in \set R$, then $c\leftrightarrow d$ in $\set R$ if there exist $u,v\in\Sigma^*$ such that $c=uav$ and $d=ubv$. Thue systems with this extension---denoted by $\set R^*$---are a special case of abstract reduction systems and well-studied as computational models---see~\cite{Thomas2010}.

Thue systems are multiway systems, i.e.~starting from an initial string $a\in\Sigma^*$, exactly one substring is replaced at a time---in particular, this means that there might be branching points when the substitution is not unique.
For our purpose, it is enough to consider length-preserving substitutions, i.e.~any space $\Sigma^i$ should be invariant under $\set R$; we denote the length of any string $s$ with $|s|$.
In this case, there exists a natural representation of the Thue system over strings of length $N$ as a finite, undirected and not necessarily connected graph.
\begin{definition}\label{def:ts-graph}
  Let $(\Sigma,\set R)$ be a TS and $N\in\sset N$. The \emph{associated graph} $G=(\set V,\set E)$ for strings of length $N$ has vertices $\set V:=\set \Sigma^N$ and edges $\set E:=\{(a,b):a\leftrightarrow b\in\set R^*\}$. The \emph{Laplacian} of the TS is defined as the discrete Laplacian of the associated graph $G$. For brevity we just write $G=(\Sigma,\set R)$.
\end{definition}

\begin{example}\label{ex:ts}
  Take the alphabet $\Sigma:=\{a,b,c\}$ and $\set R:=\{c\leftrightarrow b,ab\leftrightarrow cc\}$. For example, starting from string $aab$, we can obtain a chain
  $aab \leftrightarrow acc \leftrightarrow acb \leftrightarrow abb \leftrightarrow ccb \leftrightarrow bcb \leftrightarrow bbb$.
  The entire graph for strings of length 3 for this example is shown in \cref{fig:ts-example}.
\end{example}

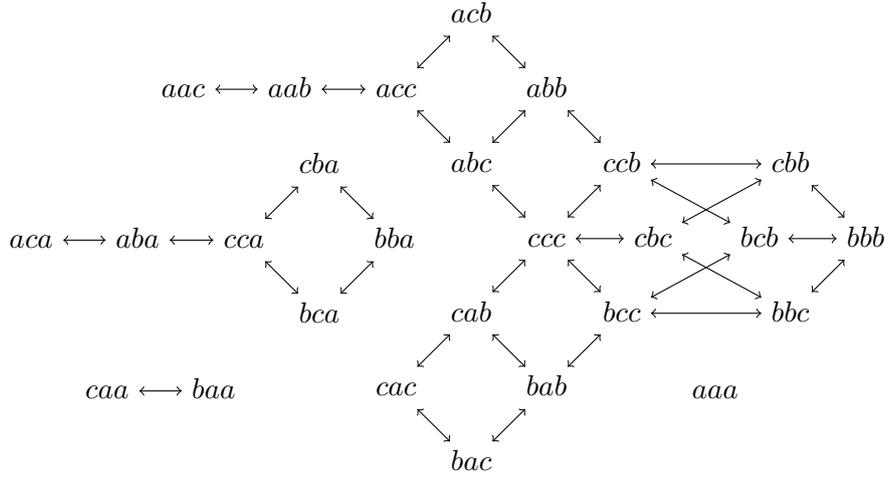
\begin{figure}
  \centering
  \begin{tikzpicture}[node distance = 14mm, text height = 1.5ex, text depth = .25ex, <->]
    \node (aac) at (0,0) {$aac$};
    \node (aab)[      right of = aac] {$aab$};  \draw (aac) -- (aab);
    \node (acc)[      right of = aab] {$acc$};  \draw (aab) -- (acc);
    \node (acb)[above right of = acc] {$acb$};  \draw (acc) -- (acb);
    \node (abc)[below right of = acc] {$abc$};  \draw (acc) -- (abc);
    \node (abb)[below right of = acb] {$abb$};  \draw (acb) -- (abb);  \draw (abc) -- (abb);
    \node (ccc)[below right of = abc] {$ccc$};  \draw (abc) -- (ccc);
    \node (ccb)[above right of = ccc] {$ccb$};  \draw (ccc) -- (ccb);  \draw (abb) -- (ccb);
    \node (cbc)[      right of = ccc] {$cbc$};  \draw (ccc) -- (cbc);
    \node (bcc)[below right of = ccc] {$bcc$};  \draw (ccc) -- (bcc);
    \node (bcb)[      right of = cbc] {$bcb$};  \draw (ccb) -- (bcb);  \draw (bcc) -- (bcb);
    \node (bbb)[      right of = bcb] {$bbb$};  \draw (bcb) -- (bbb);
    \node (cbb)[above  left of = bbb] {$cbb$};  \draw (cbc) -- (cbb);  \draw (cbb) -- (bbb);  \draw (ccb) -- (cbb);
    \node (bbc)[below  left of = bbb] {$bbc$};  \draw (cbc) -- (bbc);  \draw (bbc) -- (bbb);  \draw (bcc) -- (bbc);
    \node (cab)[below  left of = ccc] {$cab$};  \draw (cab) -- (ccc);
    \node (bab)[below right of = cab] {$bab$};  \draw (cab) -- (bab);  \draw (bab) -- (bcc);
    \node (bac)[below  left of = bab] {$bac$};  \draw (bac) -- (bab);
    \node (cac)[below  left of = cab] {$cac$};  \draw (cac) -- (cab);  \draw (cac) -- (bac);
    \node (aca) at (-2,-2) {$aca$};
    \node (aba)[right of = aca] {$aba$};  \draw (aca) -- (aba);
    \node (cca)[right of = aba] {$cca$};  \draw (aba) -- (cca);
    \node (cba)[above right of = cca] {$cba$};  \draw (cca) -- (cba);
    \node (bca)[below right of = cca] {$bca$};  \draw (cca) -- (bca);
    \node (bba)[below right of = cba] {$bba$};  \draw (cba) -- (bba);  \draw (bca) -- (bba);
    \node (caa) at (-1,-4) {$caa$};
    \node (baa)[right of = caa] {$baa$};  \draw (caa) -- (baa);
    \node (aaa) at ( 7,-4) {$aaa$};
  \end{tikzpicture}
  \caption{Undirected graph associated to strings of length 3 for the Thue system in \cref{ex:ts}.}
  \label{fig:ts-example}
\end{figure}

\begin{definition}\label{def:evolution}
  Let $(\Sigma,\set R)$ be a TS. We call a nonempty subset $\set U\subseteq\Sigma^*$ a \emph{valid evolution} if $\set U$ is closed under the transition rules $\set R^*$. We call $\set U$ \emph{irreducible} if there exists no valid evolution $\set U'\subsetneq\set U$.
\end{definition}
\begin{example}
  In \cref{ex:ts}, the set $\{caa, baa, aaa\}$ is a valid evolution, but it is not irreducible. The only irreducible evolutions for strings of length 3 in this example are the sets formed from the connected components.
\end{example}
It is immediate to see this one-to-one correspondence between connected associated graph components and irreducible evolutions.

We want to introduce a sense of locality to TS relations.
\begin{definition}\label{def:ts-locality}
  A TS $(\Sigma,\set R)$ is $k$-local, where $k:=\min\{i:\exists r\in\set R:r\in\Sigma^i\times\Sigma^i\}$.
\end{definition}
Observe how this definition is well-defined, as we required $\set R$ to be finite, cf. \cref{def:ts}.

\subsection{Quantum Thue Systems and their Hamiltonian}
We begin by generalizing the notion of Thue systems to the case where our alphabet has special quantum symbols, with rewriting rules being unitary operators between them. To work with these two alphabets, consider $\Sigma=\Sigma_{cl}\sqcup\Sigma_q$ as the union of two disjoint---classical and quantum---alphabets. For a string $s\in\Sigma^*$, write $\qn{s}$ for the number of letters from $\Sigma_q$ in $s$. This allows the following definition.
\begin{definition}\label{def:qts}
  A \emph{quantum Thue system} (QTS) is a quadruple $(\Sigma,\set R,\{\op U_r\}_{r\in\set R},\hs)$ of a bipartite alphabet $\Sigma=\Sigma_q\sqcup\Sigma_{cl}$, a relation $\set R$, a unitary operator $\op U_r$ for each rule $r\in\set R$ and a finite-dimensional Hilbert space \hs with the following properties:
  \begin{itemize}
    \item $(\Sigma,\set R)$ is a TS,
    \item $\qn{\cdot}$ is invariant under any rule $r\in\set R$,
    \item $\op U_r\in\U(\hs^{\otimes\qn{r}})$ for all $r\in\set R$.
  \end{itemize}
\end{definition}
The invariance of $\qn{\cdot}$ under a rule $r=s_1\leftrightarrow s_2$ allows to abbreviate $\qn{r}:=\qn{s_1}=\qn{s_2}$, which indicates the number of quantum letters the rule $r$ acts on.

We can again use the QTS to form sequences of strings: starting from a string $s\in\Sigma^*$, apply rules consecutively as for TSs. In addition, to each string $s\in\Sigma^*$, we attach a Hilbert space $\hs_s:=\hs^{\otimes\qn{s}}$: starting from some vector $v\in\hs_s$, each time a rule $r$ is applied to a substring, the corresponding unitary acts on the subspace wherever the rule matches, acting as identity everywhere else.

Analogous to \cref{fig:ts-example}, we can build a graph for strings of length $L$ for any QTS---where each edge is labelled by the acting unitary. The following lemma should therefore not come as a surprise.
\begin{lemma}\label{lem:qts-hamiltonian}
  Any $k$-local QTS $(\Sigma,\set R,\{\op U_r\},\hs)$ restricted to strings of a certain length $N\ge k$ is also a ULG. Furthermore, the associated Hamiltonian for strings of length $N$ is isomorphic to a geometrically $k$-local and translationally-invariant Hamiltonian on a chain $(\sset C^\Sigma\otimes\hs)^{\otimes N}$ with the same spectrum.
\end{lemma}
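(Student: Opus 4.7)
The plan is to construct the promised ULG from the QTS explicitly, then mimic its associated Hamiltonian by a translation-invariant $k$-local chain Hamiltonian via an isometric embedding into $(\sset C^\Sigma\otimes\hs)^{\otimes N}$. Following \cref{def:ts-graph}, I would first take the vertex set $\set V := \Sigma^N$ and connect $s\leftrightarrow s'$ whenever a single rule $r\in\set R$ applied at some position $j$ turns $s$ into $s'$. To each vertex $s$ I attach $\hs_s := \hs^{\otimes \qn{s}}$ and, because rules preserve $\qn{\cdot}$ by \cref{def:qts}, all $\hs_s$ on any connected component are mutually isomorphic, so \cref{def:ulg} applies. The unitary on an edge $(s,s')$ coming from rule $r$ applied at position $j$ is the one obtained from $\op U_r$ by placing its action on the $\hs$-slots of the quantum symbols inside the window $j,\dotsc,j+|r|-1$ and as $\1$ on the remaining slots; \cref{def:ulglaplacian} then gives $\op H(G)$ on $\bigoplus_s \ket s\otimes\hs_s$.

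Next I would embed the ULG Hilbert space into the chain via an isometry
\[
V\bigl(\ket s\otimes\ket{v_1}\cdots\ket{v_{\qn s}}\bigr) := \bigotimes_{j=1}^N\ket{s_j}\otimes\ket{\psi_j(s,v)},
\]
where $\ket{\psi_j(s,v)} := \ket{v_k}$ if $s_j$ is the $k$-th quantum letter of $s$ and $\ket{\psi_j(s,v)} := \ket 0$ otherwise, with $\ket 0\in\hs$ a fixed reference vector. For each rule $r=(l\leftrightarrow l')$ I would then write down an $|r|$-local term
\[
\op h_r := \sum_i\bigl(\ket l\otimes\ket{\phi_i^l} - \ket{l'}\otimes\ket{\phi_i^{l'}}\bigr)\bigl(\bra l\otimes\bra{\phi_i^l} - \bra{l'}\otimes\bra{\phi_i^{l'}}\bigr),
\]
where $\ket{\phi_i^l}$ places the components of $\ket i\in\hs^{\otimes\qn r}$ onto the $\hs$-slots at the quantum positions of $l$ and $\ket 0$ on the classical ones, and $\ket{\phi_i^{l'}}$ does the same for $\op U_r\ket i$ on $l'$. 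Padding each $\op h_r$ with $\1$ to make it exactly $k$-local and summing over all rules yields a single local term $\op h$, and the chain Hamiltonian $\op H' := \sum_{j=1}^{N-k+1}\op h^{(j)}$ (where $\op h^{(j)}$ acts on positions $j,\dotsc,j+k-1$) is manifestly geometrically $k$-local and translationally invariant.

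To finish I would verify term-by-term that $V^\dagger\op H'V = \op H(G)$: applying $\op h_r$ at position $j$ to $V(\ket s\otimes\ket v)$ either annihilates it (when the left-hand side of $r$ does not match $s$ at position $j$) or returns $V(\ket s\otimes\ket v) - V(\ket{s'}\otimes\ket{v'})$, where $(s,\ket v)\mapsto(s',\ket{v'})$ is exactly the ULG evolution associated to the corresponding edge. This matches \cref{eq:H(G)} term by term. Because rules preserve $\qn{\cdot}$ and leave the $\ket 0$'s on the classical slots undisturbed in the chain encoding, $\operatorname{range} V$ is $\op H'$-invariant, so $\op H(G)$ appears as a block of $\op H'$ with matching spectrum and multiplicity.

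The main obstacle is the careful bookkeeping in the middle step: when $|l|=|l'|$ but the positions of classical and quantum symbols \emph{within} the window differ between $l$ and $l'$, the unitary $\op U_r$ must be threaded through the correct $\hs$-slots on both sides so that $\op h_r$ truly implements the prescribed ULG edge rather than an accidental reshuffling of quantum data; the $\qn{\cdot}$-invariance of rules from \cref{def:qts} is exactly what makes this possible. If one insists that the spectra coincide literally (not just on an invariant subspace), one can further add a translation-invariant $1$-local penalty $\sum_{c\in\Sigma_{cl},\,j}\proj c_j\otimes(\1-\proj 0)_j$ which lifts all chain states outside $\operatorname{range}V$ without disturbing the invariant subspace or the locality.
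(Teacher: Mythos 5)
Your construction of the ULG and of the chain Hamiltonian is correct as far as it goes, but it proves a weaker statement than the lemma claims, and it does so by a genuinely different route from the paper. The gap is in the dimension count. You attach $\hs_s := \hs^{\otimes\qn{s}}$ to each vertex $s$, so the total ULG Hilbert space is $\bigoplus_{i}\sset{C}^{\set{V}_i}\otimes\hs^{\otimes\qn{s(i)}}$, summed over connected components. This is a strict subspace of $(\sset{C}^\Sigma\otimes\hs)^{\otimes N}$, and your map $V$ is a proper isometry, not a unitary. Consequently $V^\dagger\op{H}'V = \op{H}(G)$ only shows that $\op{H}(G)$ is unitarily equivalent to the restriction of $\op{H}'$ to the invariant subspace $\operatorname{range}V$; the chain Hamiltonian $\op{H}'$ also has a non-trivial spectrum on $(\operatorname{range}V)^\perp$ (since $\op{h}_r$ acting at a window that avoids a ``bad'' classical slot still acts non-trivially there), so the spectra of $\op{H}(G)$ and $\op{H}'$ do \emph{not} coincide. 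Your proposed $1$-local penalty does not repair this: it shifts the extraneous part of the spectrum up but cannot remove it, so you still get a strict superset rather than equality. Note that the author explicitly distinguishes ``same spectrum'' here from ``same spectrum up to multiplicities'' in the immediately following \cref{lem:qts-hamiltonian-compressed}, so the exact spectral equality is intended.

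The idea you are missing is simple and makes all the bookkeeping disappear: attach the full $\hs^{\otimes N}$ to \emph{every} vertex $s\in\Sigma^N$, one tensor factor per position regardless of whether that position holds a classical or quantum symbol. Each rule $r$ applied at a window then gives an edge unitary $\1_\hs^{\otimes|u|}\otimes\tilde{\op U}_r\otimes\1_\hs^{\otimes|v|}$ on $\hs^{\otimes N}$, where $\tilde{\op U}_r$ is $\op{U}_r$ extended from $\hs^{\otimes\qn{r}}$ to $\hs^{\otimes|r|}$ by acting trivially on the $\hs$-slots at the classical positions of the window. Now the total Hilbert space of the ULG is $\sset{C}^{\Sigma^N}\otimes\hs^{\otimes N}=(\sset{C}^\Sigma)^{\otimes N}\otimes\hs^{\otimes N}$, and the map to $(\sset{C}^\Sigma\otimes\hs)^{\otimes N}$ is the canonical tensor-factor reshuffling, i.e.\ a genuine unitary. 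Conjugating $\op{H}(G)$ by this reshuffling immediately produces a chain Hamiltonian with identical spectrum, and $k$-locality and translation invariance are manifest because each edge unitary and each projector in \cref{eq:H(G)} acts non-trivially only inside a length-$k$ window. The ``wasted'' $\hs$-slots on classical positions, which you carefully avoided, are precisely what make the statement true at the level of full spectra. Your economical construction is closer in spirit to \cref{lem:qts-hamiltonian-compressed}, which does compress to $(\sset{C}^{\Sigma_{cl}}\oplus(\sset{C}^{\Sigma_q}\otimes\hs))^{\otimes N}$ but accordingly only claims equal spectrum up to multiplicities; trying to jump straight to that in \cref{lem:qts-hamiltonian} is what creates the gap.
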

\begin{proof}
  We explicitly define the ULG $(\set S,\set R')$ for strings of length $N\ge k$. The vertex set $\set S:=\Sigma^N$ is straightforward. For every $r\in\set R$ denoted $s_1\leftrightarrow s_2$, define the ULG edges $(us_1v\leftrightarrow us_2v,\1_\hs^{\otimes|u|}\otimes\op U_r\otimes\1_\hs^{\otimes|v|})$ for any $u,v\in\Sigma^*$---potentially extending $\op U_r$ to $\hs^{\otimes N-k}$ acting trivially on classical substrings---such that $us_1v\in\Sigma^N$. It is straightforward to verify that this defines a valid ULG.

  The second claim follows from the canonical isomorphism between the two Hilbert spaces $(\sset C^\Sigma)^{\otimes N}\otimes\hs^{\otimes N}\isomap(\sset C^\Sigma\otimes\hs)^{\otimes N}$, i.e.~a simple rearrangement. Conjugating the associated Hamiltonian of the ULG with this isomorphism proves the second claim.
\end{proof}
As QTSs are also ULGs, we will be---without always specifying the string length restriction explicitly---using ULG terminology for QTSs, e.g.~associated Hamiltonians, irreducible evolutions or speak of QTSs being simple.

\begin{lemma}\label{lem:qts-hamiltonian-compressed}
  Let the setup be as in \cref{lem:qts-hamiltonian}. Then the isomorphism extends to a Hamiltonian on the chain $(\sset C^{\Sigma_{cl}}\oplus(\sset C^{\Sigma_q}\otimes\hs))^{\otimes N}$ with the same spectrum up to multiplicities.
\end{lemma}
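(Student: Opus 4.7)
The plan is to realize the compressed chain as an $\op H'$-invariant subspace of the Hilbert space from \cref{lem:qts-hamiltonian}, on which $\op H'$ is spectrally equivalent up to multiplicity. To this end, I first split each cell
$\sset C^\Sigma\otimes\hs=(\sset C^{\Sigma_{cl}}\otimes\hs)\oplus(\sset C^{\Sigma_q}\otimes\hs)$,
condition on a string pattern $s\in\Sigma^N$, and separate the resulting $\hs^{\otimes N}$-factors into those attached to classical positions of $s$ and those attached to quantum positions.

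The key observation is that the construction in the proof of \cref{lem:qts-hamiltonian} extends $\op U_r\in\U(\hs^{\otimes\qn{r}})$ to $\hs^{\otimes k}$ by tensoring with identities on the $k-\qn{r}$ classical slots. Consequently, every local term of $\op H'$ leaves the $\hs$-factors attached to classical letters untouched: outside the rule window they sit under an identity cell-action, and inside the window they are acted on only by the identity padding of $\op U_r$. Fixing any unit vector $\ket 0\in\hs$, the subspace
\[
  \mathcal V_0:=\spn\bigl\{\ket s\otimes\ket{\phi_1}\otimes\cdots\otimes\ket{\phi_N}\;\big|\;s\in\Sigma^N,\;\ket{\phi_i}=\ket 0\text{ whenever }s_i\in\Sigma_{cl}\bigr\}
\]
is therefore $\op H'$-invariant, and so is every analogous subspace $\mathcal V_b$ obtained by fixing a different basis tensor $b$ (sector by sector) at the classical positions. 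The local unitary sending $\ket 0\mapsto\ket{b_i}$ at every classical slot conjugates $\op H'|_{\mathcal V_0}$ to $\op H'|_{\mathcal V_b}$, so all these restrictions share a common spectrum. Since the $\mathcal V_b$ collectively span the entire chain $((\sset C^\Sigma)\otimes\hs)^{\otimes N}$, the spectrum of $\op H'$ equals that of $\op H'|_{\mathcal V_0}$, up to multiplicity.

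To conclude, I identify $\mathcal V_0$ with $\bigl(\sset C^{\Sigma_{cl}}\oplus(\sset C^{\Sigma_q}\otimes\hs)\bigr)^{\otimes N}$ by absorbing the fixed $\ket 0$ at each classical slot; this identification is manifestly local and cell-wise, so the restricted Hamiltonian is of the claimed form (a sum of the same nearest-neighbour $k$-local translationally-invariant terms, now written on the smaller local space). The only delicate point I anticipate is the implicit assumption that each rule preserves the classical/quantum status of every position within its window: this is forced by the type constraint $\op U_r\in\U(\hs^{\otimes\qn r})$ in \cref{def:qts} together with the ordered tensor factorization chosen in the proof of \cref{lem:qts-hamiltonian}, and matches the usage in all later sections of the paper.
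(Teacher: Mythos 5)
Your approach is essentially the paper's: restrict the uncompressed Hamiltonian to an invariant subspace isomorphic to the compressed chain and argue that the full space decomposes into copies with identical spectrum. The paper phrases this as ``conjugation with a projector,'' and you unpack it more carefully, which is the right instinct.

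However, the key assumption you lean on — that ``each rule preserves the classical/quantum status of every position within its window,'' and that this is ``forced by the type constraint $\op U_r\in\U(\hs^{\otimes\qn r})$'' — is false. The constraint in \cref{def:qts} only fixes the \emph{count} $\qn{\cdot}$, not the positions, and the actual Wheelbarrow construction is full of rules that move a quantum symbol past a classical one (e.g.\ the unary-counter rules carrying $\${-@a}$ past $\${:c}$). For such a rule, the $\hs$-factor that was sitting under a classical letter in $s_1$ sits under a quantum letter in $s_2$, so it is \emph{not} acted on by identity padding. If the extension of $\op U_r$ to $\hs^{\otimes k}$ were literally identity on the classical positions of $s_1$ and $\op U_r$ on the quantum positions of $s_1$, then a state in $\mathcal V_0$ would be mapped to one with $\ket0$ under a quantum letter and $\op U_r\ket\phi$ under a classical letter, breaking invariance of $\mathcal V_0$. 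The correct reading of ``acting trivially on classical substrings'' is that the extension is $\op U_r$ on the quantum factors composed with a \emph{permutation} of $\hs$-factors tracking which positions are classical/quantum before and after the rule. With that interpretation, $\mathcal V_0$ is invariant because the all-$\ket0$ classical content is permutation-invariant, and your $\mathcal V_b$ argument survives once the labels $b$ are propagated along the same permutation so as to be consistent across connected sectors. You should correct the final paragraph of the proof: the delicate point is real, the positional status is \emph{not} preserved, and the resolution is the permutation structure of the extended unitary, not a forced restriction on the rules.
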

\begin{proof}
  Any rule on a classical substring acts identically on \hs, hence on the set of Hamiltonians with this property a conjugation of the isomorphic Hamiltonian in \cref{lem:qts-hamiltonian} $\op H$ with the projector $(\sset C^\Sigma\otimes\hs)^{\otimes N}\longhookrightarrow(\sset C^{\Sigma_{cl}}\oplus(\sset C^{\Sigma_q}\otimes\hs))^{\otimes N}$ is an isomorphism. The projector preserves the spectrum, up to multiplicities.
\end{proof}
In the following, we often gloss over the fact and simply assume that the associated Hamiltonian of a QTS is local in the sense of \cref{lem:qts-hamiltonian-compressed}.
Observe however that a ULG induced from a QTS is not necessarily simple, and it is easy to find a counterexample.

\subsection{Quantum Thue Systems as a Computational Model}
To use a QTS for computation, we need to mark some strings that have special meaning, e.g.~are input or output of the computation.

\begin{definition}\label{def:marker}
  For a QTS with alphabet $\Sigma$ and Hilbert space \hs, a \emph{marker} is any tuple $(s,\pi)$ where $s\in\Sigma^*$ and $\pi$ is a projector on some subspace of $\hs^{\otimes\qn{s}}$. The set of markers on strings of length $k$---called $k$-local markers---is denoted $\mathcal M^{(k)}$, and $\mathcal M:=\bigcup_k\mathcal M^{(k)}$.
\end{definition}
\newcommand{\sinp}{s_{\mathrm{inp}}} \newcommand{\Pinp}{\Pi_{\mathrm{inp}}}
\newcommand{\sout}{s_{\mathrm{out}}} \newcommand{\Pout}{\Pi_{\mathrm{out}}}
That is, we can specify a string $s$ and a configuration of the quantum part of this string as a specific state in the computation. It is useful to think of using one marker $\sinp$ to mark a string as the start of the computation, and a second one $\sout$ to mark the end; the quantum parts of the markers---$\Pinp$ and $\Pout$---then define the valid input and output of the computation.

\begin{definition}\label{def:qts-decides}
  Let $(\Sigma,\set R,\{\op U_r\},\hs)$ be a QTS and $\PromP=(\PromP_\yes,\PromP_\no)$ a promise problem. We introduce an \emph{encoding function} $\enc:\PromP \to \Sigma^*$, input and output markers $(\sinp,\Pinp),(\sout,\Pout)\in\mathcal M^{(n)}$ for some $n\in\sset N$. Then the QTS
  \begin{itemize}
    \item \emph{rejects} an instance $l\in\PromP$ if there exists a chain of rules in $\set R$ connecting $\enc(l)$ with two strings containing $\sinp$ and $\sout$, respectively, and $\bra\psi\Pinp+\op U^\dagger\Pout\op U\ket\psi\ge\epsilon$ for all $\ket\psi\in\hs$---here $\op U=\op U(l)$ stands for the product of unitaries along this chain, and $\Pinp$ and $\Pout$ are extended trivially to the entire chain in case $\qn{\sinp}>\qn{\enc(l)}$ or $\qn{\sout}>\qn{\enc(l)}$.
    \item \emph{accepts} $l$ if there exists a $\ket\psi\in\hs$ such that $\bra\psi\Pinp+\op U^\dagger\Pout\op U\ket\psi\le\epsilon/2$.
    \item \emph{decides} $\PromP$ if for all $l\in\PromP$, $l$ is accepted if $l\in\PromP_\yes$, and rejected for $l\in\PromP_\no$.
  \end{itemize}
\end{definition}
The rejection and acceptance threshold $\epsilon$ will depend on the class of promise problems that we want to decide. In particular, we want to allow this threshold to scale with the problem instance size, i.e.~$\epsilon=\epsilon(|l|)$, and thus indirectly with the time that a computation can take, as specified in the following definition.

\begin{definition}\label{def:computation-time}
  Let $Q$ be a QTS that decides $\PromP$. For an instance $l\in\PromP$, the \emph{history state} is defined as the irreducible evolution of the ULG containing $\enc(l)$.
\end{definition}
For a QTS with unambiguous transition rules---i.e.~where the history state is a line---the size of the history state simply corresponds to the runtime of the underlying computation.

We now want to describe a simple example for a QTS which can decide the following simple promise problem.
\begin{definition}[\textsc{Even Natural Number}]\leavevmode
  \begin{description}
    \item[Instance.] Natural number $n\in\sset N$.
    \item[Output.] \yes if $n$ even, otherwise \no.
  \end{description}
\end{definition}
\begin{example}
  Let the alphabet $\Sigma:=\{-,\star,\|\}$, where $\star$ is the only quantum symbol with Hilbert space $\sset C^2$. We define $\sout=\star\|$ and $\Pout=\proj1$. Let further
  \[
    \enc:\sset N\to\Sigma^* \mt{where} \enc(n):=\star\underbrace{--\ldots-}_\text{$n$ times}\|.
  \]
  $\sinp=\enc(l)$, $\sout=\star\|$, and $\Pinp=\Pout=\proj1$. We have a single rule $(\star-\leftrightarrow-\star,\op R)$ where $\op R$ is a rotation by $\pi/2$, i.e.~$\op R:=-\ketbra10+\ketbra01$. Then this QTS decides \textsc{Even Natural Number}.
\end{example}
\begin{proof}
  The proof is straightforward. Starting on the encoded input $\enc(n)$, the TS generates a sequence
  \[
    \star--\ldots-\| \quad\longmapsto\quad -\star-\ldots-\| \quad\longmapsto\quad \cdots \quad\longmapsto\quad --\ldots-\star\|,
  \]
  so there always exists a chain of rules that connects $\enc(n)$ with a string containing $\sout$. The decision is thus made by the content of the quantum part: for $n$ applications of the rule, starting from a vector $\ket v\in\sset C^2$, we apply $\op R$ $n$ times. Now take any state $\ket\psi\in\sset C^2$ and write $\ket\psi=a\ket0+b\ket1$. Then
  \begin{align*}
    \bra 0\Big(\Pinp&+(\op R^\dagger)^n\Pout\op R^n\Big)\ket 0\\
                    &=|\bra 0\op R^n\ket 1|^2
    =|\bra 0
    \op R^{n \bmod 2}
    \ket 1|^2
    =\begin{cases}
      0 & \text{if $n$ is odd,} \\
      1 & \text{otherwise,}
    \end{cases}
    \\
    \shortintertext{and}
    \bra 1\Big(\Pinp&+(\op R^\dagger)^n\Pout\op R^n\Big)\ket 1\\
                    &=1+|\bra 1\op R^n\ket 1|^2
    \ge1.
    \\
    \shortintertext{Therefore, if $n$ is even,}
    \bra\psi\Big(\Pinp&+(\op R^\dagger)^n\Pout\op R^n\Big)\ket\psi
    \ge|a|^2+|b|^2=1
  \end{align*}
  and $\ket0$ is an accepting state for odd $n$. The claim follows.
\end{proof}

\section{Hardness Results}\label{sec:hardness}
\newcommand{\J}{j} 
\newcommand{\g}{g} 

\subsection{A Special Kind of Quantum Thue System}
We have seen that QTS can be used to answer simple problems. On the other hand, a more interesting question is whether there exists a universal QTS which can run any computation of a certain class of promise problems \textsf{C}, i.e.~is complete for \textsf{C}. Of particular interest in this setting is the question about scaling of the defining parameters for such a QTS: how big is the alphabet, what is the locality and how does the string length of the vertices in the history state scale, i.e.~for a promise problem $\PromP\in\textsf{C}$, does there exist a function $f$ such that for $l\in\PromP$, $|\enc(l)|=O(f(|l|))$? And what about the size of the history state?

For the complexity class \BQEXP, we have the following lemma.
\newcommand{\hist}[1][l]{\ensuremath{\mathcal M_#1}\xspace}
\newcommand{\mmevo}{\ensuremath{\mathcal M}\xspace}
\newcommand{\allowed}{\ensuremath{\mathfrak A}\xspace}
\begin{lemma}\label{lem:qts-special}
  For any \BQEXP promise problem $\PromP$, there exists a 2-local QTS $(\Sigma,\set R,\{\op U_r\},\sset C^2)$ which decides $\PromP$, and has the following uniform properties:
  \begin{enumerate}[label=\upshape\textbf{W\arabic*}]
    \item \label{itm:W-preserves} The alphabet has special characters $\set H\subset\Sigma$---heads---and $\set B\subset\Sigma$---boundaries, and a set of \emph{allowed pairs} $\allowed\subset\Sigma\times\Sigma$. All transition rules preserve any symbols from $\set B$ and the number of symbols in $\set H$ (denoted $|s|_h$ for a string $s$).
    \item \label{itm:W-decides} Let $\J:\PromP\longrightarrow\sset N$ be a map with $\J(l)=O(\exp\poly |l|)$, where $|l|$ denotes the size of instance $l\in\PromP$. The QTS decides instance $l$ on strings of this length, i.e.~$|\enc(l)|=\J(l)$. Both input and output penalty are 2-local markers containing precisely one head symbol, and $\sinp,\sout\in\allowed$.
    \item \label{itm:W-hist} For any $l\in\PromP$, the history state \hist is simple. All strings $s\in\hist$ are of the form $s\in\set B\times(\Sigma\setminus\set B)^*\times\set B$ (bracketed), and have one head $|s|_h=1$. Furthermore, all length-2 substrings of $s$ are in $\mathfrak A$, and the size $|\hist|=\poly(\J(l))$.
    \item \label{itm:W-others} For all other irreducible evolutions $\mmevo\neq\hist$, at least one of the following is true:
      \begin{itemize}
        \item $|s|_h=0\ \forall s\in\mmevo$,
        \item \mmevo is not bracketed (i.e.\ with a boundary symbol on the left and right ends),
        \item \mmevo can be broken up into $O(g(l))$-sized connected parts---where $g(l)=\poly(\J(l))$---each of which containing at least one string containing an invalid character tuple not in \allowed.
      \end{itemize}
  \end{enumerate}
\end{lemma}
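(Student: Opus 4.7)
The plan is to construct the QTS explicitly by simulating a \BQEXP-universal quantum ring machine (QRM) using the wheelbarrow tape construction sketched in \cref{fig:wheelbarrow-gates-intro}. Given a promise problem $\PromP \in \BQEXP$, invoke \cref{cor:bqexp-qrm} to obtain an exponential-time terminating family of QRMs $(M_n)_n$ deciding $\PromP$. Fix $\J(l)$ to be the QRM's ring size plus the length of the program string that encodes one application of the head unitary $\op R$ over a fixed universal gate set; $\enc(l)$ then outputs a bracketed string with the instance written on the quantum data cells on the right, the program string initialized on the classical cells on the left, a single head symbol marking the rightmost program cell, and the remaining ancilla qubits left free to be verified by the input penalty.

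The alphabet $\Sigma$ consists of an immutable two-element set $\set B$ of boundary symbols placed at the chain endpoints; a set $\set H$ of head symbols marking the currently active program instruction (including the special rewind symbol $r$ of \cref{fig:wheelbarrow-gates-intro}); classical program symbols encoding the Turing-machine-like control flow inside $\op R$; and three quantum symbols each carrying an attached $\sset C^2$ for the physical data qubits of the simulated ring. All 2-local rewriting rules fall into two families: an \emph{apply-gate} family, which on matching a head symbol adjacent to the proper data or program symbol cyclically advances the program and applies one of \swap, \toffoli, or the quantum-universal controlled rotation to the two leftmost data qubits; and a \emph{rewind} family, which shuttles the rightmost data qubit back to the left via $r$. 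Both families manifestly preserve boundary symbols and the head count, giving \ref{itm:W-preserves}. The set $\allowed$ is defined to be exactly the 2-letter substrings appearing in some string reachable from $\enc(l)$.

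Starting from $\enc(l)$, repeated application of the rules traces out the QRM's computation while the program is cyclically rotated; since the QRM terminates in $O(\exp\poly|l|)$ steps and each head step corresponds to $\poly(|l|)$ QTS rewrites, the chain length bound in \ref{itm:W-decides} and the size bound $|\hist[l]| = \poly(\J(l))$ in \ref{itm:W-hist} both follow. Bracketedness, single-head-ness, and membership of all length-$2$ substrings in $\allowed$ are enforced by construction. The remaining content of \ref{itm:W-hist}, namely simplicity of $\hist[l]$ (\cref{def:ulg-simple}), is the most delicate point: branching in $\hist[l]$ occurs only when two rules act on disjoint substrings of the chain (for instance when a rewind step is available simultaneously with a gate-application step on a different region), and in that situation the corresponding unitaries act on disjoint quantum tensor factors and hence commute. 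Each elementary ``diamond'' in the history graph therefore closes with trivial holonomy; decomposing an arbitrary cycle into such diamonds then yields simplicity.

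The main obstacle is \ref{itm:W-others}, controlling all \emph{other} irreducible evolutions. I would proceed by case analysis on an initial string $s \notin \hist[l]$. If $s$ has no head symbol, by \ref{itm:W-preserves} no reachable string does either, establishing the first bullet; if $s$ lacks a boundary at either end, boundary-preservation keeps every reachable string unbracketed, establishing the second. Otherwise $s$ must contain a substring outside $\allowed$, and the task is to show that such ``defects'' propagate only locally. Because the rewrite rules are 2-local and each nontrivial rule requires the presence of a correctly formed head/program/data pattern, a forbidden pair either freezes its immediate neighbourhood or is only interchangeable within an $O(\g(l))$-sized window before producing another invalid pattern; a pigeonhole counting argument then partitions any irreducible evolution containing $s$ into connected components of size at most $O(\g(l))$, each still harbouring an invalid pair. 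The hardest part of the whole construction is simultaneously designing the rules so that the diamond argument for simplicity and the defect-confinement argument for \ref{itm:W-others} both go through; the remainder is explicit but routine bookkeeping over the rule list.
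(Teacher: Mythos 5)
Your high-level strategy---simulating a \BQEXP-universal QRM by a wheelbarrow-style QTS in which a program string is cyclically rotated past a ring of data qubits, applying one gate per turn---is the same as the paper's. However there are three concrete gaps that the paper's explicit construction (Turing's Wheelbarrow, \cref{sec:wheelbarrow}) is designed to close and your proposal does not.

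First, your $\enc(l)$ starts with the program string already written on the classical cells. But $\allowed$ must be a fixed, instance-independent set of pairs; a ``wrong'' program for a given chain length is just a rearrangement of the same program symbols and therefore contains no pair outside $\allowed$. Such a string is bracketed, single-headed, and pair-valid, yet is not in $\hist$---so it would violate \cref{itm:W-others}, since none of the three escape clauses applies to it. The paper avoids this entirely by starting from a program-\emph{free} configuration \${|I-:a:c.-:a b|} and \emph{generating} the program from the chain length via the base-6 and unary counters (\cref{sec:rules-initialization}--\ref{sec:rules-unary-counter}); when one of these ``other'' candidate strings is evolved backwards, the counter either reaches an illegal pair or betrays a length mismatch (see the proof of \cref{lem:wheelbarrow-simple}). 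Without a counting mechanism, there is no translationally-invariant way to single out the correct program, and the lemma's conclusion is simply false.

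Second, your simplicity argument mischaracterizes where branching occurs. The history state has exactly one head, so there is never a ``rewind step available simultaneously with a gate-application step on a different region.'' The only genuinely concurrent motion is that of an inactive ghost relative to the head; the far more numerous branches in \cref{sec:wheelbarrow-ambiguities} (premature phase transitions, backward-running carriers, the revert ambiguity) are \emph{dead ends}, not diamonds, and the diamond-commutation argument does not touch the possibility of a \emph{non-trivial loop traversing the computation path itself}. The paper closes that loop by invoking reversibility of the encoded Turing machine---the classical tape content changes injectively, so the computation path cannot return to an earlier vertex with a different holonomy. You need that observation; commuting disjoint rewrites alone does not give simplicity, and a general Newman-style decomposition of arbitrary cycles into diamonds requires termination and local confluence, neither of which you establish.

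Third, your defect-confinement argument for \cref{itm:W-others} is too vague. The paper's mechanism (\cref{lem:wheelbarrow-simple}) is that when $\geq 2$ heads coexist, heads cannot pass through one another, so any head necessarily collides with a neighbour within $O(\J(l))$ rewrites; one can then cut edges to slice the evolution into poly-sized pieces, each of which reaches an illegal pair. A ``forbidden pair freezes its neighbourhood'' heuristic does not by itself deliver the $O(g(l))$-sized components with an invalid pair in \emph{each} one, and it does not address the single-head-but-wrong-program strings at all (which is where the counting machinery does the work).
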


\subsection{Hardness Result}
\begin{theorem}\label{th:hardness}
  $(2,42)$-\textsc{Hamiltonian} is \QMAEXP-hard.
\end{theorem}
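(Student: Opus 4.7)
The plan is to polynomial-time reduce any \QMAEXP{} promise problem $\PromP$ to $\TILH[(2,42)]$, using the universal quantum Thue system of \cref{lem:qts-special} to encode the verifier into a translationally-invariant, 2-local Hamiltonian. Given an instance $l \in \PromP$, I will first amplify the verifier to exponentially small completeness--soundness error (\cref{rem:bqerror}) and view it as a \BQEXP{} decider for pairs consisting of the instance and a candidate witness. Applying \cref{lem:qts-special} then yields a 2-local QTS $(\Sigma, \set R, \{\op U_r\}, \sset C^2)$ with $\abs{\Sigma_{cl}} = 36$ and $\abs{\Sigma_q} = 3$ that decides this language on strings of length $n := \J(l) = \exp\poly\abs{l}$. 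Since $\log n = \poly\abs{l}$, the instance $l$ itself will be encoded through the chain length~$n$, decoded by a classical counter built into the QTS as in~\cite{Gottesman2009}, so the entire reduction runs in time $\poly\abs{l}$.

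Next, I will build the Hamiltonian $\op H = \op H_{\mathrm{QTS}} + \op H_{\mathrm{pen}} + \op H_{\mathrm{in}} + \op H_{\mathrm{out}}$ on a chain of $n$ sites of local dimension $36 + 3 \cdot 2 = 42$, where $\op H_{\mathrm{QTS}}$ is the translationally-invariant, 2-local associated Hamiltonian from \cref{lem:qts-hamiltonian-compressed}. The penalty term $\op H_{\mathrm{pen}}$ is a translationally-invariant sum of rank-one projectors onto every character pair in $\Sigma \times \Sigma \setminus \allowed$, together with a $p(n)\op b$-weighted boundary term that pins $\set B$-symbols at the two ends of the chain (as permitted by \cref{th:hardness-intro}). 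The input and output terms come directly from the 2-local markers $(\sinp, \Pinp)$ and $(\sout, \Pout)$ provided by \cref{lem:qts-special}: $\op H_{\mathrm{in}}$ enforces correct initialization of ancilla qubits while leaving the witness qubits unconstrained, and $\op H_{\mathrm{out}}$ penalizes \no-outputs.

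I will then analyse the spectrum by decomposing the Hilbert space into invariant subspaces corresponding to the irreducible evolutions of the QTS, and applying the Kitaev geometrical lemma for unitary labelled graphs (\cref{lem:kitaev-ulgs}) on each. For completeness ($l \in \PromP_\yes$), property~\ref{itm:W-hist} together with \cref{lem:ulgsimpleclassical,lem:qevolution} imply that inserting a valid witness into the quantum symbols yields a uniform history state $\ket{\psi_{\mathrm{hist}}}$ with $\op H_{\mathrm{QTS}} \ket{\psi_{\mathrm{hist}}} = 0$, while the amplified acceptance probability forces $\bra{\psi_{\mathrm{hist}}} (\op H_{\mathrm{in}} + \op H_{\mathrm{out}}) \ket{\psi_{\mathrm{hist}}} = o(1/\poly n)$, which I will take as $\alpha$. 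For soundness ($l \in \PromP_\no$), on the history-state subspace \cref{lem:ulgsimpleclassical} reduces $\op H_{\mathrm{QTS}}$ to the Laplacian of a connected graph on $\poly n$ vertices, and combining it with $\op H_{\mathrm{out}}$ via \cref{lem:kitaev-ulgs} together with the soundness bound of \cref{def:qts-decides} gives an $\Omega(1/\poly n)$ lower bound, uniform over witnesses. On every other irreducible evolution, property~\ref{itm:W-others} guarantees that either the boundary-pinning term, the invalid-pair penalty $\op H_{\mathrm{pen}}$, or the absence of any head symbol provides an $\Omega(1/\poly n)$ lower bound via \cref{lem:kitaev-graphs,lem:kitaev-ulgs}. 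Setting $\beta$ just below this uniform bound yields the required $\beta - \alpha \ge 1/\poly n$ promise gap.

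The hardest step will be the soundness analysis: controlling \emph{every} irreducible evolution of the QTS simultaneously while keeping $\op H$ strictly translationally invariant. This rests on a careful application of \cref{lem:kitaev-ulgs} together with the $\Omega(1/n^3)$ algebraic-connectivity bound of \cref{cor:a(G)}, and on verifying that the counter-based encoding of $l$ into $n$ and the boundary-pinning trick do not introduce spurious zero-energy components in any of the non-history irreducible evolutions. Properties~\ref{itm:W-preserves}--\ref{itm:W-others} of \cref{lem:qts-special} have been engineered precisely to make this case analysis go through with a single $p(n)$-scaled fixed 2-local interaction, so the bulk of the technical work is in correctly bookkeeping their interaction with the four summands of $\op H$.
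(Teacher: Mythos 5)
Your high-level structure matches the paper's proof: amplify the verifier via \cref{rem:bqerror}, invoke \cref{lem:qts-special} to get the universal QTS, build the associated translationally-invariant Hamiltonian using \cref{lem:qts-hamiltonian-compressed}, and analyse each irreducible evolution separately via \cref{lem:kitaev-ulgs}. However, your Hamiltonian $\op H = \op H_{\mathrm{QTS}} + \op H_{\mathrm{pen}} + \op H_{\mathrm{in}} + \op H_{\mathrm{out}}$ is missing the \emph{head-bonus term}, and this is a genuine gap that breaks soundness.

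The paper's Hamiltonian is $\op H_l + \bonusH + p(l)(\penaltyBd + \op P) + \op P_{\text{in/out}}$, where $\bonusH$ assigns an energy of $-\abs{s}_h$ to a basis string $\ket s$, proportional to the number of head symbols. This term is not cosmetic. Property~\ref{itm:W-others} guarantees that any irreducible evolution $\mmevo \neq \hist$ is either head-free, non-bracketed, or contains invalid pairs. The second and third cases are handled by the $p(l)$-scaled penalties, but the first case — a bracketed string with zero heads and no invalid character pairs (e.g.\ a tape of the form $\$|\${:c}\ldots\${:c}\$|$) — is a fixed point of the QTS, picks up no penalty from $\op H_{\mathrm{pen}}$, $\op H_{\mathrm{in}}$, or $\op H_{\mathrm{out}}$, and therefore has energy exactly $0$ in your construction. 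Since your completeness bound places the \yes-instance ground energy at $\approx 0 + \epsilon$, no choice of $\alpha < \beta$ can separate \yes from \no: the zero-head state is always a zero-energy eigenstate, regardless of the instance. The head bonus is precisely what breaks this degeneracy: in the paper the history state (exactly one head, bracketed) receives a total shift of $-2$, whereas head-free strings stay at $\geq 0$, which is far above the soundness threshold $-2 + \Omega(1/\poly n)$. Your parenthetical appeal to ``the absence of any head symbol provides an $\Omega(1/\poly n)$ lower bound via \cref{lem:kitaev-graphs,lem:kitaev-ulgs}'' does not hold — Kitaev's lemma gives nothing here because none of the kernels involved conflict, so the spectrum of $\op H$ restricted to that block can (and does) touch zero.

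A secondary, more cosmetic point: you describe the boundary mechanism as ``a $p(n)\op b$-weighted boundary term that pins $\set B$-symbols at the two ends of the chain.'' Since the Hamiltonian must be strictly translationally invariant, there is no ``end of the chain'' to pin to; the paper instead uses the Gottesman--Irani trick — a $1$-local bonus to any bracket combined with a $2$-local penalty for a bracket adjacent to anything else — so that the unique maximally rewarded configuration has brackets exactly at the two chain ends. You should make this explicit, and note that this boundary bonus ($-1$ for a bracketed string) is the second of the two contributions giving the history state its $-2$ baseline.
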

\begin{proof}
  Following \cref{def:khamiltonian,def:qma}, we need to show that there exists a 1D translationally-invariant $2$-local Hamiltonian $\op H$ on $\hs=(\sset C^{42})^{\otimes n}$ with $O(1)$ local terms, such that either (a) $\lmin(\op H)\le\alpha$ or (b) $\lmin(\op H)\ge\beta$ with a polynomial promise gap $\beta-\alpha=\Omega(1/\poly n)$, and deciding between (a) and (b) is at least as hard as some \QMAEXP-hard promise problem. The proof will be a simple combination of our previously-collected results.

  \begin{enumerate}
    \item Let $\PromP$ be a promise problem in \QMAEXP. By \cref{def:qma}, the verification of $\PromP$ is a \BQEXP problem. By \cref{rem:bqerror}, we can assume without loss of generality that the accept and reject probabilities in \cref{def:qts-decides} are $1-\epsilon$ and $\epsilon$, respectively, where $\epsilon=1/3^{\poly |l|}$ to be specified below, where $|l|$ denotes the length of the problem input.
    \item By \cref{cor:qts-qma-exists}, we can thus create a QTS with properties as in \cref{lem:qts-special} that verifies $\PromP$: more specifically, for an instance $l\in\PromP$ and by \cref{itm:W-decides}, we know that this QTS verifies $l$ on strings of length $\J(l)$.
    \item By \cref{lem:qts-hamiltonian}, the QTS restricted to strings of length $\J(l)$ is also a ULG. Denote the Hamiltonian associated to this ULG by $\op H_l$, block-diagonal in the irreducible evolutions.
  \end{enumerate}

  \newcommand{\bonusH}{\op B_\text{heads}}
  \newcommand{\penaltyBd}{\op P_\text{boundaries}}

  With $\Gamma$ denoting the alphabet from \cref{def:alphabet} and $\J(l)$ denoting the number of systems, we define a Hamiltonian on the Hilbert space $(\sset C^\Gamma)^{\otimes \J(l)}$ as follows:
  \begin{equation}\label{eq:overall-hamiltonian}
    \op H:=\op H_l+\bonusH+p(l)(\penaltyBd+\op P)+\op P_\text{in/out},
  \end{equation}
  where
  \begin{itemize}
    \item $\penaltyBd$ penalizes any non-bracketed string (i.e.\ strings without a boundary symbol on at least one end)---cf. \cref{itm:W-hist},
    \item $\bonusH$ acting on a string $\ket s \in (\sset C^{\Gamma})^{\otimes \J(l)}$ gives a bonus of $|s|_h$, according to how many head symbols there are in $s$,
    \item $\op P$ penalizes any character tuple not in $\mathfrak A$,
    \item $p(l)$ is a function used to scale the penalties, which will be specified later, but---keeping \cref{rem:qma-qmaexp-information} in mind---must not exceed $p(l)=\poly j(l)$.
  \end{itemize}

  Penalizing non-bracketed strings follows an idea by~\cite{Gottesman2009}. With $\penaltyBd$, we give a 1-local bonus of size $1$ to brackets appearing anywhere, but a penalty of $1/2$ to them appearing next to any other symbol; since no transition rule ever moves the boundaries, this gives a uniform energy shift to all strings with brackets. The unique highest-bonus string will have a bracket appearing at the start and end with a bonus of size 1.

  The encoding and output penalties $\Pinp=\Pout=\proj1$ are translationally-invariantly extended to the entire chain, i.e.~on Hilbert space $(\hs_{cl}\oplus\hs_q)^{\otimes \J(l)}$, we act with the 2-local projector
  \[
    \op P_\text{in}:=\sum_{i=1}^{\J(l)-1}(\proj{\sinp}\oplus\Pinp)_{i,i+1},
  \]
  and analogously for $\op P_\text{out}$.

  \paragraph{Completeness.}
  \newcommand{\hstate}{\Psi_l}
  Assume $l$ is a \yes-instance, and denote the history state as an eigenvector of $\op H_l$ with $\ket\hstate$, which by \cref{itm:W-hist} is also an eigenstate of $\bonusH$, $\penaltyBd$ and $\op P$. A direct calculation yields
  \begin{align*}
    \bra\hstate\op H\ket\hstate&=
    \overbrace{\bra\hstate\op H_l\ket\hstate}^{=0}+
    \overbrace{\bra\hstate\bonusH\ket\hstate}^{=-1}+
    p(l)(
    \overbrace{\bra\hstate\penaltyBd\ket\hstate}^{=-1}+
    \overbrace{\bra\hstate\op P\ket\hstate}^{=0}
    )\\
    &\hphantom{=}+
    \bra\hstate\op P_\text{in/out}\ket\hstate
    \\
    &=-2+\bra\hstate\op P_\text{in/out}\ket\hstate.
  \end{align*}
  By \cref{itm:W-decides}, we further know that at least one vertex in $\ket\hstate$ has the in- and output substrings $\sinp$, $\sout$, and because $|\sinp|_h=|\sout|_h=1$, there is at most one such substring match for every vertex. As an upper bound, we can thus assume that the penalty applies exactly once in every vertex---i.e.~$\bra\hstate\op P_\text{in/out}\ket\hstate\le\epsilon$, and conclude $\bra\hstate\op H\ket\hstate\le-2+\epsilon$.

  \paragraph{Soundness.}
  Assume $l$ is a \no-instance. We need to lower-bound the lowest energy eigenvalue of $\op H$, and since we know that $\op H$ is block-diagonal in the irreducible evolutions, we can bound each block separately---the history state block given in \cref{itm:W-hist} and any other irreducible evolution block characterized by \cref{itm:W-others}. Without loss of generality we can therefore assume that $\ket\psi$ is completely supported on a single block of $\op H$ (but not necessarily an eigenvector).

  Take any $\ket\psi$ with support constrained to the history state block. As in the completeness part, a direct calculation allows the estimate
  \begin{align*}
    \bra\psi\op H\ket\psi&=
    \bra\psi\op H_l\ket\psi+
    \overbrace{\bra\psi\bonusH\ket\psi}^{=-1}+
    p(l)(
    \overbrace{\bra\psi\penaltyBd\ket\psi}^{\ge-1}+
    \overbrace{\bra\psi\op P\ket\psi}^{\ge0}
    )\\
    &\hphantom=+
    \bra\psi\op P_\text{in/out}\ket\psi\\
    &\ge-2+\bra\psi\op H_l+\op P_\text{in/out}\ket\psi.
  \end{align*}
  We can now apply \cref{lem:kitaev-ulgs} to the last expression. By \cref{itm:W-decides,def:qts-decides}, we obtain a bound
  $\bra\psi\op H_l+\op P_\text{in/out}\ket\psi \ge (1-\epsilon)/|\hist|^3$. Observe how this lower bound scales $\propto1/|\hist|^3$, whereas for \yes-instances the upper bound scales constant $\propto\epsilon$.
  Since we want the lower bound for \no-instances---$\beta$---and the upper bound for \yes-instances---$\alpha$---to be separated by at least some $\beta-\alpha=\Omega(1/\poly\J(l))$, cf.~\cref{def:khamiltonian}, we need to amplify the accepting probability to $\epsilon=O(1/|\hist|^4)=O(1/\poly\J(l)^4)$. Observe that this does not exceed the allowed amplification, which is only limited to $O(1/3^{\poly|l|})$.

  We proceed to show lower bounds for all other minimum valid evolutions, following \cref{itm:W-others}.
  Assume we are in a block with 0 heads, which is well-defined by \cref{itm:W-preserves}. The bonus term $\bonusH$ vanishes on this subspace while all other operators in \cref{eq:overall-hamiltonian} are positive semi-definite, so we obtain a lower bound of $\bra\psi\op H\ket\psi\ge0$ for any state solely supported there.

  Analogously, non-bracketed blocks can be bounded by a direct calculation, as $\penaltyBd$ penalizes all vertices equally: any non-bracketed state $\ket\psi$ for a block with $h$ heads satisfies $\bra\psi\op H\ket\psi\ge-h+p(l)$. It thus suffices to set $p(l)\ge \J(l)$, as the number of possible heads on a string is limited by its length, i.e.~$h\le\J(l)$.

  The last blocks remaining are the ones with $\g(\J(l))$-sized connected parts with invalid tuples, where $g(n)=\poly n$ as defined in \cref{itm:W-others}. First observe that this part of the ULG is not necessarily simple, so we remove the transitions which allow non-trivial loops without breaking the graph up into multiple parts. We \emph{then} split this graph into $\g(\J(l))$-sized connected components by temporarily removing further edges from it, which yields a Hamiltonian for a sparser graph $\op H'$. Since adding any edges back in corresponds to adding a positive semi-definite matrix to $\op H'$, it suffices to lower-bound the spectrum of $\op H'$ on this subspace. Note that we do not remove vertices or change any penalties, so in particular all the diagonal operators in \cref{eq:overall-hamiltonian} remain untouched.

  Hence assume $\ket\psi$ has support in one of the slices of size upper-bounded by $\g(\J(l))$ with $h$ heads, such that at least one vertex picks up a penalty from $\op P$. Again applying \cref{lem:kitaev-ulgs}, we obtain a bound $\bra\psi\op H\ket\psi\ge\bra\psi\op H'\ket\psi\ge-h-1+p(l)\times\Omega(1/\g(\J(l))^3)$. We therefore have to scale $p$ to e.g.~$p(l)\ge \g(\J(l))^5$, which is still allowed by \cref{rem:qma-qmaexp-information} (namely, $p$ is polynomial in $l$). This concludes the proof.
\end{proof}

What remains to be shown is the existence of a QTS as in \cref{lem:qts-special}. The next section will provide an explicit construction, finalizing the proof of our main result. This construction is meant as a proof-of-concept---the model we present can be modified in numerous ways and is likely not optimal. It does, however, make heavy use of our newly-developed methods such as branching, thus reducing the local dimension of the underlying Hamiltonian to $42$, as compared to the hitherto best result by \cite{Gottesman2009} which is larger by at least several orders of magnitude.

\section{Turing's Wheelbarrow}\label{sec:wheelbarrow}
\subsection{Introduction}\label{sec:wheelbarrow-intro}
Turing's Wheelbarrow is our constructive proof of a QTS with properties as mentioned in \cref{lem:qts-special}.
The QTS will be optimized for local dimension and locality---every transition rule will be 2-local and act on strings from an alphabet $\Gamma$ with 48 characters.
We describe the QTS by explicitly writing out all transition rules of the QTS and then prove the properties from \cref{lem:qts-special}. Finally, in \cref{sec:Reduction} we reduce its local dimension down to 42.

The conceptual idea of the Wheelbarrow QTS is the following. To build a QTS which can decide a promise problem $\PromP\in\BQEXP$, we first prefix the original circuit $C_l$ deciding an instance $l\in\PromP$ by another circuit which verifies that a number of ancillas necessary for $C_l$ are correctly initialized to $\ket0$. On some extra ancillas, we write out the problem instance $l$, and also leave an unconstrained section of qubits available for $C_l$. This witness section, problem instance and the leftover ancillas are then fed into $C_l$, and the output wire contains $\ket{\text{out}}=\cos((p_a+p_\text{out})/3)\ket 0+\sin((p_a+p_\text{out})/3)\ket 1$ for the amplitudes $p_a$---all ancillas being 0---and $p_\text{out}$---the circuit output of $C_l$ on the ancillas and problem instance. This overall circuit, denoted $C'_l$, is shown in \cref{fig:augmented-verifier}.

\label[author=Maris]{What is meant by "language can be written out"?}
It is clear that this augmented circuit family $(C'_l)_{l\in\PromP}$ is in the same uniformity class as the original circuit family $(C_l)_{l\in\PromP}$, and we can thus define these circuits with output $\ket{\text{out}}$ to be a separate \BQEXP problem $\PromP'$. By \cref{prop:bqpqrm} and its proof, this new promise problem can be decided by a family of \BQEXP QRMs with the special property that the head motion and all internal QRM states are classical---cf.~\cref{fig:qrm-head}.

Using the \knownth{Solovay-Kitaev theorem}~\cite[appx.~3]{Nielsen2010}, the head unitary of such a QRM can in turn be efficiently rewritten as a circuit $R$ using the following small set of gates.
\begin{remark}\label{rem:gateset}
  \toffoli, \swap and a classically-controlled quantum-universal unitary together with at least one classical and quantum ancilla is universal for quantum computation and exactly universal for classical computation.
\end{remark}
In particular, S-K tells us that since the head circuit $R=R(l)$ depends on the problem instance $l$---as it needs to write the instance out---and the size of this circuit is $|R(l)|=\poly|l|$. The Wheelbarrow QTS which we construct will then be able to execute this head cyclically on a ring of qubits, where the execution is halted once the QRM terminates: as the QRM motion is deterministic, the runtime will be $\exp\poly |l|$-bounded, as required for a \BQEXP computation.

The first step is to bootstrap the QRM head $\op U$. Starting from an initially empty string, we use a number of rules to translate the string length $N$ into a circuit description of $\op U$ on the left side of the string. This section will have size $\approx\log_6N$, as we need 6 instruction symbols---a classically-controlled unitary \${:U}, \toffoli \${:T}, ancilla-checking symbol \${:A}, swap \${:S}, left-shift symbol \$r and halt \${:H}. The remaining right side of the string will act as classical and quantum tape that the computation runs on.
\Cref{fig:circuit-to-desc} outlines how a circuit can be translated into such a 6-ary circuit description.

\begin{figure}
  \centering
  \begin{tikzpicture}[
      draw=black,line width=.8pt,
      box/.style = {rounded corners=.5}
    ]
    \newcommand{\cgate}[3]{
      \draw[] (#1,#2) -- (#1,#2+1);
      \fill (#1,#2+1) circle [radius=2pt];
      \draw[fill=white,box] (#1-.4,#2-.4) rectangle (#1+.4,#2+.4);
      \node[align=center] at (#1,#2) {#3};
    }
    \draw[] (0,0) -- (12,0);
    \foreach\i in {0,1,2,1+0,1+1,1+2,2+0,2+1,2+2}
    {\draw[] (0,1+\i/4) -- (12,1+\i/4);}

    \cgate{2}{0}{$\op R_{\frac13}$}
    \cgate{11}{0}{$\op R_{\frac13}$}
    \draw[fill=white,box] (1,.75) rectangle (3,2.75);
    \draw[fill=white,box] (3.5,1.75) rectangle (5,2.75);
    \draw[fill=white,box] (7.5,.75) rectangle (9.5,3.75);

    \node[align=center] at (2,1.75) {ancillas\\ all $0$?};
    \node[align=center] at (4.25,2.25) {write $l$};
    \node[align=center] at (8.5,2.25) {quantum\\ verifier\\ circuit};
    \node[align=center] at (10.25,.8) {out};
    \node[align=center] at (6.25,2.75) {instance $l$};
    \node[align=right] at (-.4, 0) {$\ket 0$};
    \node[align=right,rotate=90] at (-.4,1.75) {ancillas};
    \node[align=center,rotate=90] at (-.4-.4,3.25) {uncon-\\ strained\\ witness};
    \node[align=left] at (12.5, 0) {$\ket{\text{out}}$};
  \end{tikzpicture}
  \caption{An augmented quantum verifier circuit. The circuit uses one ancilla $\ket 0$ to verify that as many ancillas as necessary for the computation are set to $0$, rotating the single guaranteed $\ket 0$ ancilla by $\pi/3$ if this is not the case. On some ancillas, the problem instance $l$ is written out. Another rotation by $\pi/3$ is applied depending on the output of the verifier circuit. The overall output state then takes the form $\ket{\text{out}}=\cos((p_a+p_\text{out})/3)\ket 0+\sin((p_a+p_\text{out})/3)\ket 1$.}
  \label{fig:augmented-verifier}
\end{figure}

\begin{figure}
  \centering
  \begin{tikzpicture}[
      draw=black,line width=.8pt,
      box/.style = {rounded corners=.5},
      down/.style = {red},
      up/.style = {->,green}
    ]
    \newcommand{\gate}[4]{
      \draw[fill=white,box] (#1*1.5-.4,#2/2-.3) rectangle (#1*1.5+.4,#2/2+#3/2+.3);
      \node[align=center] at (#1*1.5,#2/2+#3/4) {#4};
    }
    \foreach\i in {0,...,6}
    {\draw[] (-0.7,\i/2) -- (11.3,\i/2);}

    \draw[down] (-.7,2.75) -- (.6,2.75) -- (.8,2.25) -- (2.1,2.25) -- (2.3,1.75) -- (3.6,1.75) -- (3.8,1.25) -- (5.1,1.25) -- (5.3,.75);
    \draw[down] (5.3,1.25) -- (6.6,1.25) -- (6.8,.75);
    \draw[down,dashed] (6.8,.75) -- (6.8,.25);
    \draw[down] (6.8,.25) -- (8.1,.25) -- (8.3,-.25);
    \draw[down] (8.3,2.75) -- (9.6,2.75) -- (9.8,2.25);
    \draw[down] (9.8,2.75) -- (11.1,2.75) -- (11.3,2.25);

    \draw[up] (5.3,.75) -- (5.3,1.25);
    \draw[up] (8.3,-.25) -- (8.3,.25);
    \draw[up] (8.3,.25) -- (8.3,.75);
    \draw[up] (8.3,.75) -- (8.3,1.25);
    \draw[up] (8.3,1.25) -- (8.3,1.75);
    \draw[up] (8.3,1.75) -- (8.3,2.25);
    \draw[up] (8.3,2.25) -- (8.3,2.75);
    \draw[up] (9.8,2.25) -- (9.8,2.75);

    \gate{0}{6}{0}{$\op U$}
    \gate{1}{4}{1}{$\op U_0$}
    \gate{2}{3}{1}{$\op U_0$}
    \gate{3}{3}{0}{$\op U_1$}
    \gate{4}{2}{1}{$\op U_0$}
    \gate{5}{0}{1}{$\op U_0$}
    \gate{6}{6}{0}{$\op U_1$}
    \gate{7}{6}{0}{$\op U_?$}
  \end{tikzpicture}
  \caption{Example on how to translate a sample circuit section into a program description, where $\op U_0$, $\op U_1$ and $\op U_?$ can stand for any unitary gate. Starting from the top left, the description here is \${:1:0:0:1r:0:0r:0:0rrrrrr:1r:?}. The dashed line stands for a normal identity \${:0r:0}, as $\cnot^2=\1$. The rhombus \$r is a special symbol that shifts the current gate position up by one; as in each successive step the position moves down by one by default, it suffices to only have this one special shift symbol. The last gate is a special identity to be used to initialize ancillas and penalize a section of the output.}
  \label{fig:circuit-to-desc}
\end{figure}

The QRM is then executed: for every round, a program bit is taken from the left side of the string, moved towards the tape and then applied to the leftmost two data qubits. The leftmost data qubit is then picked up and carried to the right, where it is deposited. The revert action is similar, only that the rightmost data qubit is picked up and moved to the left of the tape. \Cref{fig:wheelbarrow-gates} illustrates both operations. The execution runs until the underlying ring machine terminates, which can be determined using a special halt operation \${:H} which only proceeds if the tape data is not in a halting configuration.

This also explains the choice of \emph{Turing's Wheelbarrow} as name for this QTS: qubits and program symbols are moving across the tape in two cyclic motions, mimicking a busy worker carrying and depositing information in a wheelbarrow.

\begin{figure}
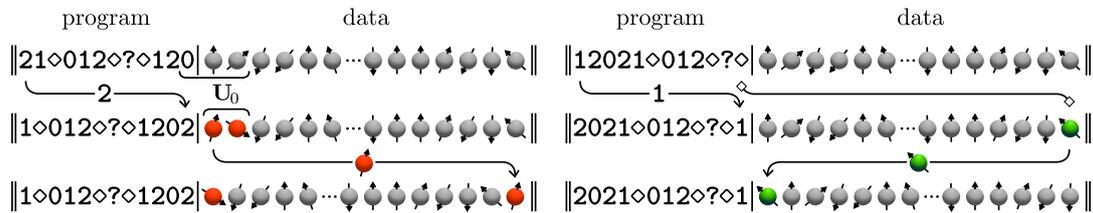

  \centering
  \subfigure{
    \includegraphics[width=0.47\linewidth]{./figures/wheelbarrow-gate}
  }
  \subfigure{
    \includegraphics[width=0.47\linewidth]{./figures/wheelbarrow-return}
  }
  \caption{The two actions that can be performed by the wheelbarrow construction. On the left, we apply a gate $\op U_0$ corresponding to the rightmost program bit \${:0}. The ring of qubits is then rotated by one, which is the default downwards shift as mentioned in \cref{fig:circuit-to-desc}. On the right, the special action of the \$r symbol is depicted: it signalises the rightmost qubit to move back to the left side. After either action, the program string is rotated by one.}
  \label{fig:wheelbarrow-gates}
\end{figure}

\subsection{Notation}
For convenience, we define a special notation to describe the construction of Turing's Wheelbarrow. We begin by introducing the alphabet and tape.
\begin{definition}\label{def:alphabet}
  Let $\Gamma := \Gamma_{cl} \sqcup \Gamma_q$ denote the alphabet consisting of 48 symbols where
  \begin{align*}
    \Gamma_{cl} := \Bigl\{ \;
      &  \$|,\$?,\$!,\$I, \quad
      \$g,\$G,\$b,\$B, \quad
      \$+,\$X,\$Y,\$Z,\$C,\$D, \\
      &  \${:U},\${:T},\${:A},\${:S},\${:H},\$r, \quad
      \${@U},\${@T},\${@A},\${@S},\${@H},\${@r}, \quad
      \$o,\${OU},\${OT},\${OA},\${OS},\${OH},\$R,\$w, \\
      &  \${:!},\${*0},\${*1},\${;1},\${"1}, \quad
      \${:0},\${:1}, \quad
      \${@0},\${@1}
    \; \Bigr\}, \\
    \Gamma_q := \Bigl\{ \;
      &\${-:a},\${-@a},\${-;a},\${-"a},\${-*a}
    \; \Bigr\}.
  \end{align*}
  These two sets correspond to the classical and quantum symbols, respectively, and are of size $\abs{\Gamma_{cl}} = 43$ and $\abs{\Gamma_q} = 5$. The set of head characters is
  \begin{align*}
    \set H:=\Gamma\setminus\{ \$|, \$?, \${:0}, \${:1}, \${:U}, \${:T}, \${:A}, \${:S}, \${:H}, \$r, \$o, \$w, \$R, \${-:a} \},
  \end{align*}
  and the boundary characters are $\set B:=\{\$|,\$!,\$?\}$.
\end{definition}
The number of alphabet characters can be further reduced to 39, 3 of which are quantum, see \cref{cor:qts-qma-exists}. For reasons of clarity, we use a slightly larger alphabet in this construction.

We will generally use the letters $x,y,z$ as placeholders for program symbols---denoting any of the symbols \${:U}, \${:T}, \${:A}, \${:S}, \${:H}, \$r as \${:x}, or alternatively \${@U}, \${@T}, \${@A}, \${@S}, \${@H}, \${@r} as \${@x}, which is always clear from the context. The symbol \${:U} encodes a classically-controlled unitary, \${:T} a \toffoli, \${:A} an ancilla, \${:S} a \swap, \${:H} a halt and \$r a special tape revert symbol.

We now introduce the notation for transition rules.
\begin{definition}\label{def:rules}
  We write a transition rule $xy \leftrightarrow zw$ of a quantum Thue system as
  \[
    \begin{rules}
      \${>:x>:y}\\
      \${<:z<:w}
    \end{rules}.
  \]
  The blue shading is used to indicate the location on the tape where the transition rule is applied.
  Note that, by construction, transition rules are symmetric, i.e.\
  \[
    \begin{rules}
      \${>:x>:y}\\
      \${<:z<:w}
    \end{rules}
    \mt{is equivalent to}
    \begin{rules}
      \${>:z>:w}\\
      \${<:x<:y}
    \end{rules}.
  \]
  If the first rule is associated with a non-trivial unitary $\op U$, the inverse rule is associated with the adjoint $\op U^\dagger$.
\end{definition}

As in \cref{def:ts}, we never need to write out the values of the qubits anywhere. In fact, the only place where the associated Hilbert space comes in is when we want to apply a quantum gate to the qubits (see \cref{sec:computation}). As an example, consider the action of swapping two neighbouring qubits. The Thue system itself does not notice this, e.g. we would have a transition with an explicit comment on the Hilbert space unitary, i.e.
\[
  \begin{rules}
    \${>-@a>-:a}\\
    \${<-:a<-@a}
  \end{rules}
  \quad\text{where the associated Hilbert spaces are swapped with $\op U=\swap$.}
\]
To emphasize that the subspaces are in fact swapped, we generally use the letters $a, b, c, d$ to label different quantum subsystems. \emph{This is only to facilitate notation!} In principle, we could stick to the letter $a$ and write out the swap action for every transition rule where this is relevant. But because we believe it is easier to read and most of the non-trivial unitaries that we use are swaps, we simply write
\[
  \begin{rules}
    \${>-@a>-:b}\\
    \${<-:b<-@a}
  \end{rules}
\]
which is self-explanatory.

\subsection{Transition Rules and History State}\label{sec:wheelbarrow-rules}
\begin{figure}
  \centering
  \centerline{
%
%
%
%


\resizebox{5.5in}{!}{
\centerline{\begin{tabular}{ccc}
counting phase & \multicolumn{2}{|c}{computation phase} \\ \midrule
\begin{rules}
	\${|>I>:0-:a-:a:c.-:a:c-:a:c-:ab|}\\
	\${|<:0<>I>-:a-:a:c.-:a:c-:a:c-:ab|}\\
	\rvdots\\
	\${|:0-:a-:a:c.-:a:c-:a:c<-:a<>I>b|}\\
	\${|:0-:a-:a:c.-:a:c-:a:c-:a<o<>B>|}\\
	\${|:0-:a-:a:c.-:a:c-:a:c-:a>o<>G<|}\\
	\${|:0-:a-:a:c.-:a:c-:a:c->:a><G<o|}\\
	\${|:0-:a-:a:c.-:a:c-:a>:c><G<-:ao|}\\
	\rvdots\\
	\${|>:0><G<-:a-:a:c.-:a:c-:a:c-:ao|}\\
	\${>|><G<:0-:a-:a:c.-:a:c-:a:c-:ao|}\\
	\${<|<>+>:0-:a-:a:c.-:a:c-:a:c-:ao|}\\
	\${|<:U<>Y>-:a-:a:c.-:a:c-:a:c-:ao|}\\
	\${|>:U<>Z<-:a-:a:c.-:a:c-:a:c-:ao|}\\
	\${>|><C<:U-:a-:a:c.-:a:c-:a:c-:ao|}\\
	\${<|><D>:U-:a-:a:c.-:a:c-:a:c-:ao|}\\
	\${|<:U<>D>-:a-:a:c.-:a:c-:a:c-:ao|}\\
	\${|:U<-@a<>g>-:a:c.-:a:c-:a:c-:ao|}\\
	\${|:U-@a<-:a<>g>:c.-:a:c-:a:c-:ao|}\\
	\rvdots\\
	\${|:U-@a-:a:c.-:a:c-:a<:c><g>-:ao|}\\
	\${|:U-@a-:a:c.-:a:c-:a:c<-:a<>g>o|}\\
	\${|:U->@a>-:a:c.-:a:c-:a:c-:a<o<>g>|}\\
	\${|:U<-:a<>-@a>:c.-:a:c-:a:c-:ao<b<|}\\
	\rvdots\\
	\${|:U-:a:c.-:a:c-:a:c<-:a<>-@a>ob|}\\
	\${|:U-:a:c.-:a:c-:a:c-:a<o<>-@a>b|}\\
	\${|:U-:a:c.-:a:c-:a:c-:ao<-:a<>B>|}\\
	\${|:U-:a:c.-:a:c-:a:c-:ao>-:a><G<|}\\
    \rvdots
\end{rules}
&
\begin{rules}
    \${>|>:H:U:S:S:T.ro:0:1-:a-:a:c.-:a>:c><G<|}\\
    \${<?<:H:U:S:S:T.ro:0:1-:a-:a:c.>-:a><G<:c|}\\
    \rvdots\\
    \${?>:H<>G<:U:S:S:T.ro:0:1-:a-:a:c.-:a:c|}\\
    \${>?<>G<:H:U:S:S:T.ro:0:1-:a-:a:c.-:a:c|}\\
    \${<!<>g>:H:U:S:S:T.ro:0:1-:a-:a:c.-:a:c|}\\
    \${!<:H<>g>:U:S:S:T.ro:0:1-:a-:a:c.-:a:c|}\\
    \${>!>:H<:U<>g>:S:S:T.ro:0:1-:a-:a:c.-:a:c|}\\
    \${<?<>@H>:U<:S<>g>:S:T.ro:0:1-:a-:a:c.-:a:c|}\\
    \${?<:U<>@H>:S<:S<>g>:T.ro:0:1-:a-:a:c.-:a:c|}\\
    \rvdots\\
    \${?:U:S:S:T.<r<>@H>o:0:1-:a-:a:c.-:a<:c<>g>|}\\
    \${?:U:S:S:T.r<:H<>OH>:0:1-:a-:a:c.-:a:c<b<|}\\
    \${?:U:S:S:T.r:H<o<>@0>:1-:a-:a:c.-:a:cb|}\\
    \${?:U:S:S:T.r:Ho<:1<>@0>-:a-:a:c.-:a:cb|}\\
    \rvdots\\
    \${?:U:S:S:T.r:Ho:1-:a-:a:c.-:a<:c<>@0>b|}\\
    \${?:U:S:S:T.r:Ho:1-:a-:a:c.-:a:c<:0<>B>|}\\
    \${?:U:S:S:T.r:Ho:1-:a-:a:c.-:a:c>:0<>G<|}\\
    \${?:U:S:S:T.r:Ho:1-:a-:a:c.-:a>:c<>G<:0|}\\
    \rvdots\\
    \${>?><G<:U:S:S:T.r:Ho:1-:a-:a:c.-:a:c:0|}\\
    \${<!<>g>:U:S:S:T.r:Ho:1-:a-:a:c.-:a:c:0|}\\
    \${!<:U<>g>:S:S:T.r:Ho:1-:a-:a:c.-:a:c:0|}\\
    \${>!>:U<:S<>g>:S:T.r:Ho:1-:a-:a:c.-:a:c:0|}\\
    \${<?<>@U>:S<:S<>g>:T.r:Ho:1-:a-:a:c.-:a:c:0|}\\
    \rvdots\\
    \${?:S:S:T.r<:H<>@U>o:1-:a-:a:c.-:a:c<:0<>g>|}\\
    \${?:S:S:T.r:H<:U<>OU>:1-:a-:a:c.-:a:c:0<b<|}\\
    \${?:S:S:T.r:H:U<o<>:!>-:a-:a:c.-:a:c:0b|}\\
    \${?:S:S:T.r:H:Uo<:1<>-;a>-:a:c.-:a:c:0b|}\\
    \${?:S:S:T.r:H:Uo>:1<>-"a<-:a:c.-:a:c:0b|}\\
    \${?:S:S:T.r:H:Uo<>@1<>-:a-:a:c.-:a:c:0b|}\\
    \${?:S:S:T.r:H:Uo<-:a<>@1>-:a:c.-:a:c:0b|}\\
    \rvdots\\
 \end{rules}
&
\begin{rules}
    \${?>r<>G<:H:U:S:S.:Uo:c-:a-:a:c.-:a:c:1|}\\
    \${>?<>G<r:H:U:S:S.:Uo:c-:a-:a:c.-:a:c:1|}\\
    \${<!<>g>r:H:U:S:S.:Uo:c-:a-:a:c.-:a:c:1|}\\
    \${!<r<>g>:H:U:S:S.:Uo:c-:a-:a:c.-:a:c:1|}\\
    \${>!>r<:H<>g>:U:S:S.:Uo:c-:a-:a:c.-:a:c:1|}\\
    \${<?<>@r>:H<:U<>g>:S:S.:Uo:c-:a-:a:c.-:a:c:1|}\\
    \${?<:H<>@r>:U<:S<>g>:S.:Uo:c-:a-:a:c.-:a:c:1|}\\
    \rvdots\\
    \${?:H:U:S:S.<:U<>@r>o:c>-:a<>g<-:a:c.-:a:c:1|}\\
    \${?:H:U:S:S.:U<r<R>:c<>g<-:a-:a:c.-:a:c:1|}\\
    \${?:H:U:S:S.:Ur>R><g<:c-:a-:a:c.-:a:c:1|}\\
    \${?:H:U:S:S.:Ur<R<>G>:c-:a-:a:c.-:a:c:1|}\\
    \${?:H:U:S:S.:UrR<:c<>G>-:a-:a:c.-:a:c:1|}\\
    \rvdots\\
    \${?:H:U:S:S.:UrR:c-:a-:a:c.-:a:c<:1<>G>|}\\
    \${?:H:U:S:S.:UrR:c-:a-:a:c.-:a:c>:1><B<|}\\
    \${?:H:U:S:S.:UrR:c-:a-:a:c.-:a>:c<>@1<b|}\\
    \${?:H:U:S:S.:UrR:c-:a-:a:c.->:a<>@1<:cb|}\\
    \rvdots\\
    \${?:H:U:S:S.:UrR>:c<>@1<-:a-:a:c.-:a:cb|}\\
    \${?:H:U:S:S.:Ur>R<>@1<:c-:a-:a:c.-:a:c>b>|}\\
    \${?:H:U:S:S.:Ur<w<:1:c-:a-:a:c.-:a>:c><g<|}\\
    \${?:H:U:S:S.:Urw:1:c-:a-:a:c.->:a<>g<:c|}\\
    \rvdots\\
    \${?:H:U:S:S.:Urw>:1><g<:c-:a-:a:c.-:a:c|}\\
    \${?:H:U:S:S.:Ur>w<>g<:1:c-:a-:a:c.-:a:c|}\\
    \${?:H:U:S:S.:Ur<>o<>G:1:c-:a-:a:c.-:a:c|}\\
    \${?:H:U:S:S.:U>r<>G<o:1:c-:a-:a:c.-:a:c|}\\
    \rvdots\\
    \${?>:H<>G<:U:S:S.:Uro:1:c-:a-:a:c.-:a:c|}\\
    \${>?<>G<:H:U:S:S.:Uro:1:c-:a-:a:c.-:a:c|}\\
    \${<!<>g>:H:U:S:S.:Uro:1:c-:a-:a:c.-:a:c|}\\
    \${!<:H<>g>:U:S:S.:Uro:1:c-:a-:a:c.-:a:c|}\\
    \${>!>:H<:U<>g>:S:S.:Uro:1:c-:a-:a:c.-:a:c|}\\
    \${<?<>@H>:U<:S<>g>:S.:Uro:1:c-:a-:a:c.-:a:c|}\\
    \${?<:U<>@H>:S<:S<>g>.:Uro:1:c-:a-:a:c.-:a:c|}\\
    \rvdots\\
    \${?:U:S:S.:U<r<>@H>o:1:c-:a-:a:c.-:a<:c<>g>|}\\
    \${?:U:S:S.:Ur<:H<OH:1:c-:a-:a:c.-:a:c<b<|}
\end{rules}
\end{tabular}}
}
  \caption{Evolution of the history state without branching.}
  \label{fig:wheelbarrow-tape-evolution}
\end{figure}

The following table contains a list of all transition rules, visualized from a starting string of the form \${|I-:a:c.-:a:c:c b|}. The horizontal direction corresponds to space while the time flows from top to bottom. By default, the unitary associated to any rule---if not mentioned otherwise---is the identity. Apart from \swap operations, the only non-trivial unitary appears in the computation step in \cref{sec:computation}.

Observe that there are many possible local ambiguities within the history state, which we analyse in detail in \cref{sec:wheelbarrow-ambiguities}.

\newenvironment{ttabular}{%
  \begin{spacing}{.8}%
    \noindent
    \begin{tabular}{@{}p{\textwidth-4.35cm}R{4cm}@{}}
    }{
    \end{tabular}
  \end{spacing}
}

\subsubsection{Initialization}\label{sec:rules-initialization}
The initialization is done by moving a special symbol, a ``sweeper'' \$I, from one end of the tape to the other side. This ensures that the tape is actually correctly initialized, since any symbol apart from \${-:a}, \${:c} or the ghost \$b would result in a penalized configuration, cf.\ \cref{tab:allowed-character-tuples}.

\begin{ttabular}
  \toprule
  Left hand side has a sweeper \$I, right hand side an inactive ghost \$b, and all middle symbols are qubits \${-:a} or data bits $\${:c} \in \{ \${:0}, \${:1} \}$, which are opaque for the ghosts. We let the sweeper move through all middle symbols. This allows a dynamic ``initialization'' of the tape: if the sweeper \$I bumps into any symbol that is not a qubit, data bit or inactive ghost, we can penalize the configuration, singling out the proper history state.

  Once the sweeper reaches the ghost \${b} at the right boundary, it activates the ghost to \${B} and transitions to the box \$o.
  &
  \begin{rules}
    \$[0]{|>I>-:a:c.-:a:c:c b|}\\
    \$[1]{<-:a<>I>:c}\\
    \$[2]{<:c<I}\\
    \rvdots\\
    \$[6]{:c>I>:cb|}\\
    \$[7]{<:c<>I>b|}\\
    \$[0]{|-:a:c.-:a:c:c<o<B|}
  \end{rules}
  \\ \bottomrule
\end{ttabular}

\subsubsection{Ghost}
The ghost symbols act as general ``carriage return'' symbols: this saves having different return variants for each head symbol used, and is solely a way of saving local dimension.
The ghost can thus be seen as a particle to the right side of any other head symbol, and which diffuses freely on the tape (i.e.\ randomly moves left or right).
Only if the ghost is ``activated''---i.e.\ carries a ``head flag''---can it interact non-trivially with the symbols around it.

Generally, if there is an extra head symbol on the tape, the ghost is inactive (\$g and \$b). The ghost can itself carry the head flag, in which case we call it \emph{active} and denote it by \$G or \$B. The white active ghost can either turn itself into a head symbol on the left hand side, or activate the boundary. On the right boundary, it oscillates between white and black. This construction saves us a lot of symbols, since we only ever need to specify special right-moving heads, whereas the left movement of the head state is done generically by the ghost. We will often gloss over inactive ghost transitions and assume the ghost just ``moves out of the way'' as necessary.

\vspace{.3cm}
\begin{ttabular}
  \toprule
  Ghosts can change color on the right boundary. Since the black ghosts \${b} and \${B} are static, any incoming head from the left can detect when it has reached the boundary as it will encounter a black ghost.
  &
  \begin{rules}
    \${>B>|}\\\${<G<|}
  \end{rules}
  \begin{rules}
    \${>g>|}\\\${<b<|}
  \end{rules}
  \\ \midrule
  White ghosts can move through all static symbols, but \emph{not} through heads.
  &
  \begin{rules}
    \${>G>:*}\\\${<:*<G}
  \end{rules}
  \begin{rules}
    \${>g>:*}\\\${<:*<g}
  \end{rules}
  \\ \bottomrule
\end{ttabular}

\subsubsection{Base-6 Counter}
From a high-level perspective, the base-6 counter and unary counter (next section) work together to translate the tape length into a base-6 big endian number on the left side of the tape.
This base-6 number then encodes the program which we execute afterwards: we count in base $6$ through the sequence \${:T}, \${:U}, \${:A}, \${:S}, \${:H}, \$r---encoding a \toffoli, classically controlled unitary, ancilla, swap, halt or tape revert operation, respectively.

\vspace{.3cm}\noindent
\begin{ttabular}
  \toprule
  Active ghost \$G hits the left boundary and turns into the incrementer \$+.
  &
  \begin{rules}
    \${>|>G:T:U:A:Arr:H.}\\
    \${<|<+}
  \end{rules}
  \\ \midrule
  If the incrementer encounters \$r---the highest-valued digit---it flips it to \${:T}---the lowest-valued digit. This results in an overflow that is carried over to the next digit to the right. If the next digit is again \$r, the same procedure repeats until a different symbol or a qubit is encountered.
  &
  \begin{rules}
    \${|>+>r.r:x.}\\
    \$[1]{<:T<+}\\
    \rvdots\\
    \$[4]{>+>r:x.}\\
    \$[4]{<:T<+:x}
  \end{rules}
  \\ \addlinespace
  If the incrementer \$+ encounters \${:T} or classical zero \${:0}---both of which are treated as lowest-value symbol---it increments it to the next higher-valued symbol \${:U}. To uniquely distinguish to which symbol to decrement when run in reverse, the incrementer \$+ has to transition to the checking symbol \$X or \$Y, verifying that the symbol to its right is another counter or tape symbol, respectively. We never encounter the configuration \${+:1} or \${+-:a}, as this is penalized.

  The incrementation ends with the reverter symbol \$C.
  &
  \begin{rules}
    \${>+>:T:x}\\
    \${<:U<>X>:x}\\
    \${>:U<>Z<:x}\\
    \${<C<:U}
  \end{rules}
  \begin{rules}
    \${>+>:0-:a}\\
    \${<:U<>Y>-:a}\\
    \${>:U<>Z<-:a}\\
    \${<C<:U}
  \end{rules}
  \begin{rules}
    \${>+>:0:c}\\
    \${<:U<>Y>:c}\\
    \${>:U<>Z<:c}\\
    \${<C<:U}
  \end{rules}
  \\ \addlinespace
  If the incrementer encounters \${:U}, \${:A}, \${:S} or \${:H}, it increments the symbol to the next higher one---\${:A}, \${:S}, \${:H} or \$r, respectively---turning into the reverter symbol \$C.
  &
  \begin{rules}
    \${>+>:U}\\
    \${<C<:A}
  \end{rules}
  \begin{rules}
    \${>+>:A}\\
    \${<C<:S}
  \end{rules}
  \begin{rules}
    \${>+>:S}\\
    \${<C<:H}
  \end{rules}
  \begin{rules}
    \${>+>:H}\\
    \${<C<r}
  \end{rules}
  \\ \midrule
  The reverter \$C moves through the lowest-valued digits \${:T} all the way to the left boundary where it turns into the right mover \$D. Note that no digits other than \${:T} are possible to the left of \$C since incrementation proceeds to the next digit only in case of an overflow.
  &
  \begin{rules}
    \${>:T>C}\\
    \${<C<:T}
  \end{rules}
  \begin{rules}
    \${>|>C}\\
    \${<|<D}
  \end{rules}
  \\ \addlinespace
  The right mover \$D proceeds to the right through all digits of the base-6 counter.
  &
  \begin{rules}
    \${>D>:T}\\\${<:T<D}
  \end{rules}
  \begin{rules}
    \${>D>:U}\\\${<:U<D}
  \end{rules}
  \begin{rules}
    \${>D>:A}\\\${<:A<D}
  \end{rules}
  \begin{rules}
    \${>D>:S}\\\${<:S<D}
  \end{rules}
  \begin{rules}
    \${>D>:H}\\\${<:H<D}
  \end{rules}
  \begin{rules}
    \${>D>r}\\\${<r<D}
  \end{rules}
  \\ \addlinespace
  Eventually \$D encounters a qubit \${-:a} or classical bit \${:c}. It turns into an inactive ghost \$g and picks up the qubit \${-@a} or classical bit \${@c}.
  Afterwards the ghost \$g moves out of the way and we proceed to the unary counter.
  &
  \begin{rules}
    \${>D>-:a}\\\${<-@a<g}
  \end{rules}
  \begin{rules}
    \${>D>:c}\\\${<@c<g}
  \end{rules}
  \\ \bottomrule
\end{ttabular}
\vspace{.3cm}\noindent

For the configuration \${Do} there is no forward transition, which means that once we entered the computation phase, this counting does not continue.

\subsubsection{Unary Counter}\label{sec:rules-unary-counter}
The unary counter is necessary so that the base-6 counter knows when to stop.
We use a block symbol \$o to denote the position of the unary counter on the tape, starting from the right and moving to the left at each increment. Whenever this block is moved left once, the base-6 counter has been incremented by one as well.
In this way, once the unary counter has run out of space, we have translated the tape length into a base-6 number on the left side of the tape.

\begin{ttabular}
  \toprule
  A qubit \${-@a} or classical bit \${@c} is carried to the right past all other qubits or classical bits.
  &
  \begin{rules}
    \${>-@a>:c}\\\${<:c<-@a}
  \end{rules}
  \begin{rules}
    \${>-@a>-:b}\\\${<-:b<-@a}
  \end{rules}
  \begin{rules}
    \${>@c>:d}\\\${<:d<@c}
  \end{rules}
  \begin{rules}
    \${>@c>-:a}\\\${<-:a<@c}
  \end{rules}
  \\ \midrule
  The position of \$o indicates the value of the unary counter.
  As the qubit \${-@a} moves through it, the block \$o is pushed one position to the left---this increments the unary counter by one.
  &
  \begin{rules}
    \${>-@a>o}\\\${<o<-@a}
  \end{rules}
  \begin{rules}
    \${>@c>o}\\\${<o<@c}
  \end{rules}
  \\ \midrule
  Once the moving qubit \${-@a} reaches the black ghost \$b at the right boundary, the qubit \${-:a} is dropped and the ghost is activated to \$B.
  &
  \begin{rules}
    \${>@c>b|}\\\${<:c<B|}
  \end{rules}
  \begin{rules}
    \${>-@a>b|}\\\${<-:a<B|}
  \end{rules}
  \\ \bottomrule
\end{ttabular}

\subsubsection{Computation}\label{sec:computation}
The tape now has the form \${|:H:x.:xo:c-:a-:a:c.-:aG|}, i.e.~the counting is complete and by our choice of the chain length, the program description starts with a halt symbol \${:H}. The rest of the program string does not contain any \${:H}'s.

The idea behind the computation is depicted in \cref{fig:wheelbarrow-gates-intro}.
We first take the base 6 symbol from the left end of the program description and move it to the right end (e.g.\ $\${:x}2345$ would become $2345\${:x}$).
This symbol \${:x} can then be picked up by the box \${o}, which becomes activated to \${Ox}.
The active box is now followed by a set of rules which applies this program action to the (qu)bits right next to it.
Afterwards, the leftmost (qu)bit is carried to the right end and the procedure repeats.

The content of the tape symbols is checked on the fly using the ancilla program bit symbol \${OA}.
If it appears next to a qubit, a penalty is given for the qubit marginal being $\proj 1$; for a classical bit, we penalize \${:0} (this is because we do not have a \textsc{Not} gate, but \textsc{Not} can be implemented with \toffoli, which maps $111\mapsto110$).
The implementation details of all the different program bits are explained in the following table.

The computation halts once a halting program bit \${OH} is next to a classical \${:1}.

\vspace{.3cm}\noindent
\begin{ttabular}
  \toprule
  The boundary \$| is flipped to \$? and can only revert to \$| next to a halt symbol \${:H}---this ensures that we can only transition back and forth between counting and computation if the program bits are in their original order. An active ghost \$G can hit this left boundary \$? and activate it to \$!.
  &
  \begin{rules}
    \${>|>:H}\\
    \${<?<:H}
  \end{rules}
  \quad\quad\quad\quad
  \begin{rules}
    \${>?>G}\\
    \${<!<g}
  \end{rules}
  \\ \midrule
  A program bit $\${:x} \in \{ \${:T},\${:U},\${:A},\${:S},\${:H},\$r \}$ is picked up as \${@x} and carried to the right of the program string.
  &
  \begin{rules}
    \${>!>:x}\\
    \${<?<@x}
  \end{rules}
  \quad\quad\quad\quad
  \begin{rules}
    \${>@x>:y}\\\${<:y<@x}
  \end{rules}
  \\ \midrule
  These transition rules apply the \toffoli gate to three classical bits \${:c:c:c}. If either \${:0:c:c} or \${:1:0:c}, the last bit remains unchanged. Only for the configuration \${:1:1:c} we perform a bit flip on the last bit, i.e.~$d=\lnot c$. The first bit is then picked up with a carrier \${@0} or \${@1}.
  &
  \begin{rules}
    \${>@T>o:c:c:c-:a.}\\
    \${<:T<OT:c:c:c}
  \end{rules}
  \begin{rules}
    \${>OT>:0}\\
    \${<o<@0}
  \end{rules}
  \begin{rules}
    \${>OT>:1:c}\\
    \${<o<;1:c}
  \end{rules}
  \begin{rules}
    \${>;1>:0}\\
    \${<@1<:0}
  \end{rules}
  \begin{rules}
    \${>;1>:1:c}\\
    \${<:1<>"1>:c}\\
    \${:1<@1<:d}
  \end{rules}
  \\ \midrule
  These transition rules apply the classically-controlled unitary operation to a pair of qubits \${-:a-:a}, but only for the configuration \${:1-:a-:a}. The control bit is then picked up with a carrier \${@0} or \${@1}. Observe that this is the \emph{only} position where we apply a unitary operation to the quantum symbols.
  &
  \begin{rules}
    \${>@U>o:c-:a-:a:c.}\\
    \${<:U<OU:c-:a-:a}
  \end{rules}
  \begin{rules}
    \${>OU>:0}\\
    \${<o<@0}
  \end{rules}
  \begin{rules}
    \${>OU>:1-:a-:a}\\
    \${<o<>:!>-:a-:a}\\
    \$[1]{<:1<>-;a>-:a}\\
    \$[1]{>:1<>-"a<-:a}\\
    \$[1]{<@1<-:a}
  \end{rules}
  \\ \midrule
  \${OA-:a} acts as identity on the qubit, but is used later on to penalize when the attached Hilbert space is $\ket 1$, giving us the possibility for ancillas.

  \${OA:1} acts as identity, and we will penalize \${OA:0}, giving us a classical ancilla bit. Observe that we choose 1 here, as we can create 0s out of nothing but 1s with the \toffoli gate, but not vice versa.
  &
  \begin{rules}
    \${>@A>o-:a}\\
    \${<:A<>OA>-:a}\\
    \$[1]{<o<-@a}
  \end{rules}
  \begin{rules}
    \${>@A>o:1}\\
    \${<:A<>OA>:1}\\
    \$[1]{<o<@1}
  \end{rules}
  \\ \midrule
  \${OS} implements the \swap gate.
  &
  \begin{rules}
    \${>@S>o:c:d}\\
    \${<:S<>OS>:c:d}\\
    \$[1]{<o<>*c>:d}\\
    \$[2]{<@d<:c}
  \end{rules}
  \begin{rules}
    \${>@S>o:c-:a}\\
    \${<:S<>OS>:c-:a}\\
    \$[1]{<o<>*c>-:a}\\
    \$[2]{<-@a<:c}
  \end{rules}
  \begin{rules}
    \${>@S>o-:a:c}\\
    \${<:S<>OS>-:a:c}\\
    \$[1]{<o<>-*a>:c}\\
    \$[2]{<@c<-:a}
  \end{rules}
  \begin{rules}
    \${>@S>o-:a-:b}\\
    \${<:S<>OS>-:a-:b}\\
    \$[1]{<o<>-*a>-:b}\\
    \$[2]{<-@b<-:a}
  \end{rules}
  \\ \midrule
  \${OH} acts as identity on \${:0}, but has no forward transition for \${OH:1}, i.e.~the operation explicitly halts the computation.
  &
  \begin{rules}
    \${>@H>o:0.}\\
    \${<:H<>OH>:0}\\
    \$[1]{<o<@0}
  \end{rules}
  \\ \midrule
\end{ttabular}

\vspace{.3cm}\noindent
\begin{ttabular}
  \midrule
  \$R implements a tape revert, i.e.~moving the current tape position up by one. \$R acts like an activated boundary on the left hand side, i.e.~it blocks ghosts \$g or \$G. An incoming ghost from the right can activate from \$g to \$G, after which it proceeds back to the right end.

  Outlined in \cref{sec:rules-unary-counter} but run backwards, the activated ghost will move through to the right hand side and deactivate at the boundary, while picking up a qubit \${@a}. This qubit will move to the left until it encounters \$R. It drops the qubit and deactivates \$R to \$w. As soon as the inactive ghost \$g encounters this symbol, the ghost is reactivated and the box \$o is restored.
  &
  \begin{rules}
    \${>@r>o}\\
    \${<r<R}
  \end{rules}
  \quad\quad\quad\quad
  \begin{rules}
    \${>R>g}\\
    \${<R<G}
  \end{rules}
  \quad\quad\quad\quad
  \begin{rules}
    \${>-:a>B}\\\${<-@a<b}
  \end{rules}
  \begin{rules}
    \${>:c>B}\\\${<@c<b}
  \end{rules}
  \quad\quad
  \begin{rules}
    \${>R>-@a}\\\${<w<-:a}
  \end{rules}
  \begin{rules}
    \${>R>@c}\\\${<w<:c}
  \end{rules}
  \quad\quad
  \begin{rules}
    \${>w>g}\\\${<o<G}
  \end{rules}
  \\ \bottomrule
\end{ttabular}
\vspace{.3cm}\noindent

\begin{definition}[Turing's Wheelbarrow]
  \emph{Turing's Wheelbarrow} is the Quantum Thue System $(\Gamma,\set R,\{\op U_r\}_{r \in \set R},\sset C^2)$, where $\Gamma$ is given in \cref{def:alphabet} and the relation $\set R$ is defined by the transition rules in \cref{sec:wheelbarrow-rules} (with the conventions on notation from \cref{def:rules}).
\end{definition}

One can verify that Turing's Wheelbarrow, when applied to an initial string of the form \${|I-:a:c:c-:a.:c-:a b|}, where the sequence of \${:c}s and \${-:a} is such that they match the counting and computation phase, first translates the string length into a program description on the left string side, which is then executed cyclically on the tape. We call an initial configuration of this type \emph{valid initial configuration}.

There are, however, ambiguous transitions, which lead to branching in the graph---we discuss all possible branching points for the irreducible evolution containing this initial configuration.

\subsection{Branching in the History State}\label{sec:wheelbarrow-ambiguities}
We make extensive use of branching and ambiguous transitions to compress the number of symbols necessary to implement the Wheelbarrow. Therefore we need to show two things.
\begin{enumerate}
  \item The size of the history state is poly-bounded.
  \item There are no ambiguous transitions which lead to a penalized configuration.
\end{enumerate}

We take \cref{fig:wheelbarrow-tape-evolution} as a point of reference.

\paragraph{Ghosts.}
Whenever there is a ghost on the tape, it can either be active---\$G or \$B, or inactive---\$g or \$b.
\begin{fact}
  Inactive ghosts never change non-head symbols or pass through heads.
\end{fact}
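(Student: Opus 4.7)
The plan is a direct case analysis over every rule from \cref{sec:wheelbarrow-rules} whose left- or right-hand side contains an inactive ghost symbol $\$g$ or $\$b$. First I would collect these rules into three groups: (i) the bulk commutation rules ``$\$g$ next to a static symbol $x$ $\leftrightarrow$ $x$ next to $\$g$'', which by the accompanying text apply only to symbols $x\in\Gamma\setminus\set H$; (ii) the boundary color-change rule $\$g\,\$| \leftrightarrow \$b\,\$|$ at the right end of the tape; and (iii) the finitely many activation/deactivation rules that pair an inactive ghost with a head, namely the sweeper step in the initialization where $\$I$ and $\$b$ interact at the right boundary, the right-mover steps at the end of the base-6 counter phase in which $\$D$ picks up a qubit or bit and becomes $\$g$, the computation-phase rule $\$?\,\$G \leftrightarrow \$!\,\$g$ at the left boundary, and the tape-revert gadget transitions around $\$R$ and $\$w$.

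For the first clause of the Fact I would walk down the list and read off both sides of each rule: every rule in group~(i) has its non-head symbol $x$ appearing verbatim on both sides (only the ghost changes position), every rule in group~(ii) leaves the boundary $\$|$ fixed (only the ghost's colour changes), and every rule in group~(iii) modifies, alongside the ghost, only head or boundary characters. No rule containing an inactive ghost on either side therefore rewrites a non-head static symbol. For the second clause, the key observation is that the only rule that moves the inactive ghost past a neighbouring symbol without altering either is the commutation rule of group~(i), whose placeholder is by construction restricted to non-heads. There is no rule of the form $\$g\,h \leftrightarrow h\,\$g$ with $h\in\set H$, and each rule of group~(iii) that places an inactive ghost next to a head changes at least one of the two symbols, rather than merely swapping their positions.

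The only genuine obstacle is completeness of the enumeration in group~(iii). I would begin by fixing the convention that the placeholder in the commutation rules ranges exactly over $\Gamma\setminus(\set H\cup\{\$|\})$---consistent with how the rule is written and with the evolution in \cref{fig:wheelbarrow-tape-evolution}---and then traverse the subsections of \cref{sec:wheelbarrow-rules} (initialization, base-6 counter, unary counter, computation, and each program-bit gadget) in turn to make sure no rule containing $\$g$ or $\$b$ has been missed. Once the list is verified complete, the Fact follows as a purely mechanical inspection, essentially a sanity check that the construction's narrative description of ghost behaviour faithfully matches the listed rewriting rules.
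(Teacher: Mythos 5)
Your enumeration strategy is the right approach---the paper states this Fact without a proof, and a rule-by-rule check of everything mentioning $\$g$ or $\$b$ is the natural way to verify it---but the intermediate claim you extract for group~(iii) is false. You assert that ``every rule in group~(iii) modifies, alongside the ghost, only head or boundary characters,'' and conclude ``no rule containing an inactive ghost on either side rewrites a non-head static symbol.'' The tape-revert rule that rewrites \${wg} to \${oG} is a direct counterexample: the inactive ghost $\$g$ appears on its left-hand side, yet the rule rewrites $\$w$ to $\$o$, and by \cref{def:alphabet} both $\$w$ and $\$o$ lie outside $\set H$ and outside $\set B=\{\$|,\$!,\$?\}$. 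The same problem occurs for the rules taking \${D-:a} to \${-@ag} and \${-:aB} to \${-@ab}, each of which rewrites a non-head symbol while an inactive ghost is present on one side.

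The Fact is still true under its intended scope, which is narrower than ``any rule mentioning an inactive ghost.'' Its role is to justify dismissing the branching that arises when the inactive ghost diffuses along the tape while the head is busy elsewhere; the statement to verify is thus that every rule in which the ghost \emph{remains inactive on both sides}---exactly your groups~(i) and~(ii)---neither alters a non-head symbol nor carries the ghost past a head. For those rules your mechanical inspection goes through: the commutation rules copy their non-head placeholder verbatim and exclude $\set H$ by construction, the boundary rule leaves $\$|$ fixed, and there is no rule that merely swaps the ghost with an element of $\set H$. The activation, deactivation and creation rules---which include all the counterexamples above---can and do rewrite non-heads, but they are part of the deterministic head-driven evolution and are handled by the branching analysis in \cref{sec:wheelbarrow-ambiguities}, not by this Fact. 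A corrected proof should explicitly restrict the enumeration to rules whose ghost stays inactive on both sides, then apply the mechanical argument.
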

Therefore we will disregard any branching due to inactive ghosts, which happens because we can always move either the head or ghost at each step. This increases the history state size by an at most quadratic factor.

\paragraph{Counting Phase.}
\begin{fact}
  Initialization and 6-ary counter are not ambiguous in either direction.
\end{fact}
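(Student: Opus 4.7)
The plan is to verify the Fact by an exhaustive but routine case analysis on the single active head symbol that drives each of the two phases. Because every rule is 2-local, determinism in both directions reduces to checking that, for every configuration appearing in the history state during the relevant phase, the 2-cell window around the unique active head admits exactly one applicable rule as a left-hand side, and that the newly produced 2-cell window admits exactly one applicable rule as a right-hand side.

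First I would treat initialization. Throughout this phase the sole head on the tape is the sweeper $I$, and its right neighbour is one of three symbols: a qubit cell, a classical data cell, or the right-boundary inactive ghost adjacent to the wall. Each of these three neighbours is matched by exactly one forward rule in the initialization rule list (move right past a qubit, move right past a data bit, or convert to the box-plus-active-ghost configuration at the boundary). Backward determinism follows because the cell immediately left of $I$, or else the characteristic terminal configuration at the right wall, uniquely records which rule has just fired, and no rule outside the initialization list produces a configuration with an isolated $I$ symbol.

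For the 6-ary counter I would treat the three active symbols $+$, $C$, $D$ in turn. The incrementer $+$ admits a rule determined entirely by its right neighbour: carry across an $r$, a multi-step ``increment of $T$ or $0$ with an auxiliary checker $X$ or $Y$, then a flip to $Z$, then the reverter $C$'', or a one-step promotion of one of $U,A,S,H$. The checker symbols $X$ and $Y$ exist precisely so that the backward direction from a configuration containing $U$ can be resolved: $X$ followed by a counter digit, $Y$ followed by a qubit, and $Y$ followed by a data bit are three distinct patterns that encode which of $+T$, $+0$ adjacent to a qubit, or $+0$ adjacent to a data bit preceded the step. The sweeps of $C$ leftwards through $T$'s to the wall, and of $D$ rightwards across all counter digits until it meets the first tape cell, are each driven by a unique rule per neighbouring symbol and are therefore injective in both directions.

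The main obstacle is verifying the intermediate states of the multi-step increment micro-sequence: I would tabulate, for each of the transient symbols $X$, $Y$, $Z$, the full list of right-neighbours that can legally appear and confirm that the 2-cell window at each point admits exactly one rule in both directions. This is a finite mechanical check, aided by the fact that $X$, $Y$, $Z$ appear in no rule outside the incrementer micro-sequence and are consumed within a constant number of steps, so there is no risk of cross-contamination with rules from other phases or with diffusing inactive ghosts (which can be ignored by the preceding Fact). The phase-boundary transitions, namely the active ghost hitting the left wall to create $+$, and $D$ meeting the first tape cell to dissolve into a picked-up carrier and an inactive $g$, are each driven by a singleton left-hand side and produce a singleton right-hand side, so they add no further ambiguity.
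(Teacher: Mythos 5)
The paper states this Fact without proof, treating it as a routine inspection of the rule lists, and your case-by-case walkthrough of the active heads $I$, $+$, $X$, $Y$, $Z$, $C$, $D$ is exactly that inspection spelled out. You correctly isolate the one subtle point — the checker symbols $X$ and $Y$ (with the intermediate $Z$) are what restore backward determinism after an increment of a lowest-valued digit, which is precisely the purpose the paper's rule commentary ascribes to them.
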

We have two ambiguities to analyse. Starting from an intermediary counting stage where the program string starts with \${:H}, we can prematurely transition to the computation phase:
\[
  \begin{rules}
    \${>|>:HG:x.:x-:a:c.:co-:a.:a|}\\
    \${<?<>:H>G:x.:x-:a:c.:co-:a.:a|}\\
    \${>?<>G<:H:x.:x-:a:c.:co-:a.:a|}\\
    \${<!<>g>:H:x.:x-:a:c.:co-:a.:a|}\\
    \${>!<>:H<g:x.:x-:a:c.:co-:a.:a|}\\
    \${<?<@Hg:x.:x-:a:c.:co-:a.:a|}\\
    \rvdots\\
    \${?:x.<:x<@H-:a:c.<:c<go-:a.:a|}
  \end{rules}
\]
As there is no forward transition for \${@H-:a}, this branch is just a leg, increasing the history state by a small constant factor $\le2$. The same argument holds for transitioning to \$? during incrementation, i.e.
\[
  \begin{rules}
    \${>|>:H:x+:x.:x-:a.:co-:a.:a|}\\
    \${<?<:H:x+:x.:x-:a.:co-:a.:a|}
  \end{rules}
\]
Run forward, there is no transition for \${?C}, and run backwards there is none for \${?+}.

Running the carrier \${-@a} or \${@c} backwards for the unary counter at any point before counting is completed leads to another ambiguity, e.g.
\[
  \begin{rules}
    \${.-:a:c:co-:a<-:a<>@1>:c:c.}\\
    \$[2]{-:a:c:co-:a-:a<>:c<>@1:c}\\
    \$[2]{-:a:c:co-:a>-:a<>@1<:c:c}\\
    \$[2]{-:a:c:co>-:a<>@1<-:a:c:c}\\
    \$[2]{-:a:c:co<>@1<>-:a-:a:c:c}\\
    \$[2]{-:a:c:co<:1<>-"a>-:a:c:c}\\
    \rvdots\\
    \$[2]{-:a:c:c>o><:!<-:a-:a:c:c}\\
    \$[2]{-:a:c:c<OU<:1-:a-:a:c:c}
  \end{rules}
\]
There is no backwards transition for \${:cOx} or \${-:aOx} though. If there is no box \$o, the branch dies off even before that. This ambiguity hence increases the history state size by another small constant factor.

\paragraph{Computation Phase.}
A similar argument as in the last section shows that a late transition into the counting phase once we are in the middle of the computation does not proceed, as there is no forward transition for a configuration \${Do}. Furthermore, the same ambiguity running a carrier \${@c} or \${-@a} backwards holds, which we have already discussed.
\begin{fact}
  The application of gates \${OH}, \${OU}, \${OT}, \${OA} and \${OS} does not introduce any branching.
\end{fact}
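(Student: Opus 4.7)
The plan is to prove this by a direct case analysis on each of the five gate symbols in turn, verifying for every intermediate configuration in the gate's transition sequence that there is exactly one applicable forward rule and one applicable backward rule. Since the gate symbols \${OH}, \${OU}, \${OT}, \${OA}, \${OS} are all \emph{active} head symbols that arise only transiently during the execution of a single gate, and they appear always immediately to the right of a completed program letter \${:x} (with no other active head present on the tape by the conservation of head count, \cref{itm:W-preserves}), localising the analysis to a constant-size window around the active symbol suffices.

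For each gate symbol I would tabulate the window configurations that arise during the gate's application and argue pointwise: for \${OH} the rule set contains the single pattern \${@H>o:0} together with the rewind \${OH>:0}, both of which match at a unique position and have no competing left-hand side; the configuration \${OH:1} has no forward transition by design, so branching is impossible. For \${OA} the rules \${@A>o-:a} and \${@A>o:1} act on disjoint right contexts (qubit vs classical \${:1}), so exactly one forward path exists once the neighbour type is fixed, and the single follow-up \${OA->:a}$\mapsto$\${:a-@a} (resp.~\${OA>:1}$\mapsto$\${@1}) is again unique in both directions. For \${OU} and \${OT} the rewrite sequence passes through the auxiliary symbols \${:!}, \${-;a}, \${-"a}, \${;1}, \${"1}, each of which by inspection appears only in a single rule left-hand side and right-hand side, so the chain of transitions is a line segment. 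For \${OS} the same holds after distinguishing the four cases by the types of the two swapped cells; the intermediate symbols \${*c} and \${-*a} are likewise unique to the swap subroutine.

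The key observation that closes each case is that all of these transient head symbols are introduced in exactly one rule (the rule that creates them from the corresponding \${@x}) and consumed in exactly one rule (the rule that produces the next state), so no two distinct rules can simultaneously match a window containing one of them. Combined with the fact that the surrounding tape symbols (a \${:x} program letter on one side, the box \${o} and the data cell(s) on the other) determine the matched pattern uniquely, both forward and backward determinism follow.

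The main obstacle is purely bookkeeping: one has to check that none of the auxiliary symbols \${:!}, \${*c}, \${;1}, \${"1}, \${-;a}, \${-"a}, \${-*a} ever appears as the left two characters of any rule other than those listed in \cref{sec:computation}, and that in each step the relevant window does not accidentally also match an initialization or counting rule from \cref{sec:rules-initialization,sec:rules-unary-counter}. This is established by noting that every initialization or counting rule requires a symbol from $\{\$I, \$+, \$C, \$D\}$ or an active boundary/ghost in the window, none of which coexist with an active gate symbol on a valid history tape since there is only one head at a time (\cref{itm:W-preserves}) and the active gate symbol \emph{is} that head.
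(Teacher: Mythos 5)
The paper states this as an unproved ``fact''—a verification-by-inspection claim supported by the exhaustive enumeration of allowed character pairs in the table (\cref{tab:allowed-character-tuples}) and the surrounding discussion of branching. Your proposed proof is the argument the paper implicitly relies on, and it is essentially correct: pinning down the unique active head (by \cref{itm:W-preserves}), localising to a 2-local window, and checking that every transient symbol has a determined successor and predecessor.

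One imprecision worth noting: you claim that each of the auxiliary symbols \${:!}, \${-;a}, \${-"a}, \${;1}, \${"1} ``appears only in a single rule left-hand side and right-hand side.'' That is not literally true for \${;1}: it appears in the left-hand side of two rules, $\${;1}\${:0}$ and $\${;1}\${:1}\${:c}$, which are consuming rules distinguished by the classical bit to its right. This is analogous to the case split you explicitly flag for \${OS}; the same caveat applies to \${OT} and \${OU}, and you should make it uniformly. There is no actual branching because the right context is a determined classical value, but the uniqueness claim as stated does not hold. You should also acknowledge that the fact is meant modulo the ghost branching the paper has already disregarded in the preceding paragraph (``we will disregard any branching due to inactive ghosts''): an inactive ghost can freely move while an active gate symbol is present, so strictly speaking the history-state graph is not a line segment at these steps; it is only a line segment in the quotient obtained by factoring out ghost motion.
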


It remains to analyse the revert command, where we have a branching point for a configuration
\[
  \begin{rules}
    \${.:c-:ao>:c<>G<-:a-:a:c.:c|}\\
    \$[2]{:c-:a>o<>G<:c-:a-:a:c.:c|}\\
    \$[2]{:c-:a<w<>g>:c-:a-:a:c.:c|}\\
    \$[2]{:c-:aw<:c<>g>-:a-:a:c.:c|}\\
    \$[2]{:c-:a>w>:c<-:a<>g>-:a:c.:c|}\\
    \$[2]{:c-:a<R<>@c>-:a<-:a<>g>:c.:c|}\\
    \rvdots
  \end{rules}
\]
Observe, however, that all that could happen is that the tape symbol is carried to the right, where it is dropped next to the boundary ghost \$b. The ghost is activated and moves back to \${RG}, where it deactivates. The branch does not continue further, as there is no transition out of \${-:aR} or \${:cR}. This increases the history state size by some small constant $\le 2$.

We define the set of tuples $\mathfrak A$ as all the possible character pairs that appear in this history state---including all branches---in \cref{tab:allowed-character-tuples}.
\begin{table}
  \centering
  \newcolumntype{k}{>{\centering\arraybackslash}p{5mm}}
  \newcommand{\X}{\ding{51}}
  \newcommand{\Xa}{\X$^\dagger$}
  \newcommand{\Xb}{\X$^\ast$}
  \newcommand{\Xc}{\X$^{\ast\dagger}$}
  \setlength\tabcolsep{0pt}
  \centerline{
    \begin{tabular}{k | kkkkkkkk | kkkk | kk | kkkkkk | kk | kkkkk | kkkk | }
      \toprule
      &                   \multicolumn{10}{c|}{non-heads}                   &                                                                   \multicolumn{21}{c}{heads}                                                                    \\
      \cmidrule{2-32} & \$| & \$? & \${:x} & \$o & \$w & \$R & \${:c} & \${-:a} & \$g & \$b & \$G & \$B & \$! & \$I & \$+ & \$X & \$Y & \$Z & \$C & \$D & \${@x} & \${Ox} & \${:!} & \${*c} & \${;1} & \${"1} & \${@c} & \${-;a} & \${-"a} & \${*a} & \${-@a} \\ \midrule
      \$|             &     &     & \X     &     &     &     & \X     & \X      &     &     & \X  &     &     & \X  & \X  &     &     &     & \X  & \X  &        &        &        &        &        &        &        &         &         &        &  \\
      \$?             &     &     & \X     &     &     &     &        &         &     &     & \X  &     &     &     & \X  &     &     &     & \X  &     & \X     &        &        &        &        &        &        &         &         &        &  \\
      \${:x}          &     &     & \X     & \X  & \X  & \X  & \X     & \X      & \X  &     & \X  &     &     &     & \Xa & \Xa & \Xa & \Xa & \Xa & \X  & \X     & \X     & \X     & \X     & \X     & \X     & \X     & \X      & \X      & \X     & \X      \\
      \$o             & \X  &     &        &     &     &     & \X     & \X      & \X  & \X  & \X  & \X  &     &     &     &     &     &     &     &     &        &        & \X     & \X     & \X     & \X     & \X     & \X      & \X      & \X     & \X      \\
      \$w             &     &     &        &     &     &     & \X     & \X      & \X  &     &     &     &     &     &     &     &     &     &     &     &        &        &        &        &        &        &        &         &         &        &  \\
      \$R             &     &     &        &     &     &     & \X     & \X      & \X  &     & \X  &     &     &     &     &     &     &     &     &     &        &        & \X     & \X     & \X     & \X     & \X     & \X      & \X      & \X     & \X      \\
      \${:c}          & \X  &     &        & \X  & \X  & \X  & \X     & \X      & \X  & \X  & \X  & \X  &     & \X  &     &     &     &     &     &     &        &        & \X     & \X     & \X     & \X     & \X     & \X      & \X      & \X     & \X      \\
      \${-:a}         & \X  &     &        & \X  & \X  & \X  & \X     & \X      & \X  & \X  & \X  & \X  &     & \X  &     &     &     &     &     &     &        &        & \X     & \X     & \X     & \X     & \X     & \X      & \X      & \X     & \X      \\ \midrule
      \$g             & \X  &     & \X     & \X  &     &     & \X     & \X      &     &     &     &     &     &     &     &     &     &     &     &     &        &        &        &        &        &        &        &         &         &        &  \\
      \$b             & \X  &     &        &     &     &     &        &         &     &     &     &     &     &     &     &     &     &     &     &     &        &        &        &        &        &        &        &         &         &        &  \\
      \$G             & \X  &     & \X     & \X  &     &     & \X     & \X      &     &     &     &     &     &     &     &     &     &     &     &     &        &        &        &        &        &        &        &         &         &        &  \\
      \$B             & \X  &     &        &     &     &     &        &         &     &     &     &     &     &     &     &     &     &     &     &     &        &        &        &        &        &        &        &         &         &        &  \\ \midrule
      \$!             &     &     & \X     &     &     &     &        &         & \X  &     &     &     &     &     &     &     &     &     &     &     &        &        &        &        &        &        &        &         &         &        &  \\
      \$I             &     &     &        &     &     &     & \X     & \X      & \X  & \X  &     &     &     &     &     &     &     &     &     &     &        &        &        &        &        &        &        &         &         &        &  \\ \midrule
      \$+             &     &     & \X     &     &     &     & \X     &         &     &     &     &     &     &     &     &     &     &     &     &     &        &        &        &        &        &        &        &         &         &        &  \\
      \$X             &     &     & \X     &     &     &     &        &         &     &     &     &     &     &     &     &     &     &     &     &     &        &        &        &        &        &        &        &         &         &        &  \\
      \$Y             &     &     &        &     &     &     &        & \X      &     &     &     &     &     &     &     &     &     &     &     &     &        &        &        &        &        &        &        &         &         &        &  \\
      \$Z             &     &     & \X     &     &     &     & \X     & \X      &     &     &     &     &     &     &     &     &     &     &     &     &        &        &        &        &        &        &        &         &         &        &  \\
      \$C             &     &     & \X     &     &     &     &        &         &     &     &     &     &     &     &     &     &     &     &     &     &        &        &        &        &        &        &        &         &         &        &  \\
      \$D             &     &     & \X     & \X  &     &     & \X     & \X      &     &     &     &     &     &     &     &     &     &     &     &     &        &        &        &        &        &        &        &         &         &        &  \\ \midrule
      \${@x}          &     &     & \X     & \X  &     &     & \X     & \X      & \X  &     &     &     &     &     &     &     &     &     &     &     &        &        &        &        &        &        &        &         &         &        &  \\
      \${Ox}          &     &     &        &     &     &     & \Xb    & \X      & \X  &     &     &     &     &     &     &     &     &     &     &     &        &        &        &        &        &        &        &         &         &        &  \\ \midrule
      \${:!}          &     &     &        &     &     &     &        & \X      & \X  &     &     &     &     &     &     &     &     &     &     &     &        &        &        &        &        &        &        &         &         &        &  \\
      \${*c}          &     &     &        &     &     &     & \X     & \X      & \X  &     &     &     &     &     &     &     &     &     &     &     &        &        &        &        &        &        &        &         &         &        &  \\
      \${;1}          &     &     &        &     &     &     & \X     &         & \X  &     &     &     &     &     &     &     &     &     &     &     &        &        &        &        &        &        &        &         &         &        &  \\
      \${"1}          &     &     &        &     &     &     & \X     &         & \X  &     &     &     &     &     &     &     &     &     &     &     &        &        &        &        &        &        &        &         &         &        &  \\
      \${@c}          &     &     &        & \X  &     &     & \X     & \X      & \X  & \X  &     &     &     &     &     &     &     &     &     &     &        &        &        &        &        &        &        &         &         &        &  \\ \midrule
      \${-;a}         &     &     &        &     &     &     &        & \X      & \X  &     &     &     &     &     &     &     &     &     &     &     &        &        &        &        &        &        &        &         &         &        &  \\
      \${-"a}         &     &     &        &     &     &     &        & \X      & \X  & \X  &     &     &     &     &     &     &     &     &     &     &        &        &        &        &        &        &        &         &         &        &  \\
      \${*a}          &     &     &        &     &     &     & \X     & \X      & \X  &     &     &     &     &     &     &     &     &     &     &     &        &        &        &        &        &        &        &         &         &        &  \\
      \${-@a}         &     &     &        & \X  &     &     & \X     & \X      & \X  & \X  &     &     &     &     &     &     &     &     &     &     &        &        &        &        &        &        &        &         &         &        &  \\ \midrule
    \end{tabular}
  }
  \caption{
    All possible character tuples occurring in the history state of Turing's Wheelbarrow. The row is the first character, the column the second---e.g.~\${|G} is allowed, whereas \${:x!} is not. \${:c} can be \${:0}, \${:1}, and \${:x} stands for any program bit \${:U}, \${:T}, \${:A}, \${:S}, \${:H}, or \$r. \Xa only allows the combination \${:T+}, \${:UX}, \${:UY}, \${:UZ} and \${:UC}.
    \Xb only allows the combination allowed by the gates, i.e.~\${OT:c}, \${OU:c}, \${OA-:a}, \${OA:1}, \${OS:c}, \${OS-:a} and \${OH:0}.
    Observe how the lower right block is completely empty, as there can only ever be one head on the tape.
  }
  \label{tab:allowed-character-tuples}
\end{table}

This exhaustive analysis of all possible branching points in the history state allows us to conclude the following corollary.
\begin{corollary}\label{cor:history-state-size}
  For strings of length $n$, the size of the irreducible evolution containing a valid initial configuration of the form \${|I-:a:c:c-:a.:c-:a b|}---the history state---is of size $O(n^3)$, and contains no forbidden character pairs.
\end{corollary}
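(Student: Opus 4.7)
\emph{Proof plan.} The strategy is to decompose the irreducible evolution into a deterministic ``spine'' plus a bounded number of short branches, bound the spine length polynomially in $n$, and multiply by polynomial branching factors. Then we separately verify the invariant that every length-$2$ substring encountered lies in $\mathfrak A$.

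First I would estimate the spine length phase by phase. The initialization contributes $O(n)$ steps, as the sweeper \$I traverses the tape once. For the counting phase, note that the unary counter marker \$o is decremented $O(n)$ times, and each decrement triggers (i) one full ghost round-trip of length $O(n)$ between the boundaries, and (ii) one base-$6$ increment which costs $O(\log n)$ in the width of the program region on the left; this gives a counting-phase spine of $O(n^2)$. For the computation phase, each simulated quantum ring-machine step requires picking up one program symbol \${:x}, carrying it across $O(\log n)$ program cells, then across $O(n)$ tape cells, applying the gate locally, and carrying back a single (qu)bit to the right boundary --- cost $O(n)$ per gate. Since $\J(l)$ is chosen so that the total number of gate applications is $\poly n$, the computation phase spine is also bounded by a polynomial, say $O(n^2)$. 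Altogether the spine has length $O(n^2)$.

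Next I would bound the branching factor. The analysis in \cref{sec:wheelbarrow-ambiguities} already enumerates every local ambiguity: the inactive ghost \$g/\$b can independently slide through any non-head region, contributing at most a factor $O(n)$ to the size of the reachable set (the ghost's position is an extra degree of freedom); each of the other identified ambiguities --- premature activation of the boundary during counting, running a carrier \${-@a} or \${@c} backwards, the dead leg in the revert command --- yields only a constant-length dead branch and so contributes a multiplicative $O(1)$. Multiplying the $O(n^2)$ spine by the ghost factor $O(n)$ and the constant overhead from the remaining branches gives $|\hist|=O(n^3)$, as claimed.

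Finally, for the ``no forbidden character pairs'' part, I would proceed by induction on the length of the rewriting sequence from the initial configuration \${|I-:a:c:c-:a.:c-:a b|}. The base case is checked by inspecting the initial string against \cref{tab:allowed-character-tuples}. For the inductive step, it suffices to show that every transition rule of \cref{sec:wheelbarrow-rules}, when applied to a window whose two neighbouring pairs are in $\mathfrak A$, produces a string whose three freshly-created length-$2$ substrings (the two straddling the rewrite window and the one inside it) also lie in $\mathfrak A$. This is a finite, purely mechanical check against \cref{tab:allowed-character-tuples}, and must be carried out once per rule together with all branches enumerated in \cref{sec:wheelbarrow-ambiguities}. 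The main obstacle is not conceptual but bookkeeping: one must be sure that the enumeration of ambiguities is truly exhaustive, since a single missed branch could either blow up the size bound or introduce a forbidden pair. I would handle this by organizing the check head-by-head (one case for each head symbol in $\set H$), since heads are the only symbols whose transitions can create new pairs, and each head has only a constant number of adjacent-symbol contexts to consider.
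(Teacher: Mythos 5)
Your decomposition (deterministic spine $\times$ ghost branching $\times$ constant-length dead branches) mirrors the paper's own informal discussion in \cref{sec:wheelbarrow-ambiguities}, and your $O(n)$ ghost factor is actually more consistent with the claimed $O(n^3)$ than the paper's phrase ``at most quadratic factor'' (a quadratic factor atop the $O(n^2)$ counting-phase spine would already overshoot the stated bound).

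However, the inductive argument for the forbidden-pair claim has a genuine gap. The hypothesis ``every length-$2$ substring lies in $\mathfrak A$'' is \emph{not} closed under the transition rules, because $\mathfrak A$ records only local pairs and carries no information about the global one-head constraint from \cref{itm:W-preserves}. Concretely, consider the sweeper rule ($\$I$ swaps right past a data symbol) applied inside the window \${:cI:cI}: its three adjacent pairs \${:cI}, \${I:c}, \${:cI} all appear in \cref{tab:allowed-character-tuples}, yet rewriting the middle pair \${I:c} to \${:cI} produces \${:c:cII}, which contains the forbidden pair \${II}. Run mechanically, your local closure check would flag this rule, and the induction would stall. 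The fix is to strengthen the invariant to ``all pairs in $\mathfrak A$ \emph{and} exactly one head symbol on the tape''; \cref{itm:W-preserves} guarantees head-count conservation, so this stronger property is indeed an invariant, and adding it to the hypothesis removes all such two-head contexts from the closure check. Your remark about ``organizing the check head-by-head'' hints at this, but the restriction to single-head contexts must be stated explicitly --- without it the plan as written does not close. (Note that the paper's own exposition avoids this issue only by directly enumerating branches reachable from the initial configuration rather than trying to phrase a local invariant.)

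A smaller issue: you assert the computation-phase spine is ``say $O(n^2)$'' with no justification; your own premise that the number of gate applications is ``$\poly n$'' gives only a generic $\poly n$ spine. For the cubic bound of the corollary to hold, the encoded QRM must be padded so that the total number of wheelbarrow cycles in the computation phase is $O(n^2)$. The paper is similarly terse here, but since the specific exponent $3$ depends on this choice, it deserves an explicit sentence.
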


\subsection{Simplicity of Turing's Wheelbarrow}
Let us briefly recall the idea behind simplicity in the context of QTSs. A QTS is called simple, if, for any two strings connected by more than one chain of transitions, the product of unitaries along this chain is identical. Equivalently, we can show that there are no loops in the graph connecting any strings. Regarding the QTS transition rules for Turing's Wheelbarrow, as constructed in the last section, it is easy to see that it will not be simple. However, for our purposes, it suffices to proof the following lemma.
\begin{lemma}\label{lem:wheelbarrow-simple}
  Each bracketed string in Turing's Wheelbarrow with at least one head either belongs to the history state, which is simple, or---by removing edges---can be broken up into $\poly n$-sized valid evolutions with illegal pairs.
\end{lemma}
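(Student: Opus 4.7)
The plan is to split the claim into two independent parts: first, establish simplicity of the history state itself, and then handle all remaining bracketed single-head evolutions by a case analysis that reduces them to the history-state case up to edge removal.

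For the history state, I would leverage the exhaustive branching analysis performed in \cref{sec:wheelbarrow-ambiguities} (summarised in \cref{cor:history-state-size}). The point is that every ambiguity we identified---the three classes coming from inactive-ghost diffusion, from a premature $\$|\!\to\!\$?$ transition at a $\${:H}$, and from running a carrier $\${-@a}$/$\${@c}$ backwards during either the unary counting or the revert command---produces a \emph{dead-end leg}: in each case the branch terminates after $O(n)$ steps at a configuration with no forward transition (e.g.\ $\${@H-:a}$, $\${?C}$, $\${?+}$, $\${:cOx}$, $\${-:aR}$). Crucially, these legs never re-join the main computational path, so the induced subgraph on the history state is a tree. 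Along any non-trivial edge the unitary is either the identity or a \swap; since the graph is acyclic, the product of unitaries along any loop is vacuously $\1$, and thus the history state is simple in the sense of \cref{def:ulg-simple}, so \cref{lem:ulgsimpleclassical} applies.

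For the second part, let $\set U$ be any irreducible evolution whose strings are bracketed and carry at least one head, but such that $\set U$ is not the history state. Property \ref{itm:W-preserves} tells us that the head count and the positions of all boundary characters $\set B$ are strict invariants of every rule, and by inspection of the rule list no rule ever produces or destroys symbols in $\Gamma \setminus \mathfrak A$ from a legal pair---pairs in $\mathfrak A$ map only to pairs in $\mathfrak A$. Consequently either (i) every string in $\set U$ lies in $(\Gamma \times \Gamma)$-admissible configurations, in which case $\set U$ starts from a bracketed single-head string all of whose 2-windows are legal; or (ii) some string of $\set U$ contains an illegal pair, and then by the forward/backward symmetry of rules and preservation of legal pairs, \emph{every} string of $\set U$ reachable without passing through a legal/illegal interface has an illegal pair. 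I would then argue that case (i) cannot occur for $\set U \neq \hist$: the initialisation phase (\cref{sec:rules-initialization}) acts as a uniqueness filter because the sweeper $\$I$ can only pass over $\${-:a}$, $\${:c}$, and $\$b$, and any alternative tape content would already introduce an illegal pair with $\$I$; similarly, the counting phase transitions back only to the unique shape $\${|I \cdots b|}$ (up to the dead-end branches above). So an evolution that is simultaneously legal and disjoint from $\hist$ is forced to coincide with $\hist$, contradiction.

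It remains to chop case (ii) into $\poly(n)$-sized pieces. I would do this by removing a small, explicitly identified set of edges whose role is exactly to create cycles: namely, the two $\${@U}/\${OU}$ transitions that apply the classically-controlled unitary (the only place a non-identity, non-\swap unitary enters the graph), together with any edge whose removal breaks a multi-way merge produced by inactive-ghost diffusion. After this surgery, the resulting subgraph is a forest of trees, each of which can be bounded in size by the diameter of the rule application range on a string of length $n$; since every rule is $2$-local and moves a single head by one cell at a time, each tree has diameter $O(n^2)$ and therefore at most $\poly(n)$ vertices. By the invariance argument above, each such component still contains a vertex with an illegal pair, establishing the claim. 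The main obstacle I anticipate is bookkeeping in the edge-removal step: one must verify that the specific edges cut really suffice to break all cycles in the non-history part, which requires checking that no other rule combination (in particular the revert $\$R/\$w$ sub-protocol together with ghost diffusion) hides a further cyclic motif---this is a finite but tedious inspection of the rule tables.
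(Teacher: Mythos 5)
There are several genuine gaps in your argument.

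\textbf{The history state is not a tree.} You claim that the three branching classes produce only dead-end legs that never re-join, hence an acyclic graph, hence vacuous simplicity. That is false for the inactive-ghost diffusion: moving the head or moving the ghost at any step produces a diamond (move head then ghost, versus ghost then head, arriving at the same configuration), which is a 4-cycle. This is precisely why the paper says the ghost branching ``increases the history state size by an at most quadratic factor''---the ghost's position factors approximately as an independent coordinate, a product structure, not a forest. The paper's actual argument for simplicity of the history state is therefore different and necessary: it is enough to examine loops whose unitary product could be non-trivial, and the only rules carrying a non-trivial unitary are the computational-step transitions $\${-;a}/\${-"a}$; because the classical content of the tape (section $B$) is modified in a reversible way between two consecutive applications of such a rule, no closed loop in the history state can pick up a non-identity product. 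Your tree argument would not survive exhibiting a single diamond, so this part of the proof does not go through as written.

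\textbf{The multi-head case is not addressed.} The lemma quantifies over ``each bracketed string with at least one head,'' and the paper treats $h\ge2$ heads as a separate case: since heads cannot pass through one another and opaque boundary/ghost symbols are immobile, the rightmost head is forced to collide with its neighbour within $O(\J(l))$ steps, which is what actually yields the $\poly n$-sized slices after removing edges. Your edge-removal step (cutting the $\${@U}/\${OU}$ transitions and ghost-merge edges) is aimed only at a single computational head; it does not limit the size of a multi-head evolution, nor does it explain why each resulting component still contains an illegal pair. Also note that removing ``the edges that apply the unitary'' is the wrong surgery: the lemma asserts that the \emph{non-history} evolutions can be broken up, not that you need to remove the unitary-carrying edges from the history state.

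\textbf{Diameter does not bound size.} Your final step infers $\poly(n)$ vertices from $O(n^2)$ diameter. For a tree with constant vertex degree $d$ and diameter $D$, the vertex count can be as large as $d^{\Theta(D)}$, which is not polynomial. You need an argument that bounds the \emph{number} of reachable configurations, not just the graph distance; the paper achieves this by exhibiting a forced collision event that confines each slice.
\end{document}
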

\begin{proof}
  As no transition rule ever changes the number of heads or position of brackets, the distinction is well-defined. We can analyse each separately.

  \paragraph{One head.}
  We can exclude strings with illegal pairs right away. Furthermore, we can disregard configurations of non-head characters which are just allowed because there is a head symbol or ghost between them, such as \${-:aG:c}, as moving the head either way (which is possible, since there is only one of them) transitions to an illegal pair.

  So, disregarding any head and ghost state on the string for now, the most general non-head-non-boundary string compatible with \cref{tab:allowed-character-tuples} is
  \[
    \underbrace{\${:x}^*}_A \underbrace{\$w^?(\${:co}|\${-:ao}|\${:c}|\${-:a})^+}_B.
  \]
  It is straightforward to see that evolving this configuration backwards will transition to an illegal pair, if either
  \begin{itemize}
    \item $B$ has multiple \$os, or \$w and at least one \$o, or neither \$w or \$o.
    \item $A$ does not match the string length $\log_6 n$. Decrementation can only start if the substring $A$ starts with the halt symbol \${:H}, so it cannot happen that we start decrementing a rotated number, e.g.~\${:U:H:A:A:S:T} instead of \${:H:A:A:S:T:U}, which would translate into different lengths.
  \end{itemize}
  Evolving this string forward then reaches the computation part, and in case the pattern of classical and qubit states in $B$ does not match the one required for the encoded gates in $A$ we again have an illegal pair.

  We are left with the history state, and it suffices to check any transition rule containing a non-trivial unitary attached, which by construction is the computational step only, i.e.
  \[
    \begin{rules}
      \${>-;a>-:a}\\
      \${<-"a<-:a}
    \end{rules}.
  \]
  Following the transitions forward to the next such transition, by construction, the encoded Turing machine evolution is reversible, hence there is no loop as the Turing machine changes the classical content of the tape in section $B$.

  \paragraph{Multiple heads.} None of the heads can pass through each other. As further boundary markers such as \$|, \$? and \$! are immobile and opaque and there exists no transition out of \$b or \$B if not left of a boundary, we can without loss of generality assume that the tape is bracketed by either of \$|, \$!, \$b, \$B, or possibly no opaque symbol if our subsection lies at the tape ends.

  If there are $h\ge2$ head symbols on the tape, a simple argument allows us to slice the graph up into $\poly(\J(l))$-sized parts: first observe that following any of the heads---with potential intermediate transitions---sweeps the entire width of the string. For any configuration of the first $h-1$ heads, the last head will thus necessarily bump into the $h-1$\textsuperscript{st} within $O(\J(l))$ steps. The same argument shows that there can be at most one ghost on the tape.
\end{proof}

\subsection{Special Properties}
\begin{proof}[Proof of \cref{lem:qts-special}]
  We will check the properties of \cref{lem:qts-special} one-by-one.

  The deciding property follows by construction. The projectors $\Pinp=\Pout=\proj1$ are supposed to act on checking the first ancilla and output as seen in \cref{fig:augmented-verifier}, i.e.~they apply to the qubit after the special identity symbol \${O?}.

  \Cref{itm:W-preserves} is readily verified.

  \Cref{itm:W-decides}. The encoding is given by the valid initial configuration
  \[
    \enc(l) := \${|I}\underbrace{\${-:a:c-:a:c:c.-:a}}_{N\ \text{times}}\${b|},
  \]
  where $N$ is a unary encoding of the QRM head circuit executing \cref{fig:augmented-verifier} rewritten as depicted in \cref{fig:circuit-to-desc}, and the sequence of \${-:a}s and \${:c}s is such that they match the counting and computation phase. By construction, we therefore obtain, $N+4=\poly(|l|)=:\J(l)$.

  Also by construction and as outlined in \cref{sec:wheelbarrow-intro}, the program string on the left side of $\enc$ describes the head of a QRM writing out the circuit \cref{fig:augmented-verifier}. This QRM is in the same uniformity class as the original verifier's, and a constant in the size of $l\in\PromP$. We can hence pad it---using identity gates---to get the space and runtime for the QRM right, which can be as large as $\poly\J(l)$, as required for a \BQEXP computation.

  Both input and output markers $\sinp=\sout=\${OA-:a}$ are 2-local, contain one head \${OA}, and $\Pinp=\Pout=\proj1$.

  \Cref{itm:W-hist} We have shown the first claim in \cref{lem:wheelbarrow-simple,cor:history-state-size}. The rest follows by direct verification.

  \Cref{itm:W-others} We can immediately sort out the no-head and not-bracketed cases. The rest follows from \cref{lem:wheelbarrow-simple}.

  This concludes the proof.
\end{proof}

\subsection{Final Dimension Reduction}\label{sec:Reduction}
We want to make a few final remarks, and suggest an immediate optimization of the Wheelbarrow construction.

The distinction between the quantum and classical tape symbols \${:c} and \${-:a} is unnecessary, if we can ensure that there is never a quantum operation on classical symbols and vice versa. This is already proven.

The reason why we can merge these symbols is that while the QTS requires the ULG vertices to comprise only the classical alphabet symbols, we do not need to make this distinction for a ULG---as long as we can ensure that the Hilbert space dimension on each vertex in a connected component is the same. It is also clear that this does not break simplicity in \cref{lem:wheelbarrow-simple}, as we always \emph{know} which tape symbols are classical (the ones appearing next to classical operations, e.g.~\${OT}) and which ones are quantum (e.g.~the one next to \${:!}). This observation allows the following optimization.
\begin{remark}
  The Wheelbarrow construction works exactly the same when merging \${-:a} with \${:c}, \${-@a} with \${@c}, and \${-*a} with \${*c}.
\end{remark}

Once we have merged the symbols, there is another merge possible. We know that \toffoli and some basis-changing unitary $\op U$ are quantum-universal, see e.g.~\cite[ch.~4.5]{Nielsen2010}. This means that we can replace the classically-controlled unitary with such a one-qubit unitary, and apply \toffoli gates to quantum symbols as well. A similar argument as before shows that this does not break \cref{lem:wheelbarrow-simple}, and we phrase the following remark.
\begin{remark}
  The Wheelbarrow construction works when replacing the controlled unitary with a single-qubit basis-changing unitary, and extending \toffoli to work on classical and quantum tape content. This makes the symbols \${:!}, \${-;a} and \${-"a}  obsolete.
\end{remark}

Including the saved symbols from the last two remarks---\${:c}, \${@c}, \${*c}, \${:!}, \${-;a} and \${-"a}---we conclude with the existence proof of \cref{lem:qts-special}.
\begin{corollary}\label{cor:qts-qma-exists}
  There exists a family of simple QTSs with 2-local rules on an alphabet of size $39$---$3$ of which are quantum with a Hilbert space $\sset C^2$---and all properties given in \cref{lem:qts-special}.
\end{corollary}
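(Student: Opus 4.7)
The plan is to derive the corollary directly from the explicit Wheelbarrow construction via the two optimization remarks that precede it. The Wheelbarrow QTS, as built in \cref{sec:wheelbarrow-rules}, already satisfies all of properties \ref{itm:W-preserves}--\ref{itm:W-others} from \cref{lem:qts-special} with the alphabet $\Gamma$ of \cref{def:alphabet} (48 symbols, 5 quantum), so only the dimension reduction remains to be justified.

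First I would carry out the quantum/classical merge. The key observation is that within any connected component of the ULG induced by the QTS, the number of quantum symbols $\qn{s}$ is conserved (this is built into \cref{def:qts}), and moreover, by inspection of the transition rules, the tape positions that ever carry a qubit are exactly the positions where a quantum symbol from $\{\${-:a},\${-@a},\${-*a}\}$ sits, and these positions are never rewritten into a pure classical data symbol \${:0} or \${:1} (and vice versa) by any transition inside the history state. Therefore I can identify \${-:a} with the pair $\{\${:0},\${:1}\}$ by introducing a single quantum tape symbol \${:a} carrying a $\sset C^2$ Hilbert space whose basis vectors $\ket0,\ket1$ play the double role of the two classical values \emph{and} the quantum degree of freedom; the same identification applies to \${-@a}/$\{\${@0},\${@1}\}$ and \${-*a}/$\{\${*0},\${*1}\}$. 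This collapses three triples of symbols into three (quantum) symbols, saving 6 alphabet characters and reducing the quantum count from 5 to 3. Because the identification respects the Hilbert space dimension attached to every vertex of each connected ULG component, the associated Hamiltonian is unchanged up to the isomorphism of \cref{lem:qts-hamiltonian-compressed}, so simplicity and all of \ref{itm:W-preserves}--\ref{itm:W-others} are preserved.

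Second, I would apply the second optimization: replace the classically-controlled unitary block (using \${:!}, \${-;a}, \${-"a} as its intermediate head states) by a single-qubit basis-changing unitary $\op V$ applied directly whenever the symbol \${OU} meets a quantum tape cell, and simultaneously extend the \toffoli transitions to act on quantum tape cells as well. By the standard fact that \toffoli together with any basis-changing one-qubit gate is universal for quantum computation (\cite[ch.~4.5]{Nielsen2010}), this does not diminish the computational power of the QRM head that the Wheelbarrow simulates; rewriting the QRM head circuit (\cref{fig:circuit-to-desc}) over this smaller gate set costs only a polynomial overhead in $\J(l)$, which is absorbed by the freedom in property \ref{itm:W-decides}. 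The three symbols \${:!}, \${-;a}, \${-"a} are then unused, giving a further saving of 3 and a final alphabet of size $48-6-3=39$ with exactly 3 quantum symbols.

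The only nontrivial step is to re-verify that after these substitutions the QTS still satisfies \ref{itm:W-hist} and \ref{itm:W-others}; I expect the main obstacle to be re-checking simplicity of the modified history-state component. The argument mirrors \cref{lem:wheelbarrow-simple}: the merged tape cells do not create any new loops because the mapping from the reduced QTS into the original Wheelbarrow QTS (by interpreting \${:a} as the appropriate one of \${-:a}, \${:0}, \${:1} according to whether the cell currently lies in a quantum-controlled or classically-controlled context) is locally consistent along every transition, so any directed cycle in the reduced ULG would lift to a directed cycle in the original ULG, whose unitary product is $\1$ by \cref{cor:history-state-size,lem:wheelbarrow-simple}. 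The newly allowed pairs in $\mathfrak A$ (merging the rows and columns of \cref{tab:allowed-character-tuples} accordingly) inherit the connectivity bounds of \cref{cor:history-state-size}, so the history state still has size $O(n^3)$ and every other bracketed, head-carrying evolution can still be broken into $\poly(n)$-sized pieces containing a forbidden pair. This completes the verification and yields the claimed 39-symbol, 3-quantum QTS.
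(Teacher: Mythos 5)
Your proposal mirrors the paper's own route exactly: Section 8.6 establishes the corollary via precisely the two remarks you invoke — first merging the quantum/classical tape symbols (\${-:a} with \${:c}, \${-@a} with \${@c}, \${-*a} with \${*c}), then retiring the classically-controlled unitary machinery (\${:!}, \${-;a}, \${-"a}) in favour of a single-qubit basis-changing gate plus \toffoli extended to quantum cells — and your final count $48-6-3=39$ with 3 quantum symbols is correct. One small nit: the first merge leaves the quantum count at 5 (the merged symbols \${-:a}, \${-@a}, \${-*a} plus the still-present \${-;a}, \${-"a}); it only drops to 3 after the second step removes \${-;a}, \${-"a}, so your sentence ``reducing the quantum count from 5 to 3'' is attributed to the wrong step, though the final tally is right.
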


\begin{remark}\label{rem:no-scaling}
  It is straightforward to get $O(1)$-interactions, i.e.~removing the scaling polynomial $p(l)$ in \cref{th:hardness} if we can locally distinguish the history state at all times. This is possible e.g. by using distinct non-head symbols on the left and right hand side of the head and penalizing invalid configuration using regular expressions as in~\cite{Gottesman2009}. This would increase our dimension by roughly 15.
\end{remark}

\section{Conclusion}
This work was motivated by the idea of finding a simple, translationally-invariant and physically interesting system, for which the ground state energy problem is \QMAEXP-hard. In~\cite{Gottesman2009}, Gottesman and Irani concluded that their construction is not ``particularly natural'', due to the large local dimension necessary, but that the existence of some very simple \QMAEXP-hard local Hamiltonian problems seems quite possible.

Our results bring us another step closer to this goal: we reprove the hardness result in \cite{Gottesman2009} but with a local dimension of 42, whereas in~\cite{Gottesman2009}---though not explicitly specified---it was several orders of magnitude larger. To prove this result, we develop new tools and computational models which we believe are applicable to a wider range of problems.

At this point it would be interesting to see where the threshold for the translationally-invariant local Hamiltonian problem lies: does there exist a local dimension $d_\text{min}$, for which the problem is in \BQP, or \BQEXP? We have shown that $d_\text{min}<42$, but do not believe this to be a strict bound. We therefore encourage the interested reader to construct their own version of the Wheelbarrow, which might yield an even lower local dimension, and thus tighten our bound.

Furthermore, a lot of work recently has been done to analyse non-translationally-invariant systems, and to classify interactions with locally-varying interaction strengths, e.g.~\cite{Cubitt2013,Piddock2015}. In contrast to our construction, the hardness results in \cite{Piddock2015} resemble more a tiling construction, a subject also addressed in~\cite{Gottesman2009}. It would be an interesting approach to see if these two---fundamentally quite different---results can be combined, or if there exists yet another, completely different, method of encoding computation into the ground state of a local Hamiltonian.

Finally, we want to mention that while the research focus---as outlined in \cref{tab:history}---quickly shifted towards the 1D variant of the problem, from a physical perspective both 2D and 3D versions of this result are still of great interest, and apart from a trivial extension of our result to higher spatial dimensions, both remain open problems.

\section{Acknowledgments}

J.B.~acknowledges support from the German National Academic Foundation and the Engineering and Physical Sciences Research Council (EPSRC grant no.~1600123). T.S.C.~is supported by the Royal Society.  This work was made possible through the support of grant \#48322 from the John Templeton Foundation.  M.O.~acknowledges Leverhulme Trust Early Career Fellowship (ECF-2015-256) and European Union project QALGO (Grant Agreement No.~600700) for financial support.  The opinions expressed in this publication are those of the authors and do not necessarily reflect the views of the John Templeton Foundation.

\clearpage
\printbibliography

\end{document}
